\newcommand\remove[1]{}
\newtheorem{theorem}{Theorem}
\newtheorem{lemma}{Lemma}[section]
\newtheorem*{lemma*}{Lemma}
\newtheorem{corollary}[lemma]{Corollary}
\newtheorem*{corollary*}{Corollary}
\newtheorem*{theorem*}{Theorem}
\newtheorem{definition}[lemma]{Definition}
\newtheorem*{rem*}{Remark}
\newcommand{\eps}{\varepsilon}
\renewcommand{\O}{\tilde{O}}
\newcommand{\RoundtripCover}{\textsc{RoundtripCover}}
\newcommand{\GoodCut}{\textsc{GoodCut}}
\newcommand{\bin}{B^{\mathrm{in}}}
\newcommand{\bout}{B^{\mathrm{out}}}
\newcommand{\lra}{\leftrightarrows}
\newcommand{\assign}{\leftarrow}
\newcommand{\iin}{i_{\mathrm{in}}}
\newcommand{\iout}{i_{\mathrm{out}}}
\newcommand{\BuildSimilar}{\textsc{BuildSimilar}}
\newcommand{\BallGrow}{\textsc{BallGrow}}
\newcommand{\RoundtripCoverNew}{\textsc{RoundtripCover2}}
\newcommand{\GoodCutNew}{\textsc{GoodCut2}}
\newcommand{\SimilarTest}{\textsc{Similar}}
\newcommand{\ON}{\mathrm{on}}
\renewcommand{\forall}{\text{ for all }}
\newcommand{\hn}{\hat{n}}
\newif\ifrandom
\newcommand{\ceil}[1]{\lceil {#1} \rceil}
\newcommand{\otilde}{\tilde{O}}
\newcommand{\defeq}{:=}
\newcommand{\etal}{\textit{et~al.}}
\author{
Shiri Chechik\\
Tel Aviv University\\ 
{\tt shiri.chechik@gmail.com}
\thanks{
	This project has received funding from the European Union's Horizon 2020 research grant agreement 803118.
}
\and
Yang P. Liu \\
Stanford University \\
\texttt{yangpatil@gmail.com}
\thanks{
    Research supported by the U.S. Department of Defense via an NDSEG fellowship.
}
\and	
Omer Rotem \\
Tel Aviv University \\
{\tt omer.rotem1@gmail.com}
 \and
 Aaron Sidford\\
 Stanford University\\
\texttt{sidford@stanford.edu}
 \thanks{
 	Research supported by NSF CAREER Award CCF-1844855
 }
}
\begin{document}

\title{Constant Girth Approximation for Directed Graphs in \\ Subquadratic Time}

\begin{titlepage}
\clearpage\maketitle
\thispagestyle{empty}

\begin{abstract}

In this paper we provide a $\O(m\sqrt{n})$ time algorithm that computes a $3$-multiplicative approximation of the girth of a $n$-node $m$-edge directed graph with non-negative edge lengths. This is the first algorithm which approximates the girth of a directed graph up to a constant multiplicative factor faster than All-Pairs Shortest Paths (APSP) time, i.e. $O(mn)$. Additionally, for any integer $k \ge 1$, we provide a deterministic algorithm for a $O(k\log\log n)$-multiplicative approximation to the girth in directed graphs in $\O(m^{1+1/k})$ time. Combining the techniques from these two results gives us an algorithm for a $O(k\log k)$-multiplicative approximation to the girth in directed graphs in $\O(m^{1+1/k})$ time. Our results naturally also provide algorithms for improved constructions of roundtrip spanners, the analog of spanners in directed graphs.

The previous fastest algorithms for these problems either ran in All-Pairs Shortest Paths (APSP) time, i.e. $O(mn)$, or were due Pachocki~\etal~\cite{PRSTV18} which provided a randomized algorithm that for any integer $k \ge 1$ in time $\tilde{O}(m^{1+1/k})$ computed with high probability a  $O(k\log n)$ multiplicative approximation of the girth. Our first algorithm constitutes the first sub-APSP-time algorithm for approximating the girth to constant accuracy, our second removes the need for randomness and improves the approximation factor in Pachocki~\etal~\cite{PRSTV18}, and our third is the first time versus quality trade-off for obtaining constant approximations.
\end{abstract}

\end{titlepage}

\newpage

\section{Introduction}
The \emph{girth} of a graph $G$ is the length of the shortest cycle in $G$. It is an important graph quantity that has been studied extensively in both combinatorial settings (see Bollob\'{a}s's book \cite{Bollobas} for a discussion) and computational settings. In particular, exact algorithms for the girth running in time $O(mn)$ in weighted directed graphs \cite{OS17} are known. On the other hand, a result of Vassilevska W. and Williams show that a truly subcubic algorithm for girth (i.e. running in time $n^{3-\eps}$ for some $\eps > 0$) implies a truly subcubic algorithm for the All Pairs Shortest Path (APSP) problem \cite{VW10}. As it is a longstanding open problem whether APSP admits a truly subcubic time algorithm, exact computation of the girth in truly subcubic time would be a major breakthrough.

This has motivated the study of efficient \emph{approximation} algorithms for the girth. There has been extensive work on approximating the girth in \emph{undirected} graphs \cite{IR77, LL09, RW12, DKMS17}. Many such algorithms use the concept of a $\alpha$-\emph{spanner} of a graph $G$, a fundamental combinatorial object which was introduced by Chew \cite{Chew89}. An $\alpha$-spanner of a graph $G$ is a subgraph of $G$ which multiplicatively preserves distances up to a factor of $\alpha$. It is well-known that $(2k-1)$-spanners with $O(n^{1+ 1/k})$ edges exist for any undirected weighted graph \cite{ADDJ93}, and work on the efficient construction of such spanners \cite{TZ05, LTZ05, BS03} implies a $O(mn^\frac1k)$ time algorithm for $(2k-1)$-multiplicative girth approximation in undirected graphs. There has also been work on improved spanner constructions in the case of undirected unweighted graphs \cite{LL09, RW12}, and these algorithms also immediately imply algorithms for girth approximation in undirected unweighted graphs.

Therefore, in order to obtain efficient constant factor girth approximations in directed graphs, it is natural to study an analog of spanners in directed graphs. Unfortunately, approximately computing all pairs distances in directed graphs is a notoriously difficult problem and while sparse spanners do exist in all undirected graphs, they do not exist in all directed graphs. For example, any directed spanner for the ``directed" complete bipartite graph with $n$ vertices on the left directed towards $n$ vertices on the right clearly requires all $n^2$ edges. This problem seems to arise from the fact that the distance metric $d(u, v)$ in directed graphs is asymmetric. Therefore, if we want to construct sparse spanners, it is natural to work instead with the symmetric \emph{roundtrip distance} metric, defined as $d(u \lra v) := d(u, v) + d(v, u)$ \cite{CowenW04} and similarly define an $\alpha$-\emph{roundtrip spanner} of a directed graph $G$ to be a subgraph that multiplicatively preserves roundtrip distances up to a factor of $\alpha$.

Interestingly, there do exist roundtrip spanners for directed graphs with comparable sparsity as spanners for undirected graphs. A result of Roditty, Thorup, and Zwick \cite{RTZ08} shows that for any $k \ge 1$ and $\eps > 0$, every graph has a $(2k+\eps)$-roundtrip spanner with $O(k^2 n^{1+ 1/k}\log(nW) \eps^{-1})$ edges, where $W$ is the maximum edge weight. Unforunately, this algorithm ran in time $\Omega(mn)$, as it requires the computation of all pairs distances in the graph. Recent work Pachocki~\etal~\cite{PRSTV18} gave a randomized algorithm running in time $\tilde{O}(m^{1+1/k})$ which on weighted directed graphs $G$ returns a $O(k\log n)$-roundtrip spanner with $\tilde{O}(n^{1+1/k})$ edges and an $O(k\log n)$ approximation to the girth. Up to a logarithmic approximation factor, this matches the sparsity and runtime known for spanners on undirected weighted graphs and girth on sparse graphs.

The result of Pachocki~\etal~\cite{PRSTV18} constitutes one of small, but rapidly growing \cite{CohenKPPRSV17,CohenKKPPRS18},  set of instances where it is possible to obtain robust nearly linear time approximations to fundamental quantities of directed graphs in nearly linear time, overcoming typical running time gaps between solving problems on directed and undirected graphs.

However, a fundamental open problem left open by this work is whether it is possible to achieve subquadratic algorithms for constant factor approximation of the girth in directed graphs, and more ambitiously to fully close this gap and provide algorithms for $O(k)$ girth approximation and $O(k)$ roundtrip spanners in directed graphs that fully match the runtime and sparsity of those in undirected graphs. This is the primary problem this paper seeks to address and this paper provides multiple new girth approximation algorithms with improved runtime, approximation quality, and dependency on randomness.

\subsection{Our Results}

In this paper we provide a subquadratic algorithm for constant factor girth approximation in directed graphs and in turn show several improvements on the girth approximation algorithms and roundtrip spanner constructions in the work of Pachocki~\etal~\cite{PRSTV18}. Here and throughout the remainder of the paper we use $\tilde{O}(\cdot)$ notation to hide factors polylogarithmic in $n$, where $n$ is the number of vertices in the graph.

In \cref{sec:constantapprox} we consider obtaining constant approximations to the girth. In particular we provide a randomized algorithm that obtains a $3$-approximation to the girth on graphs with non-negative integer edge weights in $\O(m \sqrt{n})$ time. Up to logarithmic factors this matches the runtime that would be predicted from the fact that  $(2k - 1)$-undirected spanners with $\otilde(n^{1+1/k})$ edges can be constructed in $\otilde(mn^{1/k})$ time for $k = 2$. Further, we show that this procedure can be used to with high probability obtain constant multiplicative roundtrip spanners in directed graphs with arbitrary edge weights in $\otilde(m \sqrt{n})$ time.

\begin{theorem}[3-Multiplicative Girth Approximation]
\label{thm:3girth}
For any directed graph $G$ with $n$ vertices, $m$ edges, integer non-negative edge weights, and unknown girth $g$ we can compute in $\tilde{O}(m \sqrt{n})$ time an estimate $g'$ such that $g \le g' \le 3g$ with high probability in $n$.
\end{theorem}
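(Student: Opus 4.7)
The plan is to blend two complementary $\tilde{O}(m\sqrt{n})$-time subroutines: a global hitting-set sampling phase that handles minimum cycles whose vertex-support is large, and a local bunch-style exploration phase that handles the complementary case. Let $C^\star$ be a minimum cycle of total weight $g$ with $h := |V(C^\star)|$ vertices.

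\textbf{Phase 1 (Sampling).} Sample $S \subseteq V$ uniformly at random with $|S| = \Theta(\sqrt{n}\log n)$. From each $s \in S$ run Dijkstra in both $G$ and $G^R$ to obtain $d(s,\cdot)$ and $d(\cdot,s)$, and record
\[
g'_1 \;:=\; \min_{s \in S,\, u \neq s}\bigl(d(s,u)+d(u,s)\bigr).
\]
Each summand is a closed-walk length, so $g'_1 \geq g$, and the phase takes $\tilde{O}(m\sqrt{n})$ time total.

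\textbf{Phase 2 (Local bunches).} For each $v$, set $\rho^{\text{out}}(v) := \min_{s \in S} d(v,s)$ and $\rho^{\text{in}}(v) := \min_{s \in S} d(s,v)$. Run Dijkstra from $v$ in $G$ up to radius $\rho^{\text{out}}(v)$ to produce the out-bunch $B^{\text{out}}(v)$, and similarly in $G^R$ up to $\rho^{\text{in}}(v)$ to produce $B^{\text{in}}(v)$. A standard Thorup--Zwick-style analysis gives $\mathbb{E}|B^{\text{out}}(v)|, \mathbb{E}|B^{\text{in}}(v)| = \tilde{O}(\sqrt{n})$, and amortizing the edge-scan work over the probability that each vertex appears in a bunch yields expected total work $\tilde{O}(m\sqrt{n})$. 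Record
\[
g'_2 \;:=\; \min_v \min_{u \in B^{\text{out}}(v) \cap B^{\text{in}}(v),\, u \neq v}\bigl(d(v,u)+d(u,v)\bigr),
\]
and output $g' := \min(g'_1,g'_2)$.

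\textbf{Correctness.} The bound $g' \geq g$ is immediate from closed-walk lengths. For $g' \leq 3g$ I split on $h$. If $h \geq \sqrt{n}/\log n$, then w.h.p.\ some $s \in S$ hits $V(C^\star)$; taking $u$ to be the next vertex of $C^\star$ gives $g'_1 \leq g$. If $h < \sqrt{n}/\log n$, fix any $v^\star \in V(C^\star)$; every vertex of $C^\star$ lies within both in-distance and out-distance at most $g$ of $v^\star$. Either (i)\ $V(C^\star) \subseteq B^{\text{out}}(v^\star) \cap B^{\text{in}}(v^\star)$, in which case taking $u \in V(C^\star)\setminus\{v^\star\}$ certifies $g'_2 \leq g$; or (ii)\ some $u \in V(C^\star)$ is excluded from $B^{\text{out}}(v^\star)$ or $B^{\text{in}}(v^\star)$, forcing $\rho^{\text{out}}(v^\star) < g$ or $\rho^{\text{in}}(v^\star) < g$. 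In case (ii), a Phase-1 sample $s$ sits within one-sided distance less than $g$ of $v^\star$; combining this with a further hop through $V(C^\star)$ (which is within roundtrip distance $\leq g$ of $v^\star$) via the roundtrip triangle inequality $d(a \leftrightarrows b) \leq d(a \leftrightarrows s) + d(s \leftrightarrows b)$ certifies a closed walk through $s$ of length at most $3g$, giving $g'_1 \leq 3g$.

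\textbf{Main obstacle.} The principal technical challenge is rigorously completing case (ii): parlaying a one-sided sample bound $d(v^\star,s) < g$ into a two-sided cycle through $s$ of length at most $3g$. This requires routing $s \to v^\star$ through a vertex of $C^\star$ via the roundtrip triangle inequality, and the factor of $3$ (rather than $2$) emerges precisely from this two-hop bound. A secondary subtlety is making the Phase 2 runtime bound hold with high probability rather than just in expectation, which likely requires truncating oversized bunches at $\tilde{O}(\sqrt{n})$ vertices and arguing that truncation loses nothing relevant for girth certification. Handling the case analysis uniformly, without knowing $g$ in advance, is what ties the two phases together.
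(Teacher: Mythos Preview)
Your case~(ii) argument has a genuine gap that cannot be repaired within this framework. When a bunch is truncated, say $\rho^{\text{out}}(v^\star) \le g$, you only learn that some sample $s$ satisfies $d(v^\star, s) \le g$, a \emph{one-sided} bound. The roundtrip triangle inequality you invoke, $d(a \leftrightarrows b) \le d(a \leftrightarrows s) + d(s \leftrightarrows b)$, requires the full roundtrip $d(v^\star \leftrightarrows s)$, of which you control only half. In a directed graph $d(v^\star, s)$ being small says nothing whatsoever about $d(s, v^\star)$, so no closed walk through $s$ is certified.

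Here is a concrete failure. Let $C^\star = \{a, b, c\}$ be a $3$-cycle of total weight $g$. To each cycle vertex $x$ attach $\Theta(\sqrt n)$ fresh sink vertices via weight-$\varepsilon$ edges $x \to \text{sink}$ (sinks have no out-edges); pad with isolated vertices to reach $n$. With high probability $S$ hits at least one sink of each of $a,b,c$ but none of $a,b,c$ themselves. Then $\rho^{\text{out}}(a)=\rho^{\text{out}}(b)=\rho^{\text{out}}(c)=\varepsilon$, so each out-bunch contains no other cycle vertex and Phase~2 returns nothing; meanwhile every sampled sink $s$ has $d(s,\cdot)=\infty$, so Phase~1 certifies no finite closed walk through any such $s$. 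Your output misses $C^\star$ entirely.

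This is precisely where the paper departs from a Thorup--Zwick bunch approach. The paper never uses one-sided proximity to a sample to \emph{certify} a cycle; it uses it only to \emph{prune}. For a guessed $R$ it maintains, for each $v$, a shrinking candidate set of vertices $u$ with $d(v,u)\le R/2$ that additionally satisfy $d(u,t)\le 3R/2$ for an accumulating collection of random test points $t$ (the sets $R_j(v)$ in \cref{alg:similar_vertices}). Any $u$ on a length-$R$ cycle through $v$ provably passes every such test and is never discarded; conversely, if the surviving sets fail to shrink geometrically then a counting argument shows many pairs within them are mutually within $3R/2$ in both directions, forcing a $3R$-cycle through one of the sampled vertices (\cref{lem:SubsetShrinks}). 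This two-sided ``similarity'' filter, not nearest-sample truncation, is the idea your proposal is missing.

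A secondary gap: your Phase~2 runtime bound is false without first reducing to approximately regular degree (the paper does this via \cref{lemma:regular}). In a bidirected star with center $c$, whenever $c\notin S$ every leaf's out-bunch is $\{\ell,c\}$, and relaxing $c$'s $\Theta(n)$ out-edges from each of the $\Theta(n)$ leaves costs $\Theta(n^2)$, whereas $m\sqrt n=\Theta(n^{3/2})$.
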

\begin{theorem}[8-Multiplicative Roundtrip Spanners]
\label{thm:const_spanner}
For any directed graph $G$ with $n$ vertices, $m$ edges, integer non-negative edge weights, we can compute in $\tilde{O}(m\sqrt{n})$ time
an $8$-multiplicative roundtrip spanner with $\otilde(n^{3/2})$ edges with high probability in $n$.
\end{theorem}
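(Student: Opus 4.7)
The proof builds on the ball-growing subroutine used to prove \Cref{thm:3girth} by invoking it once per distance scale to produce a family of roundtrip balls that collectively cover every vertex pair, and then assembling the spanner from shortest-path trees inside each ball. The approximation factor $8$ will come from a standard triangle-inequality argument: a pair contained together in a roundtrip ball of radius $R$ has spanner roundtrip distance at most $4R$, and we will arrange things so the relevant $R$ is at most twice the true roundtrip distance.

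\textbf{Step 1 (dyadic covers).} For each $i = 0, 1, \ldots, \lceil \log_2(nW) \rceil$, let $r_i = 2^i$ and invoke the ball-growing routine underlying \Cref{thm:3girth} at target radius $2 r_i$, with enough repetition to boost the error to $1/\mathrm{poly}(n)$. This should produce a family $\mathcal{B}_i$ of at most $\tilde{O}(\sqrt{n})$ roundtrip balls, each of roundtrip radius at most $2 r_i$ and containing at most $\tilde{O}(\sqrt{n})$ vertices, such that every ordered pair $(u,v)$ with $d(u \lra v) \in [r_i, 2 r_i)$ lies in some $B \in \mathcal{B}_i$, with high probability in $n$.

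\textbf{Step 2 (spanner assembly, approximation, and edge bound).} For every ball $B \in \bigcup_i \mathcal{B}_i$ with center $c$, add to $H$ the in-shortest-path tree and the out-shortest-path tree from $c$ restricted to $B$; each contributes at most $|B|-1$ edges, so each scale contributes $\tilde{O}(\sqrt{n}) \cdot \tilde{O}(\sqrt{n}) = \tilde{O}(n^{3/2})$ edges and the total is $\tilde{O}(n^{3/2})$. For correctness, fix any pair $(u,v)$, set $d = d_G(u \lra v)$, and pick $i$ with $r_i \le d < 2 r_i$. By Step 1 some $B \in \mathcal{B}_i$ with center $c$ and roundtrip radius $R \le 2 r_i \le 2 d$ contains both $u$ and $v$, so $d_G(u,c), d_G(c,u), d_G(v,c), d_G(c,v) \le R$, and these distances are preserved exactly by the in- and out-trees of $c$ inside $B$. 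Hence $d_H(u,v) \le d_G(u,c) + d_G(c,v) \le 2R$ and symmetrically $d_H(v,u) \le 2R$, giving $d_H(u \lra v) \le 4R \le 8 d$.

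\textbf{Step 3 (running time).} Each scale costs $\tilde{O}(m\sqrt{n})$ by the same analysis that proves \Cref{thm:3girth}, and the $O(\log(nW))$ scales sum to $\tilde{O}(m\sqrt{n})$ as claimed.

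\textbf{Main obstacle.} The crux is Step 1: repurposing the girth-detection routine into an \emph{explicit} roundtrip cover at a prescribed scale, rather than just a one-shot short-cycle detector. One has to verify that the random ball-growing procedure inside the girth algorithm is simultaneously \emph{complete} (every short roundtrip pair lands in some ball) and \emph{sparse} (the number of ball--vertex incidences stays $\tilde{O}(n^{3/2})$), and that the failure probability can be driven to $1/\mathrm{poly}(n)$ so that a union bound over all $O(n^2)$ pairs and $O(\log(nW))$ scales succeeds, without inflating either the ball sizes or the running time. This amounts to extracting a slightly strengthened guarantee from the same analysis that proves \Cref{thm:3girth}; once that cover property is in hand, the spanner construction and its analysis are essentially the triangle-inequality argument above.
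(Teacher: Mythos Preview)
Your Step~1 is where the argument breaks. The routine behind \Cref{thm:3girth} does \emph{not} output anything like a roundtrip cover: it outputs, for every vertex $v$, a set $A_v^{\outset}$ (and symmetrically $A_v^{\inset}$) of size $\tilde{O}(\sqrt{n})$, so there are $n$ sets, not $\tilde{O}(\sqrt{n})$; and these sets are only out-balls filtered by similarity, carrying no roundtrip-radius guarantee to a center. Your triangle-inequality calculation in Step~2 needs $d_G(u,c),d_G(c,u)\le R$ for a ball center $c$, which the $A_v$ sets do not provide. Moreover, the size bound $|A_v|=\tilde{O}(\sqrt{n})$ in the girth algorithm is proved only \emph{conditionally}: it holds unless a short cycle is found first (see \Cref{lem:SubsetShrinks}); for a spanner you need the size bound to hold unconditionally at every scale, and the girth analysis does not give that.

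The paper's proof avoids roundtrip balls altogether. At each scale $R$ it first samples a set $U$ of $\tilde{O}(\sqrt{n})$ vertices and adds full in- and out-shortest-path trees from every $u\in U$ (this alone is $\tilde{O}(n^{3/2})$ edges). It then defines $Z$ to be the vertices with roundtrip distance $>3R$ from every $u\in U$; the new key lemma is that with high probability $Z$ contains no ``$R$-cycle-rich'' vertex (one with $\Omega(\sqrt{n}\log n)$ roundtrip-$3R$ neighbors), which makes the similarity shrinking argument go through \emph{unconditionally} on $Z$ and forces $|A_v|=\tilde{O}(\sqrt{n})$ for every $v\in Z$. One then adds, for each $v\in Z$, a shortest-path tree from $v$ in $G[A_v^{\outset}\cup A_v^{\inset}]$. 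The factor $8$ comes from a different triangle inequality than yours: if a short cycle $C$ through $u,v$ lies entirely in $Z$, the $A_v$ trees preserve it exactly; otherwise some $x\in C\setminus Z$ has $d(y\lra x)\le 3R$ for some $y\in U$, and routing through $y$ via its full trees gives $d_H(u\lra v)\le d_H(y\lra u)+d_H(y\lra v)\le 4R+4R=8R$. The missing idea in your plan is precisely this two-regime split via the sampled set $U$ and the ``no cycle-rich vertices in $Z$'' lemma; without it you cannot certify small $A_v$ at every scale, and you have no substitute for the roundtrip-ball center your Step~2 relies on.
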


Then, in \cref{sec:algo} we give algorithms for a $O(k\log\log n)$-multiplicative approximation of the the girth and construct $O(k \log\log n)$ multiplicative roundtrip spanners with $\O(n^{1+1/k})$ edges for a weighted directed graph $G$ with $n$ vertices and $m$ edges in $\tilde{O}(m^{1 + 1/k})$ time. These algorithms are deterministic and constitute the first deterministic nearly linear time algorithms for $\tilde{O}(1)$ multiplicative approximation of the girth and $\tilde{O}(1)$ multiplicative roundtrip spanners with $\tilde{O}(n)$ edges.

\begin{theorem}[Deterministic Multiplicative Girth Approximation]
\label{thm:girth}
For any integer $k \ge 1$ and weighted directed graph $G$ with $n$ vertices, $m$ edges, and unknown girth $g$ we can compute in $\tilde{O}(m^{1 + 1/k})$ time an estimate $g'$ such that $g \le g' \le O(k \log\log n) \cdot g$.
\end{theorem}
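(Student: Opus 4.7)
The plan is to reduce girth approximation to deterministically constructing, at each scale $R = 2^i$ for $i = 0, 1, \ldots, O(\log(nW))$, a \emph{roundtrip cover} of $G$ at scale $R$: a family of vertex subsets $\{C_1, \ldots, C_N\}$ such that every directed cycle of length at most $R$ is contained in some $C_j$, and such that the subgraph induced by each $C_j$ has roundtrip diameter $O(k \log\log n) R$. Given such a cover, we compute shortest in-trees and out-trees from a single representative vertex of each $C_j$; any short cycle appears in the union of these trees up to stretch $O(k\log\log n)$, so the minimum cycle length found over all $j$ and all scales $R$ is the desired estimate $g'$ with $g \le g' \le O(k\log\log n) g$. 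To hit the target runtime I will demand that the total size of the cover and the cost of producing it are both $\tilde{O}(m^{1+1/k})$ per scale; the $O(\log(nW))$ scales are absorbed into the $\tilde{O}(\cdot)$.

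The cover is produced by a deterministic routine $\RoundtripCoverNew(G, R)$ that sweeps through the vertices in a canonical order and, for each unresolved vertex $v$, invokes $\BallGrow(v, R)$. This routine grows concentric in-balls $\bin_{r}(v)$ and out-balls $\bout_{r}(v)$, doubling $r$ each round until a $\GoodCutNew$ condition is met, i.e., a radius $r$ at which the ``shell'' at distance between $r$ and $2r$ has small weight relative to the enclosed ball. At that radius we output $\bin_{r}(v) \cap \bout_{r}(v)$ as the next member $C_j$ of the cover, mark any vertex well-interior to the ball as resolved, and move on. Deterministically, rather than through sampling, this guarantees each short cycle passes inside some output set; derandomization comes for free because $v$ is selected canonically. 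The auxiliary primitives $\BuildSimilar$ and $\SimilarTest$ are used so that when the ball around a new center $v'$ would substantially overlap the ball of an earlier center $v$, we can reuse the exploration of $v$ rather than redoing Dijkstra from scratch.

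The crux of the analysis, and where both the $O(k\log\log n)$ factor and the $\tilde{O}(m^{1+1/k})$ runtime must be extracted simultaneously, is a potential argument inside $\BallGrow$. The key observation is that each doubling of $r$ that \emph{fails} to trigger $\GoodCutNew$ forces the ball size to grow super-polynomially (e.g., roughly squaring); since ball sizes lie in $[1,n]$ this can occur at most $O(\log\log n)$ times before a good cut must appear. Combined with $k$ ``rounds'' of size thresholds of the form $n^{i/k}$ (playing the role of the geometric sampling rates in Pachocki~\etal~\cite{PRSTV18}, but here enforced deterministically by the cut-seeking procedure), the ball radius at termination is $O(k \log\log n) R$. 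The main obstacle I anticipate is verifying that the aggressive growth requirement needed for the $\log\log n$ bound is still compatible with a $\GoodCutNew$ condition strong enough to guarantee that cycles of length $R$ lie \emph{inside} (not just touching) the output ball, and simultaneously that the Dijkstra-style work at each doubling can be amortized against the size increase. I expect both to be handled by a charging scheme routed through $\BuildSimilar$/$\SimilarTest$, ensuring no edge is explored more than $\tilde{O}(m^{1/k})$ times across the entire run on a single scale.
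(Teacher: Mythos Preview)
Your high-level plan---reduce to roundtrip covers at geometric scales $R=2^i$ and read off the girth estimate from the smallest nontrivial ball---matches the paper exactly. But the algorithm you sketch for building the cover has two genuine gaps.

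First, you invoke \BuildSimilar\ and \SimilarTest\ and claim ``derandomization comes for free because $v$ is selected canonically.'' These two primitives are from the paper's \emph{randomized} $O(k\log k)$ algorithm of \cref{sec:klogk_approx}: \BuildSimilar\ samples random sets $S^1,\ldots,S^{O(\log n)}$ of size $\tilde O(n^{1/k})$ and builds shortest-path trees from them, and \SimilarTest\ uses those random samples to prune the ball. The randomness is in the sampling, not in the choice of center, so canonical center selection does not derandomize anything. Your description of them as a Dijkstra-caching mechanism is also not what they do. The deterministic algorithm that actually proves \cref{thm:girth} (the \RoundtripCover\ of \cref{algo:roundtripcover}) uses none of this machinery: it simply grows inballs and outballs from an arbitrary vertex $v$, increments the radius by $R$ each step (not doubling), and applies a purely deterministic cut test \GoodCut.

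Second, your growth analysis is off. You say a failed cut forces the ball to ``roughly square,'' giving $O(\log\log n)$ failures, and then multiply by $k$ ``rounds of size thresholds.'' The actual \GoodCut\ condition is $|V(B_{i+1})|\le |V(B_i)|^{(k-1)/k}n^{1/k}$ (together with an analogous edge condition), so a failure forces $|V(B_{i+1})|>|V(B_i)|^{(k-1)/k}n^{1/k}$. Solving this recurrence gives $|V(B_i)|\ge n^{1-((k-1)/k)^i}$, which reaches $\tfrac34 n$ only after $i=O(k\log\log n)$ steps; the $k$ and the $\log\log n$ come out of a single recurrence, not two separate mechanisms. This same cut condition (via its edge clause) is what controls the runtime: because each recursive piece has $|E(B_{i+1})|\le\max\bigl((1+\tfrac1k)|E(B_i)|,\,|E(B_i)|^{(k-1)/k}m^{1/k}\bigr)$, one shows inductively that the total edge count at recursion depth $\ell$ is at most $(1+\tfrac2k)^\ell m^{k/(k-1)}$, which with $\ell=O(\log n)$ gives $m^{1+O(1/k)}$. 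No similarity testing or amortization across centers is needed.
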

\begin{theorem}[Deterministic Multiplicative Roundtrip Spanners]
\label{thm:spanner}
For any integer $k \ge 1$ and any weighted directed graph $G$ with $n$ vertices and $m$ edges, we can compute in $\tilde{O}(m^{1 + 1/k})$ time
an $O(k \log\log n)$ multiplicative roundtrip spanner with $\tilde{O}(n^{1 + 1/k})$ edges.
\end{theorem}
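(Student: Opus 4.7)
The plan is to build the spanner scale-by-scale by reusing the roundtrip cover constructed for the girth algorithm of \cref{thm:girth}. For each scale $r$ taken from $O(\log(nW))$ geometrically spaced values (where $W$ is the aspect ratio of the edge weights), I would produce a collection $\mathcal{C}_r$ of balls satisfying the following properties: (i) every pair $u,v$ with $d(u \lra v) \le r$ is jointly contained in some $B \in \mathcal{C}_r$; (ii) each $B \in \mathcal{C}_r$ has a center $c_B$ with $\max_{u \in B} d(c_B \lra u) \le O(k \log\log n) \cdot r$; and (iii) $\sum_{B \in \mathcal{C}_r} |B| = \tilde{O}(n^{1 + 1/k})$. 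The spanner is then the union over all scales and all $B \in \mathcal{C}_r$ of the shortest-path in-tree and out-tree rooted at $c_B$ restricted to $B$.

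Given such a cover, the analysis is straightforward. Each ball contributes $O(|B|)$ spanner edges from its two trees, so property (iii) combined with the $O(\log(nW))$ scales yields the edge bound $\tilde{O}(n^{1 + 1/k})$. For the stretch, fix $u,v$ and let $r$ be the smallest scale with $r \ge d(u \lra v)$, so that $r \le O(d(u \lra v))$. By (i) there is a ball $B \in \mathcal{C}_r$ containing both $u$ and $v$, and by (ii) the in- and out-trees stored at $c_B$ realize the closed walk $u \to c_B \to v \to c_B \to u$ in the spanner with total weight at most $O(k \log\log n) \cdot r = O(k \log\log n) \cdot d(u \lra v)$.

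The main obstacle is verifying that the deterministic cover construction supplied by the algorithm of \cref{thm:girth} yields the quantitative guarantee (iii) on \emph{total ball volume}, not merely on the number of balls extracted. The \BallGrow\ subroutine grows a candidate ball by $O(\log\log n)$ successive radius doublings until it finds a boundary with small expansion relative to its interior (a ``good cut''); the amortization I would use is to charge each vertex's appearances in $\mathcal{C}_r$ against an interior region whose volume shrinks by a factor of $n^{1/k}$ at each extraction, so that the total volume across a single scale is $\tilde{O}(n^{1+1/k})$. The running-time bound $\tilde{O}(m^{1+1/k})$ is inherited from the girth algorithm, with only an additional $O(|B|)$ per ball for the tree computations, which is absorbed into the edge-count bound.
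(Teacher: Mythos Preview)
Your approach is essentially identical to the paper's: properties (i)--(iii) are exactly the content of \cref{thm:cover}, and the paper's proof of \cref{thm:spanner} is the one-line observation that $\bigcup_{i=0}^{O(\log n)} \RoundtripCover(G,O(k),2^i)$ is the desired spanner. Two minor corrections: the relevant subroutine here is \RoundtripCover\ (\cref{algo:roundtripcover}), not \BallGrow\ (which belongs to the randomized algorithm of \cref{sec:klogk_approx}), and the ball-growing proceeds in $O(k\log\log n)$ additive increments of $R$ rather than ``$O(\log\log n)$ radius doublings''---but your amortization sketch for (iii) matches the $n^{k/(k-1)}$ induction in the proof of \cref{thm:cover}.
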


Setting $k = \frac{\log n}{\log\log n}$ yields the following corollaries. For $k = \Omega(\log n)$ these results nearly match the optimal algorithms in undirected graphs for $O(k)$ girth approximation and the construction of $O(k)$ spanners.

\begin{corollary}
For any weighted directed graph $G$ with $n$ vertices, $m$ edges, and unknown girth $g$ we can compute in $\tilde{O}(m)$ time an estimate $g'$ such that $g \le g' \le O(\log n) \cdot g$.
\end{corollary}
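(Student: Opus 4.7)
The plan is to invoke \cref{thm:girth} with the particular choice $k = \lceil \log n / \log\log n \rceil$. Since this is a positive integer (for $n \ge 2$), the hypothesis of \cref{thm:girth} is met, and the theorem immediately returns an estimate $g'$ with $g \le g' \le O(k \log\log n) \cdot g$. Substituting our choice of $k$ gives approximation factor
\[
O(k \log \log n) = O\!\left(\frac{\log n}{\log\log n} \cdot \log\log n\right) = O(\log n),
\]
which is exactly the bound claimed in the corollary.

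The only remaining point is verifying that the running time $\tilde{O}(m^{1+1/k})$ promised by \cref{thm:girth} collapses to $\tilde{O}(m)$ for this choice of $k$. Since we may assume $m \le n^2$ (otherwise the graph contains duplicate edges and we preprocess away), we have $\log m \le 2 \log n$. Therefore
\[
m^{1/k} \;\le\; 2^{\log m \cdot \log\log n / \log n} \;\le\; 2^{2 \log\log n} \;=\; (\log n)^2,
\]
so $m^{1+1/k} = m \cdot \poly\log n$. This polylogarithmic overhead is absorbed into the $\tilde{O}(\cdot)$ notation, yielding the claimed $\tilde{O}(m)$ runtime.

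There is no real obstacle here: the corollary is a direct specialization of \cref{thm:girth} coupled with the standard bound $m \le n^2$ used to control $m^{1/k}$. The only subtlety worth noting is the implicit rounding up in the choice of $k$, which only affects constants hidden in the $O(\cdot)$ and $\tilde{O}(\cdot)$ notations.
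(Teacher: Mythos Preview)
Your proposal is correct and follows exactly the paper's approach: the paper simply states that setting $k = \frac{\log n}{\log\log n}$ in \cref{thm:girth} yields the corollary, and you have supplied the routine verification that this choice gives approximation factor $O(\log n)$ and that $m^{1/k}$ is polylogarithmic (via $m \le n^2$), which the paper leaves implicit.
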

\begin{corollary}
For any weighted directed graph $G$ with $n$ vertices and $m$ edges, we can compute in $\tilde{O}(m)$ time
an $O(\log n)$ multiplicative roundtrip spanner with $\tilde{O}(n)$ edges.
\end{corollary}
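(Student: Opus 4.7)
The plan is to derive this corollary by a direct instantiation of \cref{thm:spanner} with the parameter $k = \lceil \log n / \log\log n \rceil$, and then verify that each of the three quantities appearing in its conclusion (approximation factor, edge count, runtime) collapses to the claimed form. Since the theorem is assumed already proved, no combinatorial work remains; the entire argument is a routine polylogarithmic computation.

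First, I would observe that with this choice of $k$, the approximation factor becomes $O(k \log\log n) = O((\log n/\log\log n) \cdot \log\log n) = O(\log n)$, which matches the bound stated in the corollary. Next, for the number of edges in the output spanner, I would compute
\[
n^{1/k} \;=\; n^{\log\log n / \log n} \;=\; 2^{\log\log n} \;=\; \log n,
\]
so that $\tilde{O}(n^{1+1/k}) = \tilde{O}(n \log n)$, which is absorbed into the $\tilde{O}(\cdot)$ notation and equals $\tilde{O}(n)$.

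Finally, for the runtime, I would use the trivial bound $m \le n^2$ to write $m^{1/k} \le n^{2/k} = (\log n)^2$, which is polylogarithmic in $n$. Therefore $\tilde{O}(m^{1+1/k}) = \tilde{O}(m \cdot \poly\log n) = \tilde{O}(m)$, again absorbed into the $\tilde{O}(\cdot)$ notation. Combining these three bounds gives the statement of the corollary.

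There is no real obstacle here: the only point that deserves a brief justification is the use of $m \le n^2$ to bound $m^{1/k}$, since the theorem's runtime depends on $m$ rather than $n$. After noting this, the corollary follows immediately from \cref{thm:spanner} and the parameter choice above.
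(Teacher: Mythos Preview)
Your proposal is correct and matches the paper's approach exactly: the paper simply states that setting $k = \frac{\log n}{\log\log n}$ in \cref{thm:spanner} yields the corollary, without spelling out the arithmetic. Your verification of the three quantities is accurate, and you have in fact supplied more detail than the paper itself.
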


Interestingly, our results for constant factor randomized approximations and our results for deterministic approximations are achieved in different ways. Highlighting this, in \cref{sec:klogk_approx} we show how to combine the techniques of these algorithms to obtain both $O(k \log k)$ multiplicative approximations to the girth and $O(k \log k)$ multiplicative roundtrip spanners of size $\otilde(n^{1 + 1/k})$ in $\otilde(m n^{1/k})$ time with high probability in $n$.
\begin{theorem}[Constant Multiplicative Girth Approximation]
\label{thm:const_girth}
For any integer $k \ge 1$ and any weighted directed graph $G$ with $n$ vertices, $m$ edges, and unknown girth $g$ we can compute in $\tilde{O}(m^{1 + 1/k})$ time an estimate $g'$ such that $g \le g' \le O(k \log k) \cdot g$ with high probability in $n$.
\end{theorem}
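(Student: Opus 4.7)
The plan is to combine the recursive scale-refinement structure underlying the proof of \cref{thm:girth} with the randomized pivot-sampling technique used in the proof of \cref{thm:3girth}. The $O(k \log\log n)$ approximation in \cref{thm:girth} naturally decomposes as an $O(k)$ loss per recursive ``level'' times the $O(\log\log n)$ levels that arise from deterministically searching for the correct distance scale. In contrast, the randomized $3$-approximation of \cref{thm:3girth} locates a correct distance scale in essentially one shot via pivot sampling, but the analysis is tuned specifically to the $k=2$ case.

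I would replace the deterministic scale-identification step in the proof of \cref{thm:girth} with a randomized pivot-sampling step that generalizes the $\sqrt{n}$ sampling rate used in \cref{thm:3girth} to $n^{1-1/k}$ pivots per level. At each recursive level, one samples a set of pivots, performs truncated Dijkstra from each at the current scale, and uses the resulting balls to either certify a short cycle directly or peel off a subinstance on which to recurse. Because randomized sampling directly hits the correct distance scale with high probability, $O(\log k)$ recursive levels should suffice in place of the $O(\log \log n)$ levels needed deterministically, since one no longer has to probe a long geometric chain of thresholds.

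Chaining $O(\log k)$ levels, each contributing an additive $O(k)$ loss to any cycle whose edges span multiple levels, yields the claimed $O(k \log k)$ approximation factor. The per-level running time of $\tilde{O}(m^{1+1/k})$ follows from the standard analysis of truncated Dijkstra from $n^{1-1/k}$ pivots used in the proofs of \cref{thm:3girth} and \cref{thm:girth}; absorbing the $O(\log k)$ factor into the $\tilde{O}(\cdot)$ notation preserves the stated total bound.

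The main obstacle will be verifying that the randomized pivot-sampling cleanly substitutes for the deterministic scale search without breaking the invariants maintained across recursive levels in the proof of \cref{thm:girth}, namely that at each level one either finds a short cycle or identifies a subgraph whose removal leaves a well-structured residual instance. This requires reproving the relevant ``good cut''-style guarantees in the randomized regime and union-bounding the resulting failure probabilities across the $O(\log k)$ recursive levels, so that the returned estimate $g'$ satisfies $g \le g' \le O(k \log k) \cdot g$ with high probability in $n$.
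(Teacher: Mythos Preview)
Your proposal rests on two misidentifications of the paper's arguments, and as written it would not go through.

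First, the sampling rate is inverted. In \cref{thm:3girth} one samples $\tilde{O}(\sqrt{n})$ pivots and runs full Dijkstra from each, costing $\tilde{O}(m\sqrt{n})$. The correct generalization, and the one the paper uses in \cref{algo:buildsimilar}, is to sample $\tilde{O}(n^{1/k})$ pivots, giving $\tilde{O}(m n^{1/k}) \le \tilde{O}(m^{1+1/k})$. Sampling $n^{1-1/k}$ pivots and running Dijkstra from each costs $\tilde{O}(m n^{1-1/k})$, which is far outside the budget; ``truncated'' Dijkstra does not help here, since the similarity machinery of \cref{sec:constantapprox} needs actual distances to and from the sampled vertices.

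Second, the $O(\log\log n)$ factor in \cref{thm:girth} does not come from a deterministic search over distance scales that randomness could short-circuit. Both the deterministic and the randomized algorithms iterate over all scales $R = 2^i$; the $\log\log n$ arises inside a \emph{fixed} scale, from \cref{lemma:inbound}: one needs up to $O(k\log\log n)$ ball-growing steps before the \GoodCut\ condition is guaranteed to fire, because the balls must grow from size $1$ up to $\Theta(n)$. So your claim that ``$O(k)$ loss per level times $O(\log\log n)$ levels'' is the structure, and that randomized scale-hitting reduces the levels to $O(\log k)$, misreads the source of the factor.

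What the paper actually does is different in kind. The $\tilde{O}(n^{1/k})$ sampled pivots are used to build the similarity data structure $D$ of \cref{algo:buildsimilar}, which (\cref{lemma:Ksizebound}) guarantees with high probability that for every surviving vertex $v$ the set of candidate cycle-mates has size at most $n^{(k-1)/k}$. Ball growing is then performed \emph{inside these similar sets} (the $E_v^i$ of \cref{algo:ballgrow}). Because the balls now only need to grow from size $1$ to $n^{(k-1)/k}$ rather than to $\Theta(n)$, the analogue of \cref{lemma:inbound} gives $K$ satisfying $(1-1/k)^K < 1/k$, i.e. $K = O(k\log k)$ steps, whence the approximation factor. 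The randomness does not pick a scale; it shrinks the target size in the ball-growing inequality.
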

\begin{theorem}[Constant Multiplicative Roundtrip Spanners]
\label{thm:arb_const_spanner}
For any integer $k \ge 1$ and any weighted directed graph $G$ with $n$ vertices and $m$ edges, we can compute in $\tilde{O}(m^{1 + 1/k})$ time an $O(k \log k)$ multiplicative roundtrip spanner with $\tilde{O}(n^{1 + 1/k})$ edges with high probability in $n$.
\end{theorem}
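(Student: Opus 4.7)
The plan is to merge the randomized pivot-sampling argument that underlies Theorem~\ref{thm:const_spanner} with the recursive ball-cover framework that underlies Theorem~\ref{thm:spanner}. Recall that the $k=2$, constant-stretch construction picks $\tilde{O}(\sqrt{n})$ random centers, grows in- and out-balls around each, and uses a cheap certification that a single well-chosen radius produces a low-roundtrip-diameter ball; by contrast, the deterministic general-$k$ construction recurses $k$ times but in each level performs a doubling search over $\Theta(\log\log n)$ candidate radii to find such a ``good cut'', which is the source of its $\log\log n$ overhead. To prove Theorem~\ref{thm:arb_const_spanner}, I would keep the $k$-level recursion of Theorem~\ref{thm:spanner} but replace the deterministic radius search at each level by a randomized version that, because it is allowed to fail on a small fraction of pivots, can get away with trying only $O(\log k)$ geometrically spaced radii.

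Concretely, at recursive level $i$ on the surviving subgraph $G_i$, I would sample $\tilde{O}(|V(G_i)|^{1/k})$ pivots uniformly at random and, for each pivot $v$, grow inward and outward shortest-path balls on $O(\log k)$ geometric radii. A \SimilarTest-style check identifies a radius at which the in- and out-ball sizes do not blow up multiplicatively, certifying that the induced roundtrip ball has roundtrip diameter $O(\log k)$ times the shell radius. I would add the in- and out-shortest-path trees at the certified radius to the spanner, remove the ball's interior from $G_i$, and recurse, halting after $k$ levels once no vertex remains uncovered. Each of the $k$ levels contributes $\tilde{O}(n \cdot n^{1/k})$ edges and $\tilde{O}(m \cdot n^{1/k})$ work (charging ball-growth cost to the peeled interiors), for totals of $\tilde{O}(n^{1+1/k})$ edges and $\tilde{O}(m^{1+1/k})$ time. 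The stretch accumulates to $O(k\log k)$: an $O(\log k)$ per-level factor from the randomized radius search, compounded over $k$ recursive levels.

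The main obstacle will be showing that $O(\log k)$ geometrically spaced radii actually suffice when pivots are drawn at random, whereas the deterministic analysis appears to need $\Theta(\log \log n)$. The heart of the argument should be that at most $O(\log k)$ geometric scales of radii can be ``bad'' before a random pivot drawn from $V(G_i)$ almost surely falls inside a ball whose shell ratios collapse; equivalently, one must bound the number of distinct failed scales encountered in expectation by $O(\log k)$ using the hitting-set bound on $\tilde{O}(|V(G_i)|^{1/k})$ random pivots, rather than by $O(\log \log n)$ as in the worst-case deterministic bound. Pinning down this sampling-versus-doubling tradeoff quantitatively, and then threading it through the accounting of the $k$ recursive levels so that the ``with high probability'' guarantee survives a union bound over levels and pivots, is the main technical step I anticipate in \cref{sec:klogk_approx}.
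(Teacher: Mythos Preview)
Your proposal misidentifies where the $\log k$ (as opposed to $\log\log n$) comes from, and the algorithmic structure you sketch does not match the paper's and is not obviously sound.

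In the paper, the proof of \cref{thm:arb_const_spanner} is an immediate corollary of \cref{thm:roundtripcover2}: one simply unions $\RoundtripCoverNew(G,O(k),2^i)$ over $O(\log n)$ scales. All the work is in \cref{thm:roundtripcover2}, and its mechanism is quite different from what you describe. The algorithm first runs a generalized version of the $\similarset$ procedure from \cref{sec:constantapprox}, now sampling $\tilde O(n^{1/k})$ vertices rather than $\tilde O(\sqrt n)$; this builds, for every vertex $v$, a data structure certifying that at most $n^{(k-1)/k}$ vertices can still be ``similar'' to $v$ (i.e.\ could possibly lie on a short roundtrip cycle with $v$). Only then does it run the ball-growing recursion of \cref{sec:algo}, but with the balls $E_v^i$ restricted to similar vertices. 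Because these balls can grow to size at most $n^{(k-1)/k}$ rather than $n$, the good-cut recurrence $x_{i+1}=x_i^{(k-1)/k}n^{1/k}$ from \cref{lemma:inbound} needs only $K$ steps with $(1-1/k)^K<1/k$, i.e.\ $K=O(k\log k)$, instead of $O(k\log\log n)$. The stretch is the ball radius $O(KR)=O(k\log k)\cdot R$; it does \emph{not} compound over recursion levels, and the recursion depth is $O(\log n)$ (each good cut shrinks the instance by a constant factor), not $k$.

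Your picture --- $k$ recursion levels, $\tilde O(n^{1/k})$ random pivots per level, $O(\log k)$ trial radii per pivot, stretch accumulating as $k\times O(\log k)$ --- has no evident justification for the key claim that random pivot selection reduces the number of radii from $\Theta(\log\log n)$ to $O(\log k)$. The $O(k\log\log n)$ bound in \cref{lemma:inbound} comes purely from the growth inequality on ball sizes and is insensitive to how the center is chosen; randomizing the pivot does not shorten that search. The paper's randomness is used upstream, to shrink the effective universe each ball lives in, which is what actually shortens the radius search. Your ``hitting-set bound on $\tilde O(|V(G_i)|^{1/k})$ random pivots'' is the right kind of ingredient, but it must be deployed to cap the similar-set sizes (as in \cref{lemma:Ksizebound}), not to argue about per-pivot radius scans.
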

This shows that for any fixed $\eps > 0$, that there is an algorithm running in time $m^{1+\eps}$ that approximates the girth of a directed graph to within a constant depending on $\eps$, but not on $m$ or $n$. Additionally, this almost matches the $O(k)$-multiplicative girth approximation algorithms running in $m^{1+1/k}$ time in undirected graphs.

\subsection{Comparison to previous work} 

While the existence of roundtrip spanners matching the quality in undirected graphs was shown in \cite{RTZ08}, the runtime was $O(mn)$ and required an APSP computation. Our results, \cref{thm:3girth}, \cref{thm:const_spanner} are the first to show that constant factor girth approximation and construction of constant factor roundtrip spanners with $\O(n\sqrt{n})$ edges can be built in subquadratic $\O(m\sqrt{n})$ time. This algorithm leverages new randomized techniques for testing a notion we call \emph{similarity} between vertices not present in previous girth approximation and roundtrip spanner algorithms and we believe is of independent interest.

Our \cref{thm:girth} and \cref{thm:spanner} offer direct improvements over the analogous results in \cite{PRSTV18}. Specifically, our algorithms provide a tighter multiplicative girth approximation and multiplicative spanner stretch in the same runtime as the algorithms in \cite{PRSTV18}, which produce a $O(k \log n)$ girth approximation and $O(k\log n)$ roundtrip spanner with $\tilde{O}(n^{1 + 1/k})$ edges in time $\tilde{O}(m^{1 + 1/k}).$

Additionally, our algorithm is deterministic and in our opinion, simpler. The algorithm of Pachocki~\etal~\cite{PRSTV18} involved the following pieces. First, they use a method of Cohen to estimate ball sizes \cite{Cohen97} and resolve the case where there is a vertex whose inball and outball (of some small radius) intersect in a significant fraction of the vertices. In the other case, they use exponential clustering (see \cite{MPX13}) to partition the graph and recurse. Finally, they rerun the algorithm $n^{1/k}$ times. On the other hand, our algorithm simply grows inballs and outballs from various vertices, and uses a delicate cutting conditition to decide when to cut and recurse.

Additionally, \cref{thm:const_girth} and \cref{thm:arb_const_spanner} further improve upon Pachocki~\etal~\cite{PRSTV18} by combining the ideas from the constant factor girth approximation algorithm and the deterministic ball-growing algorithm, completely removing the dependence on $n$ in the approximation factor while still running in time $\O(m^{1+1/k})$. We remark that the ideas for our deterministic algorithm are essential in obtaining this last result, and that more directly combining the ideas of \cite{PRSTV18} with our constant factor approximation algorithm does not seem to give an $O(k\log k)$ multiplicative girth approximation in $\O(m^{1+1/k})$ time.

\subsection{Overview of Approach}
\label{sec:approach_overview}

\paragraph{Summary of randomized $O(1)$ approach.} Our approach to obtaining a $3$-approximation in \cref{sec:constantapprox} to the girth is rooted in the simple insight that if a vertex $v$ is in a cycle of length $R$ then every vertex in the ball of radius $\alpha$ from $v$ is at distance at most $\alpha + R$ from every vertex in the cycle. Consequently, for each vertex if we repeatedly prune vertices from its outball of radius $R$ if they do not have the property that they can reach every vertex in this ball by traversing a distance at most $2R$, then we will never prune away vertices in a cycle of length $R$ from that vertex. 

Leveraging these insights, we can show that if we randomly compute distances to and from a random $\otilde(\sqrt{n})$ vertices and if a cycle of length $O(R)$ is not discovered immediately then we can efficiently implement a pruning procedure so that each vertex only has in expectation $\otilde(\sqrt{n})$ vertices that could possibly be in a cycle of length $O(R)$ through that vertex. By then checking each of these sets for a cycle and being careful about the degrees of the vertices (and therefore the cost of the algorithm) this approach yields essentially a $4$-approximation to the girth in $\otilde(m\sqrt{n})$ time with high probability in $n$. 

Our $3$-approximation is then obtained by carefully applying this argument to both outballs and inballs and leveraging the simple fact that if a vertex $v$ is on a cycle  $C$ of length $R$ then for every $c \in C$ either $d(v,c) \leq R/2$ or $d(c,v) \leq R/2$.

\paragraph{Overview of deterministic $O(k \log \log n)$ results:} Our deterministic algorithm in \cref{sec:algo} is based on a different approach than our randomized constant approximation algorithms in \cref{sec:constantapprox}. We think this approach is of independent interest and further demonstrate its utility in \cref{sec:klogk_approx} by showing how to combine the insights that underly it with the algorithm from \cref{sec:algo} to achieve arbitrary constant approximations.

For the sake of simplicity, we focus on unweighted directed graphs $G$ and for a parameter $R$, construct a subgraph (roundtrip spanner) $H$ so that if the roundtrip distance between $u$ and $v$ is at most $R$ in $G$, then their roundtrip distance is at most $O(Rk\log\log n)$ in $H$. 

The key insight of guiding our algorithm is the following: instead of partitioning the graph into disjoint pieces and recursing (as is done in \cite{PRSTV18}), we instead allow the pieces to overlap on the boundaries. This is justified by the following observation. Consider a subgraph $W$ of $G$, and let $W'$ denote the subgraph consisting of all vertices within distance $R$ of $W$. Then if we recursively build a roundtrip spanner on $W'$, then we are guaranteed that we can delete $W$ from our graph. Indeed, if $u \in W$ and the roundtrip distance between $u$ and $v$ is at most $R$, then $u, v \in W'$. This simple observation allows us to overcome the critical challenge in \cite{PRSTV18}, arguing that that graph can be broken apart, while nevertheless preserving roundtrip distance.

This observation also forms the basis of an optimal spanner construction on unweighted undirected graphs, which appears in a book of Peleg (exercise 3 on page 188 in \cite{Peleg00}). Specifically, for any integer $k \ge 1$, we can construct a $(2k-1)$-spanner with $O(n^{1 + 1/k})$ edges in time $O(m).$ The construction works as follows. Start at any vertex $v$, let $B_i$ denote the ball of radius $i$ centered at $v$, and let $|B_i|$ denote the number of vertices in $B_i$. Grow such balls around $v$ until we find an index $i$ with $|B_{i+1}| \le n^{1/k} |B_i|.$ We can clearly guarantee that $i \le k$. At this point, add a spanning tree on $B_{i+1}$ to your spanner and delete all vertices in $B_i.$ Now, recurse on the remaining graph. It is easy to check that the resulting spanner is as desired. Our algorithm for directed graphs is similar, and we give a more specific overview in \cref{sec:4overview}.

We gain an $O(\log\log n)$ dependence over the undirected spanner algorithm presented because we must recurse on the balls we grew instead of simply building a spanning tree on them. The precise condition for recursion and corresponding calculation are performed in the algorithms \GoodCut~(\cref{algo:goodcut}) and \cref{lemma:inbound}.

Further, our $O(k\log k)$ approximations of \cref{sec:klogk_approx} are then achieved by using the techniques of the algorithms in \cref{sec:constantapprox} to better control the size of the outballs and inballs in an invocation of the deterministic algorithm of \cref{sec:algo}.
\section{Preliminaries}
\label{sec:prelim}

For weighted directed graph $G$, we let $V(G)$ and $E(G)$ denote the vertex and edge sets of $G$. We assume all edge lengths are nonnegative. For a subgraph $S \subseteq G$ (not necessarily vertex induced), let $V(S)$ denote the set of vertices of $G$, and let $E(S)$ denote the set of edges. For a subset $W \subseteq V(G)$, we define $G[W]$ to be subgraph induced by $W$. When the graph $G$ is clear from context, we let $n$ and $m$ denote $|V(G)|$ and $|E(G)|$ respectively. 

For a weighted directed graph $G$ with non-negative edge lengths, we let $d_G(u, v)$ denote the (shortest path) distance from $u$ to $v$ in $G$. When the graph $G$ is clear from context, we simply denote this as $d(u, v).$ If there is no path from $u$ to $v$, we let $d(u, v) = \infty.$ When $S$ is a subgraph of $G$, we let $d_S(u, v)$ denote the (shortest path) distance from $u$ to $v$ only using the edges in $E(S).$ We denote the \emph{roundtrip distance} between $u$ and $v$ as $d_G(u \lra v) := d_G(u, v) + d_G(v, u)$ and define a roundtrip spanner.

\begin{definition}[Roundtrip Spanner]
We say that a subgraph $S \subseteq G$ is an \emph{$\alpha$-roundtrip spanner} if $d_S(u \lra v) \le \alpha \cdot d_G(u \lra v)$ for all $u, v \in V(G).$
\end{definition}

For weighted directed graph $G$ we define the \emph{inball} and \emph{outball} of radius $r$ around a vertex $v$ as
 \[ \bin_v(r) \defeq G[\{ u : d(u, v) \le r \}] \text{ and } \bout_v(r) \defeq G[\{ u : d(v, u) \le r \}] \]
 respectively. In other words, the inball of radius $r$ around $v$ is the subgraph induced by vertices $u$ with $d(u, v) \le r.$ The outball is defined similarly. We define the \emph{ball} of radius $r$ around vertex $v$ as 
 \[ 
 B_v(r) \defeq G[\{ u : d(u \lra v) \le r \}]. 
 \] In other words, the ball of radius $r$ around $v$ is the subgraph induced by vertices $u$ within roundtrip distance $r$ of $v$.

\newcommand{\inset}{\mathrm{in}}
\newcommand{\outset}{\mathrm{out}}
\newcommand{\dir}{\mathrm{dir}}

\newcommand{\similarset}{\textsc{SimilarSet}}
\newcommand{\girthapprox}{\textsc{GirthApprox}}
\newcommand{\Grev}{G^{\mathrm{rev}}}

\newcommand{\similarsetspanner}{\textsc{SimilarSetSpanner}}

\section{Randomized Constant Approximations}
\label{sec:constantapprox}

Here we provide algorithms for efficiently computing a 3-approximation to the girth \cref{sec:girth3}. To simplify our algorithm and analysis we assume that the maximum degree of $G$ is bounded by $O(m/n)$, i.e. we assume it is only a constant larger than the average degree, which is $2m/n$.
We justify this assumption by showing that we can always reduce to this case as is formalized in the following lemma. We defer the proof to
\cref{proofs:regular}.
\begin{lemma}
\label{lemma:regular}
Given a directed weighted graph $G=(V,E)$ of $n$ vertices and $m$ edges with non negative edge weights, one
can construct a graph $H$ in $O(m)$ time of $O(n)$ vertices and $O(m)$ edges with non negative edge weights and of maximum degree
$O(m/n)$ such that
\begin{enumerate}
\item All roundtrip distances (between pairs of vertices in $G$) in $H$ and in $G$ are the same.
\item Given a cycle in $H$, one can find in $O(m)$ time a cycle in $G$ of the same length.
\item Given a subgraph $H'$ of $H$, one can find in $O(m)$ time a subgraph $G'$ of $G$ such that
the number of edges in $G'$ is at most the number of edges in $H'$ and the roundtrip distances in $H'$ and $G'$ are the same.
\end{enumerate}
\end{lemma}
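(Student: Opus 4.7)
}

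My plan is to implement the standard ``vertex splitting'' degree reduction, tailored so that the target maximum degree is $\Delta := \lceil 4m/n \rceil$. For each vertex $v$ of $G$ with total degree (in-degree plus out-degree) $d_v$, I will create $k_v := \max(1, \lceil d_v/\Delta \rceil)$ copies $v^1, v^2, \ldots, v^{k_v}$ in $H$, and arrange them on a bidirectional cycle of weight-$0$ edges: for each $1 \le i \le k_v$ I add the edges $v^i \to v^{i+1}$ and $v^{i+1} \to v^i$ with weight $0$ (indices mod $k_v$; if $k_v \le 2$, add the obvious degenerate edges). Then I distribute the $d_v$ edges of $G$ incident to $v$ among the copies $v^1, \ldots, v^{k_v}$ so that each copy is incident to at most $\Delta$ original edges. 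The total number of copies is $\sum_v k_v \le n + \sum_v d_v/\Delta \le n + 2m/\Delta = O(n)$, and every vertex of $H$ has degree at most $\Delta + O(1) = O(m/n)$, so the edge count of $H$ is $O(n \cdot m/n) = O(m)$. This can clearly be done in $O(m)$ time.

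Next I would verify the three guaranteed properties. For property~(1), I would fix the canonical embedding $v \mapsto v^1$ of $V(G)$ into $V(H)$. Since every cycle edge in $H$ has weight $0$ and every original edge retains its weight, any $u \to v$ path in $G$ lifts to a $u^1 \to v^1$ path of the same total weight in $H$ by freely using the $0$-weight cycle edges to reach the appropriate incident copy. Conversely, given any $u^i \to v^j$ path in $H$, projecting each $w^\ell$ back to $w$ yields a closed (or $u \to v$) walk in $G$ of the same total length, because cycle edges contribute zero and collapsing consecutive repeats preserves length. Adding the analogous argument for the reverse direction gives $d_H(u^1 \leftrightarrow v^1) = d_G(u \leftrightarrow v)$. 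For property~(2), given a cycle $C$ in $H$ of length $L$, I project vertex by vertex to $G$, collapsing the consecutive runs contributed by traversals of $0$-weight cycle edges. The result is a closed walk $W$ in $G$ of the same total weight $L$ using only edges of $G$, and from a closed walk one extracts a simple cycle of weight at most $L$ in $O(m)$ time; this suffices for girth applications since $H$'s girth equals $G$'s girth by the argument of property~(1), so any length-$L$ cycle in $H$ whose length is at most the girth of $G$ extracts to a cycle of length exactly $L$.

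Property~(3) is the part I expect to require the most care. Given $H' \subseteq H$, I define $G'$ to be the set of original edges of $G$ whose $H$-copy lies in $E(H')$; clearly $|E(G')| \le |E(H')|$ and the projection takes $O(m)$ time. To make roundtrip distances agree, I will assume (without loss of generality, since our spanner-producing algorithms can always include them for free) that $H'$ contains all of the $0$-weight cycle edges added in the construction; this adds only $O(n)$ edges and zero extra weight. Under that convention, any $u^i \to v^j$ path in $H'$ projects to a $u \to v$ walk in $G'$ of the same total weight, so $d_{G'}(u,v) \le d_{H'}(u^i, v^j)$; conversely any $u \to v$ path in $G'$ lifts to a $u^1 \to v^1$ path in $H'$ of the same weight using the cycle edges, giving the matching inequality. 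Combining with the analogous reverse-direction bound yields $d_{H'}(u^1 \leftrightarrow v^1) = d_{G'}(u \leftrightarrow v)$ for all $u,v \in V(G)$, which is the desired property. The main subtlety here is precisely this handling of cycle edges in subgraphs; so long as we stipulate that the ambient cycles are always retained (or equivalently, contract them away at the end), the reduction is lossless and the three properties all hold.
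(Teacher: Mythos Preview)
Your vertex-splitting approach is sound in spirit and matches the paper at a high level, but the specific gadget you chose---a bidirectional $0$-weight cycle among the copies $v^1,\ldots,v^{k_v}$---creates a genuine issue with property~(2) that the paper's construction avoids. Your $H$ contains simple cycles of length $0$ (e.g.\ $v^1\to v^2\to v^1$, or the full loop $v^1\to\cdots\to v^{k_v}\to v^1$) that project to a single vertex in $G$, so there is no cycle in $G$ to return at all. More generally, a simple cycle in your $H$ can visit several copies of the same original vertex non-contiguously (e.g.\ $v^1\to v^2\to w^1\to v^3\to u^1\to v^1$ using real edges in the middle), so the projection is only a closed walk; extracting a simple cycle from it gives length at most $L$, not exactly $L$. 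You acknowledge this and argue it suffices for the girth application, but the lemma as stated asks for the same length, and your $H$ does not even have the same girth as $G$ in general.

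The paper instead attaches to each original vertex $v$ two \emph{directed} balanced $\delta$-ary trees of $0$-weight internal edges: an out-tree rooted at $v$ (edges pointing away from $v$) whose leaves carry $v$'s outgoing edges, and an in-tree rooted at $v$ (edges pointing toward $v$) whose leaves receive $v$'s incoming edges. Because each tree is acyclic and uniformly oriented, any simple cycle in $H$ that enters $v$'s gadget must pass through the root $v$ itself, and by simplicity can do so only once; hence contracting the trees yields a \emph{simple} cycle in $G$ of exactly the same length, and no spurious $0$-length cycles are ever created. This is precisely where your bidirected-cycle gadget and the paper's tree gadget diverge in what they can prove. A minimal fix to your construction would be to replace the bidirected cycle by a directed $0$-weight path (or tree) so that the gadget is acyclic, but then you must separate the in- and out-side as the paper does; otherwise you cannot both enter and leave the gadget.

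For property~(3), your caveat that the $0$-weight gadget edges should be retained in $H'$ is the right observation; the paper's proof (``simply contracting the $\delta$-trees'') is equally brief on this point, and in both constructions the downstream applications (shortest-path trees, spanners) naturally include those edges, so this is not the critical gap.
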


\subsection{An $\tilde{O}(m \sqrt{n})$ Time $3$-approximation to Girth }
\label{sec:girth3}

In this section we show a procedure that given a directed weighted graph $G$ and a girth estimate $R$, returns
a cycle of length at most $3R$ if the girth in $G$ is at most $R$. The algorithm is given by $\girthapprox$ (See \cref{algo:girth3}) which in turn invokes the subroutine $\similarset$ (See \cref{alg:similar_vertices}).

In order to approximate the girth of $G$ we invoke this procedure for every $r=(1+\epsilon)^i$ for $1\leq i \leq
\log_{1+\epsilon}{n W}$ and stop once the procedure returns a cycle. If $g$ is the girth of $G$ this incurs an additional $\log_{1+\epsilon}{g}$ factor to the running time (as for the first index $i$ such that $(1+\epsilon)^i >g$ the algorithm will return a
cycle w.h.p.) and an additional $(1+\epsilon)$ factor in the approximation ratio.
The additional $(1+\epsilon)$ factor in the approximation ratio  can be avoided if the weights are integers by simply using binary search on the range between 1 and $nW$ (where $W$ is the maximum edge weight in $G$) and finding two consecutive integers $i$ and $i+1$ such that the procedure returned a cycle of length at most $3(i+1)$ when invoked on $i+1$ but not a cycle when invoked on $i$.
This incurs a $\log{nW}$ factor in the running time that can be improved to $O(\log{n})$ by the same method
as done in \cite{PRSTV18} of contracting small weight strongly connected components and deleting large weight edges (see Section 5.1 in \cite{PRSTV18} for more details).

Let $G =(V,E)$ be a directed graph with $n$ vertices and $m$ edges. 
We assume the graph $G$ is of average degree $\delta = 2m/n$ and that also the maximum degree in the graph is also $O(\delta)$.

The subroutine $\similarset$ gets as an input the graph $G$ and the target distance $R$ and either returns a cycle of length at most $3R$
or returns a subset $A_v$ of vertices for every $v\in V$. The subset $A_v$ for a vertex $v\in V$ consists of vertices at distance at most $R/2$ from $v$ with the guarantee that $A_v$ contains all vertices that are 
\begin{enumerate}
\item At distance at most $R/2$ from $v$ and
\item On a cycle of length $R$ with $v$.
\end{enumerate}
Procedure $\girthapprox$ invokes the Procedure $\similarset$ twice, once on $G$ and once on the reversed graph of $G$ (the graph obtained by reversing every edge of $G$). If a cycle of length $3R$ is returned in one of these calls then procedure $\girthapprox$ returns such a cycle.
Otherwise, let $\{A_v^\inset\}_{v \in V}$ be the sets returned from invoking $\similarset$ on the graph $G$ and
$\{A_v^\outset\}_{v \in V}$ on the reversed graph.
Next, the procedure for every $v\in V$ checks if there is a cycle containing $v$ of length at most $R$ in the induced graph
of $A_v^\inset \cup A_v^\outset$. If such a cycle exists then the procedure returns such a cycle.

Procedure $\similarset$ works as follows.
The algorithms starts by sampling $O(\log{n})$ independent subsets $S_i$ of expected size $O(\sqrt{n})$
each for $1 \le i \le M$ where $M=50\log{n}$. 
From every vertex $w \in \bigcup_{1 \le i \le M}{S_i}$ the algorithm runs Dijkstra from and to $w$ in $G$.
If a cycle of length $3R$ is detected then the algorithm returns it.

Next for every vertex $v\in V$ and index $1 \le i \le M$ the algorithm defines a set $T_i(v) \subseteq S_i$.
The sets $T_i(v)$ will be used to reduce the number of potential vertices that can be on a cycle of length at most $R$ with $v$.
First, the set $T_0(v)$ consists of all vertices in $S_0$ that are at distance at most $R/2$ from $v$.
Let $R_0(v)$ be a sampled set of $O(\log{n})$ vertices from $T_0(v)$.
Now, the sets $T_{i}(v)$ and $R_{i}(v)$ are defined as follows. The set $T_{i}(v)$ is the set of all vertices $s \in S_i$ such that $d(v,s) \le R/2$ and $d(s, t) \le 3R/2 \text{ for all } t\in \bigcup_{0 \le j \le i-1}{R_{j}(v)}$. Again, define $R_i(v)$ as a sampled set of $O(\log n)$ vertices from $T_i(v)$.

To gain intuition for the definition of $T_i(v)$ and $R_i(v)$, consider the set $G_i(v)$ of all vertices $s \in V(G)$ such that $d(v,s) \le R/2$ and $d(s, t) \le 3R/2 \text{ for all } t\in \bigcup_{0 \le j \le i-1}{R_{j}(v)}$. We remark that our algorithm does not compute $G_i(v)$, but its definition is essential for the analysis. Intuitively, the set $G_i(v)$ consists of the vertices after $i$ rounds that the algorithm still believes could be in a cycle of length $R$ with $v$. If $|G_i(v)| \ge 100\sqrt{n}\log n$, then by the choice of $S_i$ as an independent random set of expected size $O(\sqrt{n})$, we have that $T_i(v)$ is a random sample of $G_i(v)$ of expected size at least $100\log n.$ In this way, $R_i(v)$ is just a random sample of $G_i(v)$ of size $O(\log n).$

As we show in \cref{lem:SubsetShrinks}, if $|G_M(v)| \ge 100\sqrt{n}\log n$, our algorithm discovers w.h.p. a cycle of length at most $3R$ sometime during the shortest path computations done at the beginning. On the other hand, if $|G_M(v)| \le 100\sqrt{n}\log n$, then we can grow a shortest path tree from $v$ but only include vertices in $G_M(v)$ to search for a cycle of length $R$, only paying runtime $|G_M(v)| = \O(\sqrt{n})$ for that vertex $v$.

Formalizing this final step, the algorithm computes a shortest path tree $T(v)$ from $v$ up to depth $R/2$, keeping only vertices $s \in V$ such that $d(v,s) \le R/2$ and $d(s, t) \le 3R/2 \text{ for all } t\in \bigcup_{0 \le j \le M}{R_{j}(v)}$.
The set $A_v$ is the set of vertices in $T(v)$.

\begin{algorithm}[h!]
	\caption{$\girthapprox(G,R)$,  takes a graph $G$ and a parameter $R$.  If the girth of $G$ is at most $R$ this algorithm outputs w.h.p. a cycle of length at most $3R$}
	\label{algo:girth3}
	\begin{algorithmic}[1]
		\State Invoke $\similarset(G,R)$ to either find a cycle of length at most $3R$ or set $A_v^\outset \subseteq V$ for each $v \in V(G)$. \label{line:similar_search_1}
		\State Invoke $\similarset(\Grev,R)$ where $\Grev$ is the graph where the direction of every edge is reversed to either find a cycle of length at most $3R$ or sets $A_v^\inset \subseteq V$ for each $v \in V(G)$.
		\State If a cycle of length at most $3R$ has yet to be found for each $v \in V(G)$ perform Dijsktra from $v$ in the graph induced by $A_v^{\outset} \cup A_v^{\inset}$ to find a cycle of length at most $R$ through one of the $v$.
		\label{line:search_union}
		\State Return any cycle of length at most $3R$ found. \label{line:final_ball-search}
	\end{algorithmic}
\end{algorithm}

\begin{algorithm}[h!]
\caption{$\similarset(G, R)$, takes a graph $G$ and a parameter $R$. This algorithm either computes a cycle of length $3R$ or a set $A_v \subseteq V(G)$ of ``similar''
	vertices to $v$ (with respect to balls of radius $R/2$) for each $v \in V(G)$.}
\label{alg:similar_vertices}
\begin{algorithmic}[1]
\State For $M = 50\log n$, sample sets $S_0, S_1, \cdots, S_M \subseteq V(G)$, each of expected size $O(n^{1/2})$ by sampling every vertex
$v \in V$  independently with probability $p = n^{-1/2}$.
\State Run Dijkstra to/from each vertex $v\in S_i$ for every $1 \leq i \leq M$.
\State If there exists a vertex $v \in \cup_{1 \le i \le M}{S_i}$ such that $v$ is on a cycle of length $3R$ then return the shortest such cycle.
    \label{line:returncycledetected}
\For{every vertex $v \in V$}
	\State Set $T_0(v) \gets \{s \in S_0 \mid d(v, s) \le R/2 \}$.
	\For{$i = 1 , \dots , M$}
		\If{$|T_{i-1}(v)| > 100 \log{n}$}
		\State Let $R_{i-1}(v)$ be $100 \log{n}$ vertices chosen independently at random from $T_{i-1}(v)$
		\Else
		\State Let $R_{i-1}(v) = T_{i-1}(v)$.
		\EndIf
		\State
		$T_i(v) \gets \{ s \in S_{i} \mid d(v, s) \le R/2  \text{ and }
		d(s, t) \le 3R/2 \text{ for all } t\in \cup_{0 \le j \le i-1}{R_{j}(v)}\}$
   \EndFor
	\State  Compute a shortest path tree $T(v)$ up to depth $R/2$ keeping only vertices $s$ such that
            $d(s, t) \le 3R/2 \text{ for all } t\in \cup_{0 \le j \le M}{R_{j}(v)}\}$.
    \State   Set $A_v$ to be the set of vertices in $T(v)$.
    \label{line:GrowSmallBall}
	\EndFor
	\Return $A_v$ for all $v \in V$
\end{algorithmic}
\end{algorithm}

Next we  prove the correctness of our girth computation algorithm  $\girthapprox$ (\cref{algo:girth3}) and bound its running time. First we prove the following lemma which provides a fairly straightforward argument that the algorithm always outputs the correct result. The more challenging part of the analysis will be to bound its running time.

\begin{lemma}
\label{lem:correctness}
If $G$ contains a cycle of length at most $R$ then $\girthapprox(G,R)$ (\cref{algo:girth3}) returns a cycle of length at most $3R$.
\end{lemma}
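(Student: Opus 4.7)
The plan is to argue that whenever a cycle $C$ of length at most $R$ exists in $G$, every vertex of $C$ is captured in the sets produced by $\similarset$, so that the final induced-graph search in Line~\ref{line:search_union} recovers a short cycle through $v$. Correctness is purely deterministic: randomness enters only in the running-time analysis and in the early cycle-detection step of $\similarset$, which can only help us here.

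First I would dispose of the trivial cases: if either call to $\similarset$ in Lines~1--2 already returns a cycle of length at most $3R$, we are done. So assume both calls return sets $\{A_v^{\outset}\}_{v \in V}$ and $\{A_v^{\inset}\}_{v \in V}$. Fix a cycle $C$ of length at most $R$ in $G$ and any vertex $v \in C$. For every $c \in C$, traversing $C$ once gives $d(v,c) + d(c,v) \leq R$, so in particular $d(v,c) \leq R$, $d(c,v) \leq R$, and at least one of $d(v,c), d(c,v)$ is at most $R/2$.

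Next I would show that $d(v,c) \leq R/2$ implies $c \in A_v^{\outset}$. By Line~\ref{line:GrowSmallBall} of $\similarset$, the set $A_v^{\outset}$ consists of vertices $s$ with $d(v,s) \leq R/2$ that additionally satisfy $d(s,t) \leq 3R/2$ for every $t \in \bigcup_{0 \leq j \leq M} R_j(v)$. The distance condition $d(v,c) \leq R/2$ is immediate. For the filter condition, any such $t$ lies in some $T_j(v) \subseteq S_j$, which by construction forces $d(v,t) \leq R/2$; combining with the cycle bound $d(c,v) \leq R$, the triangle inequality gives
\[
d(c,t) \leq d(c,v) + d(v,t) \leq R + R/2 = 3R/2.
\]
Hence $c \in A_v^{\outset}$. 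The symmetric statement $d(c,v) \leq R/2 \Rightarrow c \in A_v^{\inset}$ follows by applying the identical argument to $\Grev$, in which $d_{\Grev}(x,y) = d_G(y,x)$, so outward distances in $\Grev$ become inward distances in $G$.

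Combining, every $c \in C$ lies in $A_v^{\outset} \cup A_v^{\inset}$, hence every edge of $C$ appears in the induced subgraph $G[A_v^{\outset} \cup A_v^{\inset}]$. Dijkstra from $v$ in this induced subgraph therefore discovers a cycle of length at most $R \leq 3R$ through $v$ in Line~\ref{line:search_union}, which Line~\ref{line:final_ball-search} returns. The one subtlety to handle carefully is that the filter $d(s,t) \leq 3R/2$ is a \emph{one-sided} condition, not a roundtrip condition; verifying that it admits every cycle vertex requires precisely the asymmetric triangle inequality $d(c,t) \leq d(c,v) + d(v,t)$ together with the cycle bound $d(c,v) \leq R$, and the analogous reasoning in $\Grev$ for the inball sets. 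Beyond this, the argument is essentially bookkeeping.
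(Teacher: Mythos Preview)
Your proposal is correct and follows essentially the same approach as the paper: you dispose of the case where $\similarset$ already returns a short cycle, then show that every cycle vertex $c$ with $d(v,c)\le R/2$ passes the filter $d(c,t)\le 3R/2$ via the triangle inequality $d(c,t)\le d(c,v)+d(v,t)\le R+R/2$, invoke symmetry for $A_v^{\inset}$, and conclude that $C\subseteq A_v^{\outset}\cup A_v^{\inset}$ so the induced search in Line~\ref{line:search_union} finds a cycle of length at most $R$. The paper's proof is the same argument, phrased slightly more tersely.
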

\begin{proof}
Assume $G$ contains a cycle $C$ of length at most $R$.
Let $v$ be a vertex in $C$.
If the algorithm returns a cycle in line \ref{line:returncycledetected} of $\similarset$ (\cref{alg:similar_vertices}) then since this cycle has length at most $3R$, the algorithm works as desired.

Consequently, we assume that this is not the case. Our goal is now to show that $A_v^\outset$ contains all vertices $c \in C$ such that $d(v,c)\leq R/2$ and that $A_v^\inset$ contains all vertices $c \in C$ such that $d(c,v)\leq R/2$. Since for all $c \in C$ either $d(v,c)\leq R/2$ or $d(c,v)\leq R/2$ this will imply that $C \subseteq A_v^\outset \cup A_v^\inset$ and therefore a cycle of length at most $R$ will be found in Line~\ref{line:search_union} of $\girthapprox(G,R)$ (\cref{algo:girth3}) and the algorithm works as desired. Further, note that it suffices to show that  $A_v^\outset$ contains all vertices $c \in C$ such that $d(v,c)\leq R/2$ as this will imply the desired claimed regarding $A_v^\inset$ by symmetry.

Consider the execution of $\similarset$ (\cref{alg:similar_vertices}) from Line~\ref{line:similar_search_1} of $\girthapprox(G,R)$ (\cref{algo:girth3}). Further, consider a vertex $t \in T_i(v)$ for some $0 \le i \le M-1$.
Recall that $d(v,t) \leq R/2$ (by definition and construction of $T_i(v)$).
Consider a vertex $c$ in $C$. As $v$ and $c$ are on a cycle of length $R$ we have $d(c,v) \leq R$ and therefore
$d(c,t) \leq d(c,v) + d(v,t) \leq 3R/2$ by triangle inequality.
It follows by construction that each vertex $c \in C$ with $d(v,c)\leq R/2$ will be added to $A_v$ as desired.
\end{proof}

With the correctness of $\girthapprox$ (\cref{algo:girth3}) established, in the remainder of this section we focus on analyzing its running time. To do this we will consider an invocation of $\similarset$ (\cref{alg:similar_vertices}) and both bound its running time and the size of the sets $A_v$ it computes.

Before setting up the proofs, for each vertex $v \in V$ we define
\[
G_0(v) = \{s \in V \mid d(v, s) \le R/2 \}
\]
and
\[
G_i(v) = \{s \in V \mid d(v, s) \le R/2 \text{ and }
d(s, t) \le 3R/2 \text{ for all } t\in \cup_{0 \le j \le i-1}{R_{j}(v)}\} ~.
\]
Notice that the distribution of
$T_i(v)$ is the distribution on vertices that results from taking each $s \in G_i(v)$ and including it in $T_i(v)$ with probability $p = 1/\sqrt{n}$.

Loosely speaking, the analysis of the running time is roughly as follows.
The main non trivial part is to show that the expected size of the sets $A_v^\inset$ and $A_v^\outset$ is $\tilde{O}(\sqrt{n})$.
This, together with the assumption that the maximum degree is $O(m/n)$, will imply that the running time of our algorithm is $\tilde{O}(m \sqrt{n})$.
We roughly speaking show the following for the set $A_v^\outset$ (similarly for the set $A_v^\inset$ ).
We want to claim that w.h.p. the sets $G_i(v)$ are decreasing by at least a constant factor until there is a set $G_i(v)$ of $\tilde{O}(\sqrt{n})$ size.
As $A_v^\outset$ is a subset of $G_M(v) \subseteq G_{i}(v)$, the claim follows.
Assume this is not case, i.e., there exists an index $i$ such that $|G_{i+1}(v)| > 0.8 |G_{i}(v)|$.
Note that by construction for every vertex $s$ in $G_{i+1}(v)$
all vertices in $R_{i}(v)$ are at distance at most $3R/2$ from it.
As $R_{i}(v)$ is a sampled set of $G_{i}(v)$, we can show that w.h.p. most vertices in $G_i(v)$ (say 0.9 fraction of them) are at distance at most $3R/2$ from $s$.
As $|G_{i+1}(v)| > 0.8 |G_{i}(v)|$, this means that this is also true for most vertices in $G_{i}(v)$.
That is, most vertices in $G_{i}(v)$ are at distance at most $3R/2$ to most of the other vertices in $G_{i}(v)$.
We show by counting argument that in this case there must be many pairs of vertices $u$ and $v$ such that $u,v \in G_i(v)$
and $d_G(u,v) \leq 3R/2$ and $d_G(v,u) \leq 3R/2$ (hence $u$ and $v$ are both on a cycle of length at most $3R$).
That is, w.h.p. $G_i(v)$ contains many vertices that are on cycles of length at most $3R$.
W.h.p. we can show that such a vertex will belong to $S_i(v)$ and therefore the algorithm will detect a cycle of length $3R$ and will not continue to computing the sets
$A_v^\outset$.

\begin{lemma}
\label{lem:ManyEdges}
Consider a vertex $v$, index  $i \in [M]$ such that $|G_{i}(v)| \geq 200 \sqrt{n} \log{n}$ and a vertex $u \in V$.
If there are less than $0.9|G_i(v)|$ vertices $s \in G_i(v)$ such that $d(u,s) \leq 3R/2$ then
with probability at least $1-2/n^{10}$, $u\notin G_{i+1}(v)$.
\end{lemma}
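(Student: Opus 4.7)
The plan is to exhibit, with the claimed probability, a ``bad witness'' $t \in R_i(v)$ satisfying $d(u,t) > 3R/2$. Since $R_i(v) \subseteq \bigcup_{0 \le j \le i} R_j(v)$, the defining condition of $G_{i+1}(v)$ is then violated at $u$, and so $u \notin G_{i+1}(v)$.

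To set this up, I would define the bad set $B := \{s \in G_i(v) : d(u,s) > 3R/2\}$; by hypothesis $|B| \ge 0.1\,|G_i(v)| \ge 20\sqrt{n}\log n$. The key observation is that by construction $T_i(v) = S_i \cap G_i(v)$, so both $|T_i(v)|$ and $|T_i(v)\cap B|$ are sums of independent Bernoulli variables (one per vertex of $G_i(v)$ or $B$) with inclusion probability $p = 1/\sqrt n$, with expectations $|G_i(v)|/\sqrt n \ge 200\log n$ and $|B|/\sqrt n \ge 20\log n$ respectively. Two applications of the multiplicative Chernoff bound therefore give, with probability at least $1 - 1/n^{10}$, that $|T_i(v)| = \Theta(|G_i(v)|/\sqrt n)$ and $|T_i(v)\cap B| = \Omega(|G_i(v)|/\sqrt n)$; in particular the ``bad fraction'' $|T_i(v)\cap B|/|T_i(v)|$ is at least some absolute constant $\alpha > 0$.

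Conditioned on this event I would analyze $R_i(v)$ in two cases. If $|T_i(v)| \le 100\log n$, then $R_i(v) = T_i(v)$ and already contains $\Omega(\log n) \ge 1$ elements of $B$. Otherwise $R_i(v)$ is an independent uniform sample of size $100\log n$ from $T_i(v)$ (independent of the randomness defining $T_i(v)$), so the probability it misses $B$ is bounded by $(1-\alpha)^{100\log n} \le 1/n^{10}$, possibly after enlarging the sample size by a constant factor to absorb $\alpha$. A union bound over the Chernoff failure event and the subsampling miss event yields $u \notin G_{i+1}(v)$ with probability at least $1 - 2/n^{10}$. The main (and essentially only) technical subtlety is threading the constants so that both error contributions drop below $1/n^{10}$: the hypothesis $|G_i(v)| \ge 200\sqrt n\log n$ is precisely what forces the expected size of $T_i(v)\cap B$ to be of order $\log n$ so that Chernoff applies, and the sample size $100\log n$ for $R_i(v)$ is chosen so that a random subsample inherits a constant bad fraction with high probability. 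Beyond this bookkeeping, everything follows directly from the definitions of $T_i(v)$, $R_i(v)$, and $G_{i+1}(v)$.
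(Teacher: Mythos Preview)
Your overall plan---exhibit a witness $t\in R_i(v)$ with $d(u,t)>3R/2$, hence $u\notin G_{i+1}(v)$---matches the paper exactly. The divergence is in how you analyze the subsampling that produces $R_i(v)$.

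The paper does \emph{not} track the bad fraction through $T_i(v)$. Instead it uses a one-line exchangeability argument: since $T_i(v)$ is obtained by including each element of $G_i(v)$ independently with probability $p$, and $R_i(v)$ is a uniform $100\log n$-subset of $T_i(v)$, the composed procedure is symmetric in the elements of $G_i(v)$. Hence, conditioned on $|T_i(v)|\ge 100\log n$, the set $R_i(v)$ is distributed as a uniform $100\log n$-subset of $G_i(v)$ itself. The probability that such a subset avoids $B$ entirely is then at most $(0.9)^{100\log n}\le n^{-10}$, directly from $|B|>0.1|G_i(v)|$. Only one Chernoff bound is needed (to guarantee $|T_i(v)|\ge 100\log n$), and the constants fall out immediately.

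Your route---two Chernoff bounds to pin down $|T_i(v)|$ and $|T_i(v)\cap B|$, then a miss bound $(1-\alpha)^{100\log n}$---is correct in structure but does not close with the stated constants. In the extreme case $|G_i(v)|=200\sqrt n\log n$, one has $\mathbb{E}|T_i(v)\cap B|$ as small as $20\log n$, and a multiplicative Chernoff bound of the form $\Pr[X\le (1-\delta)\mu]\le e^{-\delta^2\mu/2}$ then yields only about $n^{-2.5}$ for $\delta=1/2$; pushing $\delta$ toward $1$ kills the lower bound on $\alpha$ you need. To get $(1-\alpha)^{100\log n}\le n^{-10}$ you need $\alpha\gtrsim 0.095$, essentially the full $0.1$, and there is no slack to lose to concentration. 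Your hedge ``possibly after enlarging the sample size by a constant factor'' is not available: the sample size $100\log n$ is fixed by the algorithm under analysis, not a parameter you may tune in the proof. The clean fix is precisely the symmetry observation above, which bypasses $T_i(v)$ altogether.
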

\begin{proof}
Note that the distribution of obtaining $T_j(v)$  is equivalent to the distribution
of picking every vertex in $G_j(v)$ with probability $p$ for every $1 \le j \le M$.

We first show that with high probability $T_{i}(v)$ contains at least $100 \log{n}$ vertices (and therefore also $R_{i}(v)$).
As $|G_{i}(v)| \geq 200 \sqrt{n} \log{n}$ then
the expected size of $|T_{i}(v)|$ is at least $200 \log{n}$.
Therefore, by Chernoff Bound the probability that $|T_{i}(v)| \le 100 \log{n}$ is at most
$\left(\frac{e^{-1/2}}{1/2^{1/2}}^{100 \log{n}}\right) < 1/n^{10}$.

Assume this is indeed the case, that is, $T_{i}(v)$ contains at least $100 \log{n}$ vertices.
The set $R_{i}(v)$ is a sampled set of $100 \log{n}$ vertices from $T_{i}(v)$.
As the distribution of obtaining the set $T_{i}(v)$ is equivalent to distribution of picking every vertex in
 $G_{i}(v)$ with probability $p$ then the distribution of $R_{i}(v)$ is
 equivalent to picking $100 \log{n}$ vertices from $G_{i}(v)$ (every vertex in $G_{i}(v)$ has the same probability appearing in $R_{i}(v)$).
Consider a uniformly random vertex $s$ from $G_{i}(v)$.
With probability at least $1/10$ we have $d(u,s) > 3R/2$.
In other words with probability at most $9/10$ we have $d(u,s) \leq 3R/2$.
Therefore, the probability that for every vertex $s$ in $R_j(v)$ we have  $d(u,s) \leq 3R/2$
is at most $(9/10)^{100 \log{n}}\leq 1/n^{10}$.

The lemma follows by union bound over the events that either $|T_{i}(v)|$ is smaller than $100 \log{n}$ or for all $s \in R_i(v)$ we have $d(u,s) \leq 3R/2$.
\end{proof}

\begin{lemma}
\label{lem:SubsetShrinks}
If there exists a vertex $v$ and
an index $i$ such that
$|G_{i}(v)| \geq 200 \sqrt{n} \log{n}$ and $|G_{i+1}(v)| \geq 0.8 |G_{i}(v)|$
then with probability
at least $1-1/n^{8}$
there exists a vertex in $T_i(v)$ that is contained in a cycle of length at most $3R$.
\end{lemma}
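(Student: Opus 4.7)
The plan is to condition on the randomness used to construct $G_i(v)$ (namely $S_0, \dots, S_{i-1}$, which determines $R_0, \dots, R_{i-1}$ and hence $G_i(v)$), so that $G_i(v)$ becomes deterministic while the remaining randomness lives in $S_i$. Define $C \subseteq G_i(v)$ to be the (now deterministic) set of vertices of $G_i(v)$ that lie on some cycle of length at most $3R$ in $G$, and write $N = |G_i(v)| \geq 200\sqrt{n}\log n$. I would split on whether $|C| \geq 0.34 N$ or $|C| < 0.34 N$ and bound the failure probability in each case.

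In the large-$C$ case ($|C| \geq 0.34 N$): since $S_i$ is independent of $G_i(v)$ and samples each vertex independently with probability $p = n^{-1/2}$, the set $T_i(v) = S_i \cap G_i(v)$ includes each element of $C$ independently with probability $p$. Thus $P(T_i(v) \cap C = \emptyset) \leq (1-p)^{|C|} \leq \exp(-p|C|) \leq e^{-68 \log n} = n^{-68}$, since $p|C| \geq 0.34 \cdot 200 \log n = 68 \log n$. In particular the lemma's conclusion holds with more than enough room.

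In the small-$C$ case ($|C| < 0.34 N$): the strategy is to show $P(|G_{i+1}(v)| \geq 0.8 N) \leq 2/n^9$, which is the failure probability we need. Call a vertex $u \in V$ \emph{bad} if fewer than $0.9 N$ vertices $s \in G_i(v)$ satisfy $d(u,s) \leq 3R/2$. By \cref{lem:ManyEdges}, each fixed bad $u$ lies in $G_{i+1}(v)$ with probability at most $2/n^{10}$, so a union bound over the at most $n$ choices of $u$ yields that, with probability at least $1 - 2/n^9$, every vertex of $G_{i+1}(v)$ is good. It therefore suffices to prove the purely deterministic implication: if every $u \in G_{i+1}(v)$ is good and $|G_{i+1}(v)| \geq 0.8 N$, then $|C| \geq 0.34 N$, which contradicts the case assumption.

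This deterministic implication is the crux, and proceeds by double counting. Writing $A = G_{i+1}(v) \subseteq G_i(v)$ and $N_+(u) = \{s \in G_i(v) : d(u,s) \leq 3R/2\}$, goodness of $u \in A$ gives $|N_+(u) \cap A| \geq |N_+(u)| + |A| - N \geq 0.9 N + 0.8 N - N = 0.7 N$. Summing over $u \in A$, the number of ordered pairs $(u,w) \in A \times A$ with $d(u,w) \leq 3R/2$ is at least $|A| \cdot 0.7 N \geq 0.56 N^2$, and by an identical count (swapping the roles of the two coordinates) the number with $d(w,u) \leq 3R/2$ is also at least $0.56 N^2$. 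Inclusion--exclusion inside $A \times A$, whose size is at most $N^2$, then forces at least $0.12 N^2$ ordered pairs to satisfy both inequalities; each such pair produces a cycle $u \to w \to u$ of length at most $3R$, placing both $u$ and $w$ in $C$. Hence $|C \cap A|^2 \geq 0.12 N^2$, so $|C| \geq \sqrt{0.12}\,N > 0.34 N$, the desired contradiction. The main obstacle is keeping the constants tight enough through the chain of ManyEdges union bound, inclusion--exclusion, and Chernoff estimate so that the $1/n^8$ target survives; the combinatorial counting step itself is clean but has little slack.
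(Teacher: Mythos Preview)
Your proof is correct and follows the same overall strategy as the paper: invoke \cref{lem:ManyEdges} with a union bound to ensure (with probability $\geq 1-2/n^9$) that every vertex of $G_{i+1}(v)$ is $(v,i)$-dense, deduce via a counting argument that a constant fraction of $G_i(v)$ lies on cycles of length at most $3R$, and conclude that $T_i(v)$ hits such a vertex with high probability. The only notable difference is the counting step itself: the paper works in the full set $G_i(v)$ and uses a credit argument (each non-cycle vertex has combined in/out degree $< N$ in the auxiliary digraph $H$, which together with $|E(H)|\geq 0.71N^2$ forces at least $0.42N$ vertices onto short cycles), whereas you restrict to $A=G_{i+1}(v)$ and use inclusion--exclusion on ordered pairs to get $|C|>0.34N$. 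Both constants suffice, and your explicit case split on $|C|$ is just a cleaner formalization of what the paper's final union bound does implicitly.
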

\begin{proof}
Assume such a vertex $v$ and index $i$ exist.

We say that a vertex $u$ is $(v,i)$-dense if
there are at least $0.9|G_i(v)|$ vertices $s \in G_i(v)$ such that $d(u,s) \leq 3R/2$.

By union bound on all vertices $v\in V$
on \cref{lem:ManyEdges},
with probability at least $1-2/n^{9}$,
all vertices in $G_{i+1}(v)$  are $(v,i)$-dense.

As $G_{i+1}(v) \subseteq G_{i}(v)$ and $|G_{i+1}(v)| \geq 0.8|G_{i}(v)|$, we also have that  with probability at least $1-2/n^{9}$,
$0.8|G_{i}(v)|$ vertices in $G_{i}(v)$  are $(v,i)$-dense.
Assume this is indeed the case.

Imagine constructing the following directed graph $H$ whose set of vertices is $G_{i}(v)$ and set of edges is the following.
For every vertex  $u$ in $G_{i}(v)$ add an outgoing edge for every vertex $s$ such that $d(u,s) \leq 3R/2$.
Note that if there exists two edges in the graph $(u,s)$ and $(s,u)$ then both $u$ and $s$ are on a cycle of length at most $3R$.
We next show that by counting argument there are many vertices in $G_{i}(v)$ that are on a cycle of length at most $3R$.
Every $(v,i)$-dense vertex $u$ has $0.9|G_{i}(v)|$ outgoing edges in $H$.
There are at least $0.9|G_{i}(v)|$ $(v,i)$-dense vertices in $H$.
We get that the number of edges $E(H)$ is at least $0.71|G_{i}(v)|^2$, that is, $|E(H)| \geq 0.71|G_{i}(v)|^2$.

On the other hand let $\alpha$ be the fraction of vertices in $G_{i}(v)$ that do appear on a cycle of length at most $3R$.
For every edge in $H$ give a  credit of 1/2 for each of its endpoints vertices.
Note that every vertex $x$ that do not belong to a cycle of length at most $3R$ can get a credit of less than $|G_{i}(v)|/2$.
To see this, note that there is no other vertex with both incoming and outgoing edge to $x$ (as otherwise $x$ is on a cycle of length at most $3R$) so the total number of incoming and outgoing edges of $x$ is at most $|G_{i}(v)|-1 < |G_{i}(v)|$.
Hence, the total credit of $x$ is less than $|G_{i}(v)|/2$.
The total credit of a vertex $x$ that do participate in a cycle of length at most $3R$ is less than $|G_{i}(v)|$.
We get that the total credit of all vertices, which is also equal to the total number of edges in $H$, is
less than $\alpha |G_{i}(v)| |G_{i}(v)|/2 +  (1-\alpha) |G_{i}(v)|^2$.
It follows that
$0.71|G_{i}(v)|^2 \leq \alpha |G_{i}(v)| |G_{i}(v)|/2 +  (1-\alpha) |G_{i}(v)|^2$.
Straight forward calculation show that $\alpha < 0.58$ and thus $1-\alpha > 0.42$.
In other words, at least $0.42|G_{i}(v)|$ vertices in $G_{i}(v)$  belong to a cycle of length at most $3R$.

Next, we claim that w.h.p. there is such a vertex in $T_i(v)$.
Recall that the distribution of $T_i(v)$ is equivalent to picking every vertex in $G_i(v)$ with probability $p$.
Consider one vertex that participates in a cycle of length at most $3R$ the probability it does not belong to
$T_i(v)$ is $1-p$.
The probability that none of the $0.42|G_{i}(v)|$ vertices belong to $T_i(v)$ is at most
$(1-p)^{0.42|G_{i}(v)|} \leq (1-p)^{84 \log{n}/p} \leq 1/n^{10}$.

The lemma follows (as $ 1/n^{10} + 2/n^{9} < 1/n^{8}$ for large enough $n$).
\end{proof}

Finally, the following concludes the running time of our algorithm.

\begin{lemma}
\label{lem:totaltime}
The expected running time of \cref{algo:girth3} is $O(m\sqrt{n}\log{n}+n\sqrt{n}\log^3{n}) = \tilde{O}(m \sqrt{n})$.
\end{lemma}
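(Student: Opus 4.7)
The plan is to decompose the runtime of $\girthapprox$ (and therefore of the inner $\similarset$ invocations) into five contributions and bound each one, invoking the degree reduction of \cref{lemma:regular} so that we may assume the maximum in- and out-degree is $O(m/n)$.

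First, I would bound the global shortest-path work. Sampling each $S_i$ takes $O(n)$ time, and the expected total size $\sum_{i=0}^M \E|S_i| = O(\sqrt{n}\log n)$. Running Dijkstra both to and from every vertex of $\bigcup_i S_i$ costs $O(m + n\log n)$ per source, contributing $O(m\sqrt{n}\log n + n^{3/2}\log^2 n)$ in expectation. Any cycle through a sampled vertex of length $\le 3R$ is detected at no asymptotic overhead, and all the pairwise distances needed by the rest of the algorithm are cached from these runs.

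Second, I would bound the cost of maintaining the sets $T_i(v)$ and $R_i(v)$. For each $v$ and each $i\in[M]$ the algorithm scans $s\in S_i$ (expected $O(\sqrt{n})$ vertices) and checks, per $s$, at most $\sum_{j<i}|R_j(v)| = O(\log^2 n)$ stored distances. Summing over $v$, $i$, and $s$ yields $O(n \cdot M \cdot \sqrt{n} \cdot \log^2 n) = O(n^{3/2}\log^3 n)$ expected work, accounting for the second additive term in the lemma statement.

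Third, and this is the crux, I would bound $|A_v|$. Since $A_v \subseteq G_M(v)$ and the sets $G_i(v)$ are monotone non-increasing in $i$, it suffices to show $|G_M(v)| = O(\sqrt{n}\log n)$ with high probability under the event that no cycle of length $\le 3R$ has been detected in \cref{line:returncycledetected}. By a union bound over the $n \cdot M$ pairs $(v,i)$ applied to \cref{lem:SubsetShrinks}, with probability at least $1-1/n^6$ the following holds: for every $v$ and every $i$ with $|G_i(v)| \ge 200\sqrt{n}\log n$, either a short cycle would have been detected or $|G_{i+1}(v)| < 0.8\,|G_i(v)|$. Starting from $|G_0(v)| \le n$ and iterating, after $M = 50\log n$ rounds the size either drops below the $200\sqrt{n}\log n$ threshold or shrinks by a factor $0.8^M \ll 1/n$; in both cases $|G_M(v)| = O(\sqrt{n}\log n)$.

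Fourth, I would charge the cost of building $T(v)$ in \cref{line:GrowSmallBall}: a Dijkstra that explores only vertices destined to end up in $A_v$ costs $O(|A_v|\cdot m/n + |A_v|\log n)$ by the degree assumption, so summing over $v$ gives $O(n \cdot \sqrt{n}\log n \cdot m/n) = O(m\sqrt{n}\log n)$. The same bound applies to the symmetric invocation of $\similarset$ on $\Grev$ that produces the $A_v^{\inset}$. Fifth, the Dijkstra in $G[A_v^{\outset}\cup A_v^{\inset}]$ inside $\girthapprox$ (\cref{line:search_union}) explores $O(\sqrt{n}\log n)$ vertices and hence $O(\sqrt{n}\log n \cdot m/n)$ edges, for a total of $O(m\sqrt{n}\log n)$ across all $v$. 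Adding the five contributions gives the claimed $O(m\sqrt{n}\log n + n^{3/2}\log^3 n)$ expected runtime.

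The main obstacle is the third step: one must convert the per-pair high-probability guarantee of \cref{lem:SubsetShrinks} into a uniform structural bound on $|G_M(v)|$ for every $v$ simultaneously. The failing $1/n^{8}$ probability per application is tight enough to survive the union bound over $O(n\log n)$ pairs $(v,i)$, and because in the failing event $|A_v|$ is deterministically at most $n$ and each Dijkstra in the algorithm costs at most $O(mn)$, the contribution of the failing event to the expected running time is $O(mn)\cdot n^{-6} = o(1)$, so the high-probability bound suffices for the expectation.
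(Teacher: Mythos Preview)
Your proposal is correct and follows essentially the same approach as the paper's proof: both bound the Dijkstra work from the sampled sets, the cost of building the $T_i(v)$'s, and the ball-growing step by arguing that $|G_M(v)|=O(\sqrt{n}\log n)$ via the geometric shrinkage guaranteed by \cref{lem:SubsetShrinks}, and both absorb the low-probability failure event into the expectation via the crude $O(mn)$ worst-case bound. Your write-up is in fact somewhat more explicit than the paper's---you separate out the union bound over $(v,i)$ pairs and you account for the final Dijkstra inside $G[A_v^{\outset}\cup A_v^{\inset}]$ (step~5), which the paper leaves implicit---but the logical structure is the same.
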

\begin{proof}
Consider one of the executions of $\similarset$ (\cref{alg:similar_vertices}) by $\girthapprox$ (\cref{algo:girth3}). This algorithm computes Dijkstra to/from each vertex $w\in S_i$ for every $1 \leq i \leq M$ in $O(m+n\log{n})$ time.
The expected size of each $S_i$ is $O(n^{1/2})$. Thus, the expected time of this computation for $S_i$ is
$O(m\sqrt{n}+n\sqrt{n}\log{n})$. There are $O(\log{n})$ sets $S_i$ and therefore there is at most
$O(m\sqrt{n}\log{n}+n\sqrt{n}\log^2{n})$ expected time for the computation of all Dijkstra's.
Next, for every vertex $v$ the algorithm computes the sets $T_i(v)$ for every $i\in [M]$.
The set $T_0(v)$ can be computed easily in $O(|S_0|)$ time which is $O(n^{1/2})$ in expectation.
In order to compute $T_i(v)$ for $i >0$, the
algorithm considers every vertex $s \in S_i$ and it check if $s$ is at distance at most $3R/2$ from
every vertex in $t\in \cup_{j\in[0,...,i-1]}{R_{j}(v)}$.
There are $O(\log^2{n})$ vertices $t$ in $\cup_{j\in[0,...,i-1]}{R_{j}(v)}$.
The distance $d(s,v)$ is already computed and thus can be retrieved in $O(1)$ time.
Overall, computing the set $T_i(v)$ takes $O(n^{1/2} \log^2{n})$ in expectation.
Therefore, $O(n^{1/2} \log^3{n})$ for all indices $i \in [M]$.
Hence, for all vertices $v$ $O(n^{3/2} \log^3{n})$ expected time for this part.

Next, we bound the cost of computing the balls, $A_v$, and we bound their size. By a slight abuse of notation we call a vertex $s$ $(v,M)$-dense if it satisfies \[ d(v, s) \le R/2 \text{ and }
d(s, t) \le 3R/2 \text{ for all } t\in \cup_{j\in[0,...,M]}{R_{j}(v)}.\]
The algorithm grows a ball from every vertex $v\in V$ by only keeping vertices $s$
that are $(v,M)$-dense to compute $A_v$.

We first show that if there is no index $i$ such that $|G_{i}(v)| \geq 200 \sqrt{n} \log{n}$ and
$|G_{i+1}(v)| \geq 0.8 |G_{k}(v)|$
then the expected time to compute the ball of $v$ is $O(n^{1/2}\log^3{n} +  n^{1/2} \log{n} \cdot \delta)$.
We do that by showing that the expected number of vertices
in $G_M(v)$ is $O(n^{1/2} \log{n})$.
As the maximum degree in $G$ is $O(\delta)$
and checking if a vertex $s$ is $(v,M)$-dense takes $O(\log^2{n})$ time, then the claim follows.

As for every $i$ such that $|G_{i}(v)| \geq O(\sqrt{n}\log{n})$ we have
$|G_{i+1}(v)| \leq 0.8 |G_{i}(v)|$
then straight forward calculation shows that there exists an index $M' \in[1..M]$ such that
$|G_{M'}(v)| < O(\sqrt{n}\log{n})$.
Note that the ball of $v$ contains only vertices from $G_{M'}(v)$ and thus the claim follows.

We now assume that there exists a vertex $v$ and index $i$ such that $|G_{i}(v)| \geq O(\sqrt{n} \log{n})$ and
$|G_{i+1}(v)| \geq 0.8 |G_{i}(v)|$.
By claim \ref{lem:SubsetShrinks} in this case with probability at least  $1-1/n^8$ the algorithm finds a cycle of length $3R$ and returns it in Line \ref{line:returncycledetected}. Therefore, in this case the algorithm does not compute the balls in Line \ref{line:GrowSmallBall}.
With probability at most $1/n^8$ the algorithm does not find a cycle in Line \ref{line:returncycledetected} and therefore continues to computing the balls in Line \ref{line:GrowSmallBall}. The computation of all balls in Line \ref{line:GrowSmallBall} is bounded by $O(mn)$ in this case.
As this happens with very small probability this does not effect the asymptotic bound of the expected running time.
The lemma follows.
\end{proof}

We conclude this section with the proof of \cref{thm:3girth}.

\begin{proof}[Proof of \cref{thm:3girth}]
The algorithm calls Algorithm $\girthapprox$ using a binary search
on the range $[1, nW]$ to find a parameter $R$ such that Algorithm $\girthapprox$ returns a cycle (of length at most $3(R+1)$) when invoked on $R+1$
but not on $R$.
As mentioned above the dependency on $\log{nW}$ can be improved to $\log{n}$
using the method used in \cite{PRSTV18} (Section 5.1).
Roughly speaking this method constructs in $O(m \log{n})$ time a set of graphs such that the number of vertices in all these graphs together is $O(n \log{n})$, the
number of edges is $O(m \log{n})$, the
ratio between the maximum edge  weight and the minimum edge weight in all these graphs is $O(n)$ and the shortest cycle is contained in one of these graphs. Instead of running binary search on $G$, we run it in each of these graphs.

Now using  \cref{lem:correctness} and \cref{lem:totaltime} the theorem follows.
\end{proof}
We give our result on constant approximation roundtrip spanners in $\O(m\sqrt{n})$ time and show \cref{thm:const_spanner} in \cref{sec:constspanner}.

\section{Deterministic $O(k \log \log n)$ Approximation Algorithms}
\label{sec:algo}
In this section we present our deterministic algorithms for computing a $O(k \log \log n)$ approximation to the girth and computing $O(k \log \log n)$ multiplicative roundtrip spanners. Our main result will be showing how to compute improved roundtrip covers as defined originally in \cite{RTZ08}. Leveraging this result we will prove \cref{thm:girth} and \cref{thm:spanner}.

First, leveraging the definitions of balls in \cref{sec:prelim} we define roundtrip covers.  Intuitively, roundtrip covers are a union of balls of radius $kR$ such that if vertices $u, v \in V(G)$ satisfy $d(u \lra v) \le R$ then $u, v$ are both in some ball in the cover.
\begin{definition}[Roundtrip Covers]
A collection $C$ of balls is a \emph{$(k, R)$ roundtrip cover} of a weighted directed graph $G$ if and only if every ball in $C$ has radius at most $kR$, and for any $u, v \in V(G)$ with $d(u \lra v) \le R$ there is a ball $B \in C$ such that $u, v \in B$.
\end{definition}
Specifically, we show the following theorem.
\begin{theorem}[Improved Roundtrip Covers]
\label{thm:cover}
For an $n$-vertex $m$-edge graph $G$, an execution of $\RoundtripCover(G, k, R)$ returns a collection $C$ of balls that forms a $(O(k\log\log n), R)$ roundtrip cover of a weighted directed graph $G$ in time $m^{1+O(1/k)}$ where $\sum_{B \in C} |V(B)| = n^{1+O(1/k)}.$
\end{theorem}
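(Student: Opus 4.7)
The plan is to prove Theorem~\ref{thm:cover} by induction on the size of the working subgraph, following the recursive structure sketched in Section~\ref{sec:approach_overview}. $\RoundtripCover(G, k, R)$ iteratively selects a vertex $v$, grows its inball $\bin_v(\cdot)$ and outball $\bout_v(\cdot)$ via Dijkstra, and invokes $\GoodCut$ to find a radius $r$ at which the outer ball $B_v(r+R)$ has size at most a controlled multiple of the inner ball $B_v(r)$. It then forms a ball $B$ corresponding to $B_v(r+R)$, recursively computes a roundtrip cover on $G[B]$, appends those balls to $C$, and deletes $B_v(r)$ from the working subgraph before continuing.

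First, I would establish coverage inductively: for any $u, v$ with $d_G(u \lra v) \le R$, some ball in $C$ contains both. A vertex $u$ leaves the working subgraph only when a ball-grow step centered at some $w$ places $u$ in an inner region $B_w(r)$; the $R$-thick buffer guaranteed by recursing on $B_w(r+R)$ ensures that any $v$ with $d(u \lra v) \le R$ lies in $B_w(r+R)$ as well, so applying the inductive hypothesis to the recursive call on $G[B_w(r+R)]$ produces a ball in $C$ covering both.

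The main obstacle is the combined stretch and size analysis. $\GoodCut$ must succeed within radius $O(kR)$ at each level (otherwise the ball size would multiply past $n$), so the per-level stretch is $O(kR)$ and the total stretch is $O(kR) \cdot D$ where $D$ is the recursion depth. For the size bound, I would charge each outer ball's vertices against the deleted inner vertices, with $\GoodCut$'s inequality yielding a per-level blow-up factor $\beta$, so the total size is at most $n \cdot \beta^D$. The right calibration sets $\beta$ roughly $n^{O(1/(kD))}$ and simultaneously forces the subproblem size to shrink doubly-exponentially (e.g., $n' \mapsto (n')^{1-\Omega(1)}$), capping $D = O(\log \log n)$ and yielding stretch $O(k \log \log n) R$ and size $n^{1+O(1/k)}$. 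Codifying this quantitative balance, essentially the content of $\GoodCut$ and its companion lemma, is the technical heart of the proof.

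Finally, the runtime follows by the same accounting: per-level Dijkstras cost $\tilde{O}(m)$ times a factor matching the size blow-up, so summing over $O(\log \log n)$ recursion levels yields $m^{1 + O(1/k)}$, with polylogarithmic overheads absorbed into the $1/k$ exponent.
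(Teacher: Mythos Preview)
Your proposal has a genuine gap: you locate the $\log\log n$ factor in the recursion depth, but in the paper it comes entirely from the radius at which a single ball-growing episode terminates. Concretely, $\GoodCut$ does not test the naive condition $|V(B_{i+1})| \le n^{1/k}|V(B_i)|$ (which would succeed within $O(k)$ steps); it tests the sharper $|V(B_{i+1})| \le |V(B_i)|^{(k-1)/k} n^{1/k}$ together with $|V(B_{i+1})| \le \tfrac34 n$ and an analogous edge condition. Lemma~\ref{lemma:inbound} shows this triggers within $O(k\log\log n)$ radius increments, and \emph{that} is the source of the stretch $O(k\log\log n)R$. The stretch does not multiply across recursion levels: the recursive call on the outer ball uses the same $R$, and any ball it outputs already has radius $O(k\log\log n)R$ regardless of depth.

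Correspondingly, subproblems do not shrink doubly-exponentially. The only shrinkage guaranteed is the constant-factor bound $|V(B_2)| \le \tfrac34 n$ in $\GoodCut$ (together with the case in line~\ref{line:bothbig}, which you omit, where both in- and out-balls exceed $\tfrac34 n$, so their intersection has size $\ge n/2$ and a single roundtrip ball $B_v(2r+R)$ is added directly). Hence the recursion depth is $O(\log n)$, not $O(\log\log n)$. The size bound is then a direct induction showing total ball volume $\le 10\, n^{k/(k-1)}$ via the $\GoodCut$ inequality, and the runtime bound must track edges separately (this is why $\GoodCut$ carries edge conditions): one shows the total edge count at recursion depth $\ell$ is at most $(1+\tfrac{2}{k})^\ell m^{k/(k-1)}$, which over $O(\log n)$ levels is $m^{1+O(1/k)}$. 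Your proposed tradeoff of per-level radius $O(kR)$ with doubly-exponential shrinkage is not achievable: the naive ratio condition allows outer balls of size $\Theta(n)$, so subproblems need not shrink at all.
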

To show \cref{thm:girth} from \cref{thm:cover}, we can compute $(k,2^i)$ roundtrip covers for all $0 \le i \le O(\log n)$, and set our girth estimate as the minimum radius of any ball in the cover that has a cycle. To compute a roundtrip spanner, simply take the union of all the balls in the $(k,2^i)$ roundtrip covers for all $i = O(\log n).$

The rest of the section is organized as follows. In \cref{sec:mainalgo} we state our main algorithm. In \cref{sec:analysis} we analyze the algorithm and prove \cref{thm:cover}. In \cref{sec:mainappl} we use \cref{thm:cover} to formally prove \cref{thm:girth} and \cref{thm:spanner}.
\subsection{Technical Overview}
\label{sec:4overview}
We focus on unweighted directed graphs $G$ and for a parameter $R$, construct a roundtrip spanner $H$ so that if the roundtrip distance between $u$ and $v$ is at most $R$ in $G$, then their roundtrip distance is at most $O(Rk\log\log n)$ in $H$.

Our approach is based on growing inballs and outballs in the graph $G$. Fix a vertex $v$, and let $\bin_i, \bout_i$ denote the inball and outball of radius $iR$ around $v$, and let $|\bin_i|, |\bout_i|$ denote the number of vertices in the balls and fix $d = O(k \log\log n).$ We start by growing and inball and outball around $v$. First, if $|\bin_d \cap \bout_d| \ge \frac{n}{2}$, then we can build a roundtrip ball of radius $2dR+R$ and delete $\bin_d \cap \bout_d$ from our graph. This is safe essentially by our observation above. Otherwise, we find an index $i$ such that $|\bin_{i+1}|$ isn't much larger than $|\bin_i|$, we recursively build a roundtrip cover on $\bin_{i+1}$ and then delete $\bin_i$. This is safe to do by our observation above. Similarly, if there is an index $i$ such that $|\bout_{i+1}|$ isn't much larger than $|\bout_i|$, we recursively build a roundtrip cover on $\bout_{i+1}$ and then delete $\bout_i$. Through standard ball cutting inequalities we can show that such an index $i$ exists (\cref{lemma:inbound}). We would like to elaborate on a few points. First, when we compare the sizes of $|\bin_{i+1}|$ and $|\bin_i|$, we compare both the number of vertices \emph{and} edges, the former to control the size of the roundtrip spanner constructed, and the latter to control runtime. Second, we grow the inball and outball at the same rate, i.e. we alternately add an edge at a time to the inball and outball to maintain that the work spent on each is the same.

\subsection{Main Algorithm}
\label{sec:mainalgo}
We first give a high-level description of our algorithm for computing Roundtrip Covers, \RoundtripCover, which is presented formally as \cref{algo:roundtripcover}.
\paragraph{High-level Description of Algorithm.} As discussed in \cref{sec:approach_overview,sec:4overview}, our algorithm is based on ball growing along with the following observation: if for a radius $r'$
we compute a roundtrip cover of $\bin_v(r'+R)$ and add all the balls in the computed roundtrip cover on $\bin_v(r'+R)$ to our final cover, then we can safely delete all vertices $u \in \bin_v(r')$ from our graph and recurse on the rest of graph; the deleted vertices are already satisfied in the sense that for every $u' \in V(G)$ with $d(u \lra u') \le R$ there is a ball $B$ in the cover such that $u, u' \in B$.
Indeed, if $u \in \bin_v(r')$ and $d(u \lra u') \le R$ then $u, u' \in \bin_v(r'+R)$ and therefore we are guaranteed that the roundtrip cover on $\bin_v(r'+R)$ contains a ball $B$ such that $u,u' \in B$.
Using this observation, we grow inballs and outballs around vertices in our graph $G$ to ``partition" our graph into pieces that possibly overlap, where the overlap corresponds to the boundary $\bin_v(r'+R) \backslash \bin_v(r')$ in our example.

We describe our algorithm in more detail now. Consider any vertex $v$. We grow an inball and outball around $v$ at the same rate, spending the same time on the inball and outball. First, we consider the case that $|V(\bin_v(r))|, |V(\bout_v(r))| \ge \frac{3n}{4}$ for some $r = O(Rk \log\log n)$, as was done in Pachocki~\etal~\cite{PRSTV18}. Then we know that $|V(\bin_v(r)) \cap V(\bout_v(r))| \ge \frac{n}{2}.$ By our observation above, we can add the ball $B_v(2r+R)$ to our roundtrip cover, delete $\bin_v(r) \cap \bout_v(r)$ from $G$, and recurse on the remainder. Otherwise, if we find a radius $r'$ such that say $\bin_v(r')$ and $\bin_v(r'+R)$ satisfy the conditions of \GoodCut~(\cref{algo:goodcut}), then we recurse on $\bin_v(r'+R)$ and delete $\bin_v(r')$ from our graph and recurse on the remaining graph. This is safe to do by our observation above. We can also do an analogous process on $\bout_v(r')$ and $\bout_v(r'+R).$ By a variant of the standard ball-growing inequality (\cref{lemma:inbound}) we can show that a good cut always exists.

We now will give some intuition about the condition in \GoodCut~and the (somewhat strange) appearance of the $O(\log\log n)$ in our algorithm. First, we remark that the condition in \GoodCut~must track both the number of vertices and edges in the ball: the former to control recursion depth and roundtrip cover size, and the latter to control runtime.  Now we give intuition for why we require an $O(k \log\log n)$ approximation factor in our algorithm. Consider growing inballs $\bin_v(r)$ from $v$ for various radii $r$, and recall that we make a cut depending on the relative sizes of $|V(\bin_v(r))|$ and $|V(\bin_v(r + R))|$. Now, note that if for example $|V(\bin_v(r))| = O(1)$, we can afford to have $|V(\bin_v(r + R))| = O(n^{1/k})$, as we can simply run a naive algorithm on $\bin_v(r + R)$ now.  On the other hand, if for example $|V(\bin_v (r))| = \Omega(n)$, we can essentially only afford to have $|V(\bin_v(r + R))| \le \left(1 + \frac{1}{k}\right)|V(\bin_v(r))|.$ To see the latter, note that the recurrence $T(m) = \left(1 + \frac1k\right)(T(m/2) + T(m/2))$ has solution $T(m) = m^{1+O(1/k)}.$  Now, interpolating between these two extremes allows us to compute the optimal way to do ball cutting (which is done in \GoodCut). This leads to a ball cutting procedure with $O(k\log\log n)$ levels, and thus results in an $O(k \log \log n)$ approximation ratio.
\begin{algorithm}[h!] 
\caption{\RoundtripCover$(G, k, R)$, takes a graph $G$ with $n$ vertices, $m$ edges, and parameters $k$ and $R$. Returns a $(O(k\log\log n), R)$ roundtrip cover $C = \{B_1, B_2, \dots, \}$}
\begin{algorithmic}[1]
\State $\iin, \iout \assign 0$.
\State $r \assign 5kR\log\log n$.
\State Take any $v \in V(G).$
\While{\textbf{true}} \emph{\textbackslash\textbackslash some condition below in lines \ref{line:bothbig}, \ref{line:goodcutin}, \ref{line:goodcutout} will trigger eventually}
	\If{$\min(|V(\bin_v((\iin+1) R))|, |V(\bout_v( (\iout+1) R))| \ge \frac{3n}{4}$} \label{line:bothbig}
		\State \label{line:cuthalf} \Return $\{B_v( 2r+R) \} \cup$ \RoundtripCover$(G\backslash (\bin_v( (\iin+1) R) \cap \bout_v( (\iout+1) R)), R, k)$.
	\EndIf
	\If{\GoodCut$(G, \bin_v(\iin R), \bin_v((\iin+1)R))$} \label{line:goodcutin}
		\State \label{line:cutin} \Return \RoundtripCover$(\bin_v( (\iin+1) R), R, k) \cup $\RoundtripCover$(G\backslash \bin_v( \iin R), R, k)$.
	\EndIf
	\If{\GoodCut$(G, \bout_v(\iout R), \bout_v((\iout+1)R))$} \label{line:goodcutout}
		\State \label{line:cutout} \Return \RoundtripCover$(\bout_v( (\iout+1) R), R, k) \cup $\RoundtripCover$(G\backslash \bout_v( \iout R), R, k)$.
	\EndIf
	\If{$|E(\bin_v( \iin R))| \le |E(\bout_v( \iout R))|$ or $|V(\bout_v( \iout R))| \ge \frac{3n}{4}$} \label{line:growin}
		\State $\iin \assign \iin + 1$
	\Else
		\State $\iout \assign \iout + 1$
	\EndIf
\EndWhile
\end{algorithmic}
\label{algo:roundtripcover}
\end{algorithm}
\begin{algorithm}[h!] 
\caption{\GoodCut$(G, B_1, B_2)$, takes a graph $G$ with $n$ vertices and $m$ edges, balls $B_1 \subseteq B_2$, and determines whether recursing on $B_2$ and then deleting $B_1$ from our graph is good progress}
\begin{algorithmic}[1]
\If{$|V(B_2)| \le \frac{3}{4}n$ and $|V(B_2)| \le |V(B_1)|^\frac{k-1}{k} n^\frac{1}{k}$ and \\ $|E(B_2)| \le \max((1+\frac{1}{k})|E(B_1)|, |E(B_1)|^\frac{k-1}{k} m^\frac{1}{k})$}
	\State \Return \textbf{true}
\Else
	\State \Return \textbf{false}
\EndIf
\end{algorithmic}
\label{algo:goodcut}
\end{algorithm}
\paragraph{Explanation of \cref{algo:roundtripcover}:} We now explain what each piece of \cref{algo:roundtripcover} is doing. Here, $\iin$ and $\iout$ track the radius of the inball and outball that we are growing. We grow the balls at the same rate. If we notice that at any point we are in position to make a good cut (see lines \ref{line:goodcutin}, \ref{line:goodcutout}) then we do so. Otherwise, we know that both balls will eventually contain many vertices (see line \ref{line:bothbig}). In this case, we add $B_v(2r+R)$ to our roundtrip cover, delete $\bin_v(\iin R) \cap \bout_v(\iout R)$ from our graph, and recurse. To grow the inball and outball at the same rate, we run Dijkstra to grow the inball and outball, alternately processing an edge at a time from the inball and outball. We check the condition of \GoodCut~on a ball when we have certified that we have processed all vertices up to distance $\iin R$ or $\iout R$ respectively.
\subsection{Analysis of \RoundtripCover~and proof of \cref{thm:cover}}
\label{sec:analysis}
In this section we prove \cref{thm:cover}, bounding the performance of our roundtrip cover algorithm \cref{algo:roundtripcover}. We start by showing that $1+\max(\iin, \iout) \le 5k\log\log n$ at all points in the algorithm, hence some condition in lines \ref{line:bothbig}, \ref{line:goodcutin}, \ref{line:goodcutout} will trigger eventually.
\begin{lemma}
\label{lemma:inbound}
At all points during \cref{algo:roundtripcover}, we have that $1+\max(\iin, \iout) \le 5k\log\log n.$
\end{lemma}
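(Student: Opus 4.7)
The plan is to bound $\iin \le 5k\log\log n - 1$; the bound on $\iout$ follows by the symmetric argument. Every $\iin$-increment happens only when \GoodCut$(G, \bin_v(\iin R), \bin_v((\iin+1)R))$ returned \textbf{false}, so at least one of the three conditions of \cref{algo:goodcut} must have been violated, namely (a) $|V(B_2)| > 3n/4$, (b) $|V(B_2)| > |V(B_1)|^{(k-1)/k} n^{1/k}$, or (c) $|E(B_2)| > \max((1+1/k)|E(B_1)|, |E(B_1)|^{(k-1)/k} m^{1/k})$. I would classify each $\iin$-increment by which of these was violated and bound each class separately.

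For case (b), set $a_j := \log(n / |V(\bin_v(jR))|)$; a violation yields $a_{j+1} < \frac{k-1}{k} a_j$. Starting from $a_0 \le \log n$ and using $a_j \ge \log(4/3) = \Omega(1)$ while $|V| \le 3n/4$, the pre-saturation phase admits at most $O(k \log\log n)$ such events. Once $|V| > 3n/4$, each further (b)-violation multiplies $|V|$ by at least $(n/|V|)^{1/k} \ge (4/3)^{1/k}$, admitting at most $O(k)$ more events before $|V|$ reaches $n$. For case (c), the two halves of the violation yield respectively $\log(m/|E_j|)$ contracting by $\frac{k-1}{k}$ per event ($O(k\log\log n)$ events until $|E_j|$ is within a constant factor of $m$) and $|E_j|$ growing by at least $1+1/k$ per event ($O(k)$ further events once $|E_j| = \Theta(m)$); the total is $O(k \log\log n)$.

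For case (a), once $|V(\bin_v(\iin R))| > 3n/4$ the in-ball has entered a saturated phase in which \GoodCut~on the in-ball automatically fails. In this phase the alternative $|V(\bout_v(\iout R))| \ge 3n/4$ in \cref{line:growin} is unavailable (else \cref{line:bothbig} would fire), so each $\iin$-increment requires $|E(\bin)| \le |E(\bout)|$; similarly the out-ball must remain pre-saturation, so $\iout$-increments in this phase are bounded, by the symmetric (b)/(c) analysis, by $O(k\log\log n)$. The balanced-growth rule of \cref{line:growin} then charges every $\iin$-increment in the saturated phase to either an $\iout$-increment or to a fresh (b)- or (c)-violation on the in-ball, which is absorbed into the earlier cases. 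The main obstacle is making this charging argument precise: ensuring that $\iin$ cannot be incremented repeatedly without either a corresponding $\iout$-increment or a new (b)- or (c)-violation on the in-ball. Summing the three cases and absorbing constants yields $1 + \iin \le 5k\log\log n$.
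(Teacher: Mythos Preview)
Your handling of cases (b) and (c) is essentially the paper's recurrence argument, and is fine. The gap is in case (a). In the saturated phase you propose to charge each $\iin$-increment to either an $\iout$-increment or a fresh (b)/(c)-violation on the in-ball, but this charging does not go through: with $\iout$ held fixed, $|E(\bout_v(\iout R))|$ is a single fixed threshold, and $|E(\bin_v(\iin R))|$ can remain below it through arbitrarily many consecutive $\iin$-increments. During those increments the in-ball need not exhibit any (b)- or (c)-violation---once $|V(\bin)|>3n/4$, both the vertex and the edge clauses of \GoodCut\ can be satisfied while (a) alone makes \GoodCut\ return \textbf{false}---so there is nothing to charge them to, and no corresponding $\iout$-increments are forced either. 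You flag exactly this as ``the main obstacle'' and leave it unresolved; as written, the saturated phase is simply unbounded in your argument.

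The paper's proof does not separate out case (a) at all. It asserts directly, citing the interaction of lines~\ref{line:bothbig}, \ref{line:goodcutin}, and~\ref{line:growin}, that every $\iin$-increment forces one of the two growth inequalities (your (b) or your (c)) on the in-ball, and then just counts: the vertex inequality can hold at most $2k\log 4\log n$ times before $|V(\bin)|$ exceeds $3n/4$, and the two edge inequalities together at most $2k\log 4\log n + k$ times. In particular, when the second disjunct of line~\ref{line:growin} holds, the failure of line~\ref{line:bothbig} forces $|V(\bin_v((\iin+1)R))|<3n/4$, which rules out (a) in that branch; the paper proceeds under this reading without isolating a saturated-phase sub-case. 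Your finer three-way decomposition is more cautious, but without a valid charging scheme it has reintroduced precisely the difficulty that the paper's formulation sidesteps.
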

\begin{proof}
We show $1 + \iin \le 5k\log\log n$, and the bound on $1 + i_{\iout}$ is analogous. To prove this we assume that none of the conditions in the inner loop of the algorithm trigger, and compute the resulting vertex and edge sizes of $\bin_v(\iin R)$ and $\bout_v(\iout R)$. To this end, assume that $|V(\bin_v( \iin R))| \le \frac{3n}{4}$ and $|E(\bin_v( \iin R))| \le m.$ By the conditions of lines \ref{line:bothbig}, \ref{line:goodcutin}, and \ref{line:growin} we know that each time we increment $\iin$ either
\begin{equation}\label{eq:vert} |V(\bin_v( (\iin+1)R))| \ge |V(\bin_v( \iin R))|^\frac{k-1}{k} n^\frac1k
\end{equation}
or
\begin{align}\label{eq:edge1}|E(\bin_v((\iin+1)R))| &\ge |E(\bin_v(\iin R))|^\frac{k-1}{k} m^\frac1k \text{ and }\\ \label{eq:edge2} |E(\bin_v( (\iin+1)R))| &\ge \left(1+\frac{1}{k}\right)|E(\bin_v( \iin R))|.
\end{align}
We first show that \cref{eq:vert} can only hold for $2k \log 4\log n$ values of $\iin$. To this end, define a sequence $\{x_i\}_{i \ge 0}$ as $x_0 = 1$ and $x_{i+1} = x_i^\frac{k-1}{k} n^\frac1k.$ By induction it follows that $x_i = n^{1 - \left(\frac{k-1}{k}\right)^i}.$ In particular, \[ x_{2k\log4\log n} = n^{1 - \left(\frac{k-1}{k}\right)^{2k\log 4\log n}} \ge \frac{3}{4}n. \] This shows that the condition in \cref{eq:vert} can only hold at most $2k \log 4\log n$ times. Similarly, after \cref{eq:edge1} holds for $2k \log 4\log n$ different $\iin$, we will have that $|E(\bin_v(\iin R))| \ge \frac{3m}{4}.$
At this point, \cref{eq:edge2} can hold at most $k$ times. This gives us that in total \[ 1+\iin \le 1 + 2k \log 4\log n + 2k \log 4\log n + k \le 5k \log\log n \] as desired.
\end{proof}
Now we proceed to proving \cref{thm:cover}.
\begin{proof}[Proof of \cref{thm:cover}]
We first show that the algorithm indeed returns a $(O(k\log\log n), R)$ roundtrip cover. Then we bound the total size of balls in the roundtrip cover, as well as the runtime.
\paragraph{Returns a $(O(k\log\log n), R)$ roundtrip cover.} We analyze lines \ref{line:cuthalf}, \ref{line:cutin}, and \ref{line:cutout}. In line \ref{line:cuthalf}, note that by \cref{lemma:inbound}, we know that $(\iin+1)R, (\iout+1)R \le r.$ Therefore, we know that $\bin_v( \iin R) \cap \bout_v( \iout R) \subseteq B_v( 2r).$ Additionally, it is clear that for any vertex $u \in \bin_v( \iin R) \cap \bout_v( \iout R)$, if another vertex $u'$ satisfies $d(u \lra u') \le R$ then $u' \in B_v( 2r+R).$ Therefore, the ball $B_v( 2r+R)$ contains both $u$ and $u'$, so we can safely delete $\bin_v( \iin R) \cap \bout_v( \iout R)$ from $G$ and recurse. This is exactly what is happening in line \ref{line:cuthalf}. In line \ref{line:cutin}, note that for any vertex $u \in \bin_v( \iin R)$, if another vertex $u'$ satisfies $d(u \lra u') \le R$ then $u' \in \bin_v( (\iin+1)R).$ Therefore, if we construct a roundtrip cover on $\bin_v( (\iin+1)R)$, then we can safely delete $\bin_v( \iin R)$ from $G$ and recurse. This is exactly what occurs in line \ref{line:cutin}. The same argument now applies to line \ref{line:cutout}. Finally, note that all balls we create are of radius $2r+R = O(Rk \log\log n).$
\paragraph{Total sizes of balls is $n^{1+O(1/k)}$.} We show by induction that the total number of vertices among all balls in the rountrip cover computed is at most $10n^\frac{k}{k-1}$ for an input graph $G$ with $n$ vertices. We show this by analyzing lines \ref{line:cuthalf}, \ref{line:cutin}, and \ref{line:cutout}. For line \ref{line:cuthalf}, note that because $\min(|V(\bin_v( (\iin+1) R))|, |V(\bout_v( (\iout+1) R))|) \ge \frac{3n}{4}$, we know that $|V(\bin_v( (\iin+1) R)) \cap V(\bout_v( (\iout+1) R))| \ge \frac{n}{2}$.
Therefore, it suffices to verify \[ 2n + 10\left(\frac{n}{2}\right)^\frac{k}{k-1} \le 10n^\frac{k}{k-1} \] which is clear. For line \ref{line:cutin}, for simplicity let $s = |V(\bin_v( \iin R))|.$ Then by the condition of \GoodCut, it suffices to note that \begin{align*} &10|V(\bin_v( (\iin+1)R))|^\frac{k}{k-1} + 10(n-s)^\frac{k}{k-1} \le 10(s^\frac{k-1}{k} n^\frac{1}{k})^\frac{k}{k-1} + 10(n-s)^\frac{k}{k-1} \\ &\le 10s n^\frac{1}{k-1} + 10(n-s) n^\frac{1}{k-1} = 10n^\frac{k}{k-1}. \end{align*} The same argument now applies to line \ref{line:cutout}.
\paragraph{Can be implemented to run in time $m^{1+O(1/k)}$.} We can implement the algorithm to grow $\bin_v( \iin R)$ and $\bout_v( \iout R)$ at the same rate, i.e., we process a single inedge and outedge at a time, and increment $\iin$ and $\iout$ when we are sure that we've processed the whole inball or outball. This can be done with Dijkstra's algorithm. We stop growing a ball once it contains at least $\frac{3n}{4}$ vertices. This way, any time we recurse, the total amount of work we have done to this point is at most twice the number of edges in the piece we are recursing on in lines \ref{line:cuthalf}, \ref{line:cutin}, and \ref{line:cutout}. To bound the runtime, we imagine lines \ref{line:cutin} and \ref{line:cutout} as partitioning the graph into pieces of the form $\bin_v( \iin R)$ or $\bout_v( \iout R)$ and then recursing on $\bin_v( (\iin+1)R)$ or $\bout_v( (\iout+1) R)$. This way, the depth of the recursion is at most $O(\log n)$ because we know that $|V(\bin_v( (\iin+1)R))|, |V(\bout_v( (\iout+1) R))| \le \frac{3n}{4}$ when we recurse.

We will now show that the total number of edges in level $\ell$ of the recursion is bounded by $\left(1+\frac{2}{k}\right)^\ell m^\frac{k}{k-1}$, where the top level is level $0$. We proceed by induction on $\ell.$ Say that the algorithm partitions $G$ into $G = G_1 \cup G_2 \cup \dots \cup G_j,$ where each $G_i$ is either of the form $\bin_v( \iin R)$ or $\bout_v( \iout R).$ For simplicity, let $s_i = |E(G_i)|$ and let $t_i = |E(\bin_v( (\iin+1) R))|$ or $t_i = |E(\bout_v( (\iout+1) R))|$ corresponding to what $G_i$ was. We know by the condition of \GoodCut~that $t_i \le \max(\left(1+\frac{1}{k}\right)s_i, s_i^\frac{k-1}{k} m^\frac{1}{k}).$ By induction, we know that the total number of edges processed in level $\ell$ is at most \begin{align*} &\sum_i \left(1+\frac{2}{k}\right)^{\ell-1} t_i^\frac{k}{k-1} \le \left(1+\frac{2}{k}\right)^{\ell-1} \sum_i \max\left(\left(1+\frac{1}{k}\right)s_i, s_i^\frac{k-1}{k} m^\frac{1}{k}\right)^\frac{k}{k-1} \\ &\le \left(1+\frac{2}{k}\right)^{\ell-1} \sum_i \left(1+\frac{2}{k}\right)s_i m^\frac{1}{k-1} \le \left(1+\frac{2}{k}\right)^\ell m^\frac{k}{k-1} \end{align*} as $\sum_i s_i \le m$ obviously.

Now, it is clear that the total work done on a graph $G$ at some node of the recursion tree is $\tilde{O}(|E(G)|)$ as line \ref{line:cuthalf} only occurs $O(\log n)$ times. Now taking $\ell = O(\log n)$ in the above claim completes the proof.
\end{proof}
\subsection{Proofs of \cref{thm:girth} and \cref{thm:spanner}}
\label{sec:mainappl}
Both theorems follow easily from \cref{thm:cover}.

\begin{proof}[Proof of \cref{thm:girth}]
We first show the result for unweighted graphs. To show this, run \[ \text{\RoundtripCover}(G, O(k), 2^i) \text{ for } 0 \le i \le O(\log n). \] Now, set our estimate $g'$ of the girth to be the smallest radius of any nontrivial ball that we had in a roundtrip cover. By the guarantees of \RoundtripCover, it is clear that $g \le g' \le O(k\log\log n) \cdot g$ as desired. It is clear that the algorithm runs in time $\tilde{O}(m^{1+\frac1k})$ by \cref{thm:cover}. We can extend this to weighted graphs by instead taking $0 \le i \le O(\log nW)$, where $W$ is the maximum edge weight. This can be improved to $O(\log n)$ by the same method as done in \cite{PRSTV18}, where they give a general reduction by contracting small weight strongly connected components and deleting large weight edges (see Section 5.1 in \cite{PRSTV18} for more details).
\end{proof}
\begin{proof}[Proof of \cref{thm:spanner}]
We first show the result for unweighted graphs. It is easy to see that \[ \bigcup_{i = 0}^{O(\log n)} \text{\RoundtripCover}(G, O(k), 2^i) \] is an $O(k \log\log n)$ spanner with $\tilde{O}(n^{1+ 1/k})$ edges by \cref{thm:cover}. It is clear that the algorithm runs in time $\tilde{O}(m^{1+\frac1k}).$ The extension to weighted graphs follows as in the above paragraph (proof of \cref{thm:girth}).
\end{proof}

\section{An $O(k \log k)$ Approximation in $\O(m^{1+1/k})$ Time}
\label{sec:klogk_approx}
In this section we explain how to combine the ideas from \cref{algo:roundtripcover} and \cref{algo:girth3} to give an algorithm for $(O(k \log k), R)$-roundtrip covers with $\O(n^{1+ 1/k})$ edges in time $\O(m^{1 + 1/k})$. Then \cref{thm:const_girth} and \cref{thm:arb_const_spanner} follow from this in the same way that \cref{thm:girth} and \cref{thm:spanner} followed from \cref{thm:cover}.

\begin{theorem}[Improved Randomized Roundtrip Cover]
\label{thm:roundtripcover2}
For an $n$-vertex $m$-edge graph $G$, an execution of $\RoundtripCoverNew(G, k, R)$ returns a collection $C$ of balls that form a $(O(k\log k), R)$ roundtrip cover of a weighted directed graph $G$ in time $m^{1+O(1/k)}$ where $\sum_{B \in C} |V(B)| = n^{1+O(1/k)}.$
\end{theorem}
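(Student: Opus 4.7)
The plan is to define $\RoundtripCoverNew$ by modifying \cref{algo:roundtripcover} to incorporate the randomized similar-set machinery of \cref{alg:similar_vertices}, reducing the approximation ratio from $O(k\log\log n)$ to $O(k\log k)$ while preserving the $m^{1+O(1/k)}$ running time and $n^{1+O(1/k)}$ ball-mass bounds.

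The source of the $\log\log n$ factor in \cref{thm:cover} is transparent from the proof of \cref{lemma:inbound}: the iteration $x_{i+1} = x_i^{(k-1)/k} n^{1/k}$, starting from $x_0 = 1$, requires $\Theta(k\log\log n)$ steps to reach $\tfrac34 n$ because $((k-1)/k)^i$ must drop below $\Theta(1/\log n)$. In contrast, the same iteration reaches $n^{1-1/k}$ in only $\Theta(k\log k)$ steps since $((k-1)/k)^{\Theta(k\log k)} \le 1/k$. Our improvement will therefore come from setting $r = O(Rk\log k)$ in place of $r = O(Rk\log\log n)$ and using a randomized similar-set subroutine to guarantee that some stopping condition of \cref{algo:roundtripcover} triggers within $O(k\log k)$ ball-growing iterations, rather than $O(k\log\log n)$.

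Concretely, $\RoundtripCoverNew$ grows $\bin_v(\iin R)$ and $\bout_v(\iout R)$ as in \cref{algo:roundtripcover}, but whenever both $|V(\bin_v(\iin R))|$ and $|V(\bout_v(\iout R))|$ first exceed $n^{1-1/k}$ we invoke a generalization of $\similarset$ with sample probability $p = n^{1/k - 1}$ (replacing the $p = n^{-1/2}$ of \cref{alg:similar_vertices}) and $M = \Theta(\log n)$ rounds. The analogue of \cref{lem:SubsetShrinks}, proved by the same counting argument but with the threshold $200\sqrt n\log n$ replaced by $200 n^{1-1/k}\log n$, guarantees that with high probability either (i) a cycle of length at most $2r+R = O(Rk\log k)$ is directly discovered through a sampled vertex and added as a ball to the cover, allowing us to safely delete an $\Omega(n^{1-1/k})$-sized overlap region in the spirit of line~\ref{line:cuthalf}, or (ii) a candidate similar set of size $\tilde O(n^{1-1/k})$ is produced around $v$, on which we recurse in the spirit of lines \ref{line:cutin}, \ref{line:cutout}. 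In either case the total increment to $\iin + \iout$ before one of lines \ref{line:bothbig}, \ref{line:goodcutin}, \ref{line:goodcutout} (appropriately modified) triggers is $O(k\log k)$, and the inductive cut-and-recurse argument of \cref{thm:cover} shows that the resulting collection is an $(O(k\log k), R)$ roundtrip cover.

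The bounds $\sum_{B \in C} |V(B)| = n^{1+O(1/k)}$ and runtime $m^{1+O(1/k)}$ follow by the same ball-mass and edge-mass accounting as in the proof of \cref{thm:cover}, augmented by the observation that each randomized invocation on a subgraph with $n'$ vertices and $m'$ edges costs $\tilde O(m' (n')^{1/k})$, and these costs sum to $m^{1+O(1/k)}$ by convexity since the total vertex mass at each recursion level is $n^{1+O(1/k)}$. The main obstacle is the careful joint analysis of the interleaved deterministic and randomized recursions: one must rescale the sampling probability to $p = (n')^{1/k - 1}$ in each subproblem, verify via a union bound over all $m^{O(1/k)}$ invocations that the high-probability guarantees of the generalized \cref{lem:SubsetShrinks} hold simultaneously, and argue that every root-to-leaf path in the combined recursion tree has length $O(k\log k)$. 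This joint accounting is precisely where the insight from \cref{sec:algo} — that recursing on padded overlap regions composes safely with the overlap-pruning cuts of \cref{sec:constantapprox} — is essential.
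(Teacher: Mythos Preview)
Your high-level diagnosis is right: the $\log\log n$ comes from needing the iterate $x_i = n^{1-((k-1)/k)^i}$ to reach $\Theta(n)$, whereas reaching $n^{1-1/k}$ takes only $\Theta(k\log k)$ steps, and the way to cap the ball size at $n^{1-1/k}$ is a generalized $\similarset$ with sample probability $p = n^{1/k-1}$. But the organization you propose---grow standard in/out-balls until they exceed $n^{1-1/k}$, \emph{then} invoke $\similarset$---has a genuine gap in case~(ii), and it is not how the paper proceeds.

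The gap is this: in your case~(ii) you say ``a candidate similar set of size $\tilde O(n^{1-1/k})$ is produced around $v$, on which we recurse in the spirit of lines~\ref{line:cutin},~\ref{line:cutout}.'' But lines~\ref{line:cutin},~\ref{line:cutout} require a \emph{padded pair}: an inner ball $B_1$ to delete and an outer ball $B_2 \supseteq B_1$ to recurse on, with the guarantee that every $u \in B_1$ with $d(u\lra u')\le R$ has $u' \in B_2$. The output $A_v$ of $\similarset$ is a single set, not a nested pair, and it only certifies that vertices on a length-$R$ cycle \emph{through $v$} lie in $A_v^{\inset}\cup A_v^{\outset}$; it says nothing about cycles through other vertices of $A_v$. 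So after recursing on $A_v$ you can safely delete only $\{v\}$, which is far too little progress. You never say what $B_1$ is, and there is no obvious candidate. (Your case~(i) is less problematic but also underspecified: which $\Omega(n^{1-1/k})$ vertices are deleted, and why is each of them resolved by the ball you added?)

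The paper fixes this by reversing the order: it runs the similarity preprocessing \emph{first} (\BuildSimilar, after reducing to approximately regular graphs via \cref{lemma:regular}) to obtain a data structure $D$, and then grows a nested family $E_v^0\subseteq E_v^1\subseteq\cdots\subseteq E_v^K$ where $E_v^i$ consists of vertices reachable from $v$ that pass an $i$-parametrized similarity test against $D$. The crucial properties (\cref{lemma:ballgrowworks}) are exactly the padding conditions: $v\in E_v^0$, $E_v^i\subseteq E_v^{i+1}$, and $u\in E_v^i,\ d(u\lra u')\le R \Rightarrow u'\in E_v^{i+1}$. Meanwhile $|E_v^K|\le n^{(k-1)/k}$ with high probability (\cref{lemma:Ksizebound}), so the \GoodCut-style argument terminates within $K=O(k\log k)$ levels. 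This gives a bona fide pair $(E_v^i, E_v^{i+1})$ to cut on.

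Finally, your last sentence conflates two different depths: the recursion tree of $\RoundtripCoverNew$ has depth $O(\log n)$ (each call shrinks $n$ polynomially), while $O(k\log k)$ is the number of ball-growing levels \emph{within} a single call. The claim ``every root-to-leaf path in the combined recursion tree has length $O(k\log k)$'' is not what is needed and not what is true.
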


The remainder of the section is organized as follows. We first give an overview for our approach, which combines the complementary approaches of sections \cref{sec:algo} and \cref{sec:constantapprox}. We then state our main algorithm, \cref{algo:roundtripcover2}. Afterwards, we analyze \cref{algo:roundtripcover2} to prove \cref{thm:roundtripcover2} in \cref{sec:analysis2}. Finally, we apply \cref{thm:roundtripcover2} to prove \cref{thm:const_girth} and \cref{thm:arb_const_spanner}.

\paragraph{Overview of approach.} Throughout this section, we assume that we have applied \cref{lemma:regular} to make our graph $G$ approximately regular. Here we give a high level overview for the ideas behind the algorithm. Let $G$ be an $n$-vertex $m$-edge graph and let $K \defeq 10k \log k$ for integer $k$. We start by generalizing \cref{algo:girth3} and \cref{alg:similar_vertices} slightly, where we consider the case where the sampled sets $S_i$ have size $\O(n^{1/k})$ instead of $\O(n^{1/2}).$ To elaborate, we first view \cref{algo:girth3} and \cref{alg:similar_vertices} as algorithms with the following guarantees. They add $\O(n^{3/2})$ edges towards a spanner, and then for each vertex $v$ which is not yet in a cycle of length $4R$ using the current spanner edges builds a data structure $D_v$ (corresponding to \cref{alg:similar_vertices}) which certifies that for all but at most $O(n^{1/2})$ other vertices $u$ we have that $d(v \lra u) > 4R$. We can generalize this as follows. There is a corresponding algorithm (\cref{algo:buildsimilar}) which has the following guarantees. It adds $\O(n^{1+1/k})$ edges towards a spanner, and then for each vertex $v$ which is not yet in a cycle of length $2KR$ using the current spanner edges builds a data structure $D_v$ which certifies that for all but at most $O(n^\frac{k-1}{k})$ other vertices $u$ we have that $d(v \lra u) > KR$.

After running this generalized algorithm (\cref{algo:buildsimilar}), for a vertex $v$, we can define \emph{$i$-similar vertices} to $v$, which are intuitively the vertices that the data structure $D_v$ thinks could still possibly be in a cycle of length $kR$ with $v$ and which are within distance $iR$ of $v$. Then we define a sequence $E_v^0, E_v^1, \cdots, E_v^K$ of ``balls" centered at $v$, where $E_v^i$ is the outball from $v$ consisting of $i$-similar vertices. The following important conditions hold: $v \in E_v^0,$ and $E_v^i \subseteq E_v^{i+1}$ for all $0 \le i < K.$ Finally, if $u \in E_v^i$ and $d(u \lra u') \le R$, then $u' \in E_v^{i+1}.$ This allows us to apply the ball-growing procedure in \cref{algo:roundtripcover} but using the balls $E_v^i$. Note that by our choice of $K$ and a variant of \cref{lemma:inbound}, there exists a good cut. This is because \[ n^{1 - \left(\frac{k-1}{k}\right)^K} \ge n^{1 - \frac{1}{k}} = n^\frac{k-1}{k}. \] Hence, we can make this good cut and then recurse. Here, our cutting condition is simpler (only checks vertices, not edges) because we have reduced to the case of regular graphs through \cref{lemma:regular}.

\begin{algorithm}[p]
\caption{$\RoundtripCoverNew(G, k, R)$. Takes in a $n$-vertex $m$-edge graph $G$, parameter $k$, and distance $R$. Returns a $(O(k\log k), R)$ roundtrip cover $C = \{B_1, B_2, \dots, \}$}
\begin{algorithmic}[1]
\State $C \assign \emptyset$.
\State $(G', C', D) \assign \BuildSimilar(G, k, R)$
\State $C \assign C'.$
\State $C \assign C \cup \BallGrow(G', k, R, D).$ \label{line:callballgrow}
\State \Return $C$.
\end{algorithmic}
\label{algo:roundtripcover2}
\end{algorithm}

\begin{algorithm}[p]
\caption{$\BuildSimilar(G, k, R)$. Takes in a $n$-vertex $m$-edge graph $G$, parameter $k$, and distance $R$. Returns a triple $(G', C, D)$, where $G' \subseteq G$ is a subgraph which still needs to be processed, $C$ is a set of balls to include in the roundtrip cover, and $D$ is a data structure which supports similarity queries. $\hn$ is the number of vertices at the top level of recursion.}
\begin{algorithmic}[1]
\State $C \assign \emptyset.$
\State $K \assign 10k \log k.$
\State Select uniformly random subsets $S^1, S^2, \cdots, S^{100\log \hn} \subseteq V(G)$ where $|S^i| = 100n^\frac1k \log^2 \hn \forall i$.
\State For all vertices $u \in S^i$ for some $i$, build a shortest path tree to and from $u$.
\State $C \assign \bigcup_{i=1}^{100\log \hn} \bigcup_{u \in S^i} B_u((4K+1)R).$ \label{line:updcover}
\For{$v \in V(G)$}
	\If{$v \in \left(\bin_u(2KR) \cap \bout_u(2KR)\right)$ for some $u \in S^i$ for some $i$}
		$\ON[v] \assign \textbf{false}.$ \label{line:markdead}
	\EndIf
\EndFor
\For{$v \in V(G)$} \label{line:searchv}
	\For{$i = 1$ to $K$}
		\State $T_v^i = \{ u \in S^i : d(v, u) \le KR \text{ and } d(u, w) \le 2KR \forall w \in S_v^j \forall 1 \le j < i. \}$
		\If{$|T_v^i| \ge 50\log \hn$} \label{line:maketi}
			\State $S_v^i \assign $ uniform sample of $T_v^i$ of size $50\log \hn$.
		\Else
			\State Return to line \ref{line:searchv}.
		\EndIf
	\EndFor
\EndFor
\State Have $D$ store all the $S_v^i$ and shortest path trees from all vertices $u \in S^i$ for some $i$.
\State \Return $\left(G\left[\{v : \ON[v] = \textbf{true}\}\right], C, D \right).$
\end{algorithmic}
\label{algo:buildsimilar}
\end{algorithm}

\begin{algorithm}[p]
\caption{$\BallGrow(G, k, R, D)$. Takes in a $n$-vertex $m$-edge graph $G$, parameter $k$, distance $R$, and data structure $D$ supporting similarity queries. Returns a set $C$ of balls to include in the roundtrip cover.}
\begin{algorithmic}[1]
\State $C \assign \emptyset$.
\State $K \assign 10k \log k$.
\State $\ON[v] \assign \textbf{true} \forall v \in V(G).$
\While{there exists $v$ with $\ON[v] = \textbf{true}$} \label{line:findv}
	\For{$i = 0$ to $K-1$}
		\State $E_v^i \assign \{ u \in V(G) : \ON[u] \text{ and } \SimilarTest(G, u, v, D, i, R) \text{ and } u \text{ reachable from } v \text{ through } E_v^i\}$. \label{line:makeevi} \Comment{We elaborate on this definition \cref{sec:explanation}.}
		\State \footnotesize$E_v^{i+1} \assign \{ u \in V(G) : \ON[u] \text{ and } \SimilarTest(G, u, v, D, i+1, R) \text{ and } u \text{ reachable from } v \text{ through } E_v^{i+1}\}$.
		\normalsize
		\If{\GoodCutNew$(G, E_v^i, E_v^{i+1})$} \label{line:goodcuttrue}
			\State $C \assign C \cup \RoundtripCoverNew(E_v^{i+1}, k, R)$.
			\State $\ON[v] \assign \textbf{false} \forall v \in E_v^i.$
			\State Break loop and return to line \ref{line:findv}.
		\EndIf
	\EndFor
\EndWhile
\State \Return $C$.
\end{algorithmic}
\label{algo:ballgrow}
\end{algorithm}

\begin{algorithm}[p]
\caption{\SimilarTest$(G, u, v, D, i, R)$, Takes in a $n$-vertex $m$-edge graph $G$, vertices $u, v \in V(G)$, data structure $D$, parameter $R$, decides whether $u$ is $i$-similar to $v$}
\begin{algorithmic}[1]
\State $K \assign 10k \log k.$
\If{$d(v, u) > iR$}
	\State \Return \textbf{false}
\EndIf
\For{$1 \le j \le 100\log \hn$}
	\For{$w \in S_v^j$}
		\If{$d(u, w) > (i+K)R$}
			\State \Return \textbf{false}
		\EndIf
	\EndFor
\EndFor
\Return \textbf{true}
\end{algorithmic}
\label{algo:similartest}
\end{algorithm}

\begin{algorithm}[p]
\caption{\GoodCutNew$(G, B_1, B_2)$, takes a graph $G$ with $n$ vertices and $m$ edges, balls $B_1 \subseteq B_2$, and determines whether recursing on $B_2$ and then deleting $B_1$ from our graph is good progress}

\begin{algorithmic}[1]
\If{$V(B_2) \le n^\frac1k |V(B_1)|^\frac{k-1}{k}$}
	\State \Return \textbf{true}
\Else
	\State \Return \textbf{false}
\EndIf
\end{algorithmic}
\label{algo:goodcut2}
\end{algorithm}
\subsection{Explanation of algorithms}
\label{sec:explanation}
\paragraph{Explanation of \cref{algo:roundtripcover2}, \cref{algo:buildsimilar}, \cref{algo:ballgrow}, \cref{algo:similartest}, \cref{algo:goodcut2}.} 
Throughout, we let $\hn$ be the number of vertices at the top level of recursion in the algorithms and we let $K \defeq 10k \log k$ for integer $k$.

We start by explaining \cref{algo:buildsimilar} (\BuildSimilar), which builds a data structure which allows efficient similarity queries. It follows the same blueprint as \cref{alg:similar_vertices}. The algorithm first selects sets $S^i$ for $1 \le i \le 100 \log \hn$, where $|S^i| = 100n^{1/k} \log^2 \hn$ for all $i$. It then computes shortest path trees to and from all vertices in all $S^i$. The algorithm then adds roundtrip balls of radius $O(KR)$ centered at each $u \in S^i$ to our roundtrip cover. The algorithm then marks all vertices $v$ from the graph that are within distance $2KR$ both to and from some vertex $u$ in some $S^i$ as not turned on anymore. Then for all vertices $v \in V(G)$ the algorithm builds sets $T_v^i$ and a uniform sample $S_v^i$ of $T_v^i$ of size $O(\log \hn)$ that allow us to ``test" whether another vertex $u$ is similar to $v$, i.e. could potentially be in a cycle of length $O(KR)$ with $v$. Eventually, $|T_v^i|$ gets small, and the algorithm stops processing vertex $v$. Finally, it returns the graph $G'$ of all still on vertices, the updated roundtrip cover, and the data structure $D$ for similarity testing consisting of all the $S_v^i$ for each vertex $v$ and shortest path trees from all $u \in S^i$.

Now we explain \cref{algo:similartest} (\SimilarTest), which uses the data structure $D$ computed by \BuildSimilar~to decide whether vertex $u$ is $i$-similar to vertex $v$. It returns true if and only if \[ d(v, u) \le iR \text{ and } d(u, w) \le (i+K)R \forall w \in S_v^j \forall 1 \le j \le 100\log \hn. \] Intuitively, this contains a ball around $v$ of distance $iR$ that contains all vertices which could potentially be in a cycle of length $KR$ with $v$, according to the algorithm.

Now we explain \cref{algo:goodcut2} (\GoodCutNew), which decides whether cutting out ball $B_1$ and recursing on $B_2$ constitutes good enough progress. This simply takes as input two balls $B_1$ and $B_2$ and decides whether recursing on $B_2$ and then deleting $B_1$ is good enough progress in trying to achieve a $\O(n^{1+O(k^{-1})})$ total size of roundtrip covers. Here, we check only the vertex condition instead of the edge condition (different from \cref{algo:goodcut} \GoodCut) because we have already reduced to the case where our graph $G$ is approximately regular (\cref{lemma:regular}).

Now we explain \cref{algo:ballgrow} (\BallGrow), which grows the balls $E_v^i$. $\ON[v] = \textbf{false}$ if vertex $v$ has been resolved, i.e. we can ensure that for any $u$ with $d(v \lra u) \le R$, that $v$ and $u$ are in a roundtrip ball of diameter $O(KR).$ Otherwise, $\ON[v] = \textbf{true}.$ We now grow balls $E_v^0, E_v^1, \cdots, E_v^K$ around $v$, up until line \ref{line:goodcuttrue} is satisfied. Our main claim is that when we recurse on $E_v^{i+1}$, then we can safely remove all vertices in $E_v^i$. While the definition in line \ref{line:makeevi} \[ E_v^i \assign \{ u \in V(G) : \ON[u] \text{ and } \SimilarTest(G, u, v, D, i, R) \text{ and } u \text{ reachable from } v \text{ through } E_v^i\} \] may seem recursive, all we mean is to say that we run a search from $v$, only keeping vertices which are $i$-similar, i.e. $\SimilarTest(G, u, v, D, i, R)$ is true.

Finally, our main algorithm \cref{algo:roundtripcover2} (\RoundtripCoverNew) first calls \BuildSimilar~to build the similarity data structure needed for \BallGrow. It also removes vertices from $G$ that were already resolved (i.e. in cycles of length $2KR$) to get a graph $G'$. Then it grows balls to partition $G'$ and recurse.
\subsection{Analysis}
\label{sec:analysis2}
In this section we analyze the above algorithms. We first show that the number of similar vertices to any vertex $v$ in $G'$ (line \ref{line:callballgrow}) is at most $n^\frac{k-1}{k}$ with high probability.
\begin{lemma}
\label{lemma:Ksizebound}
Consider an execution of $\RoundtripCoverNew(G_0, k, r)$ on an $\hn$-vertex vertex graph $G_0$.
Consider a recursive execution of $\RoundtripCoverNew(G, k, r)$ on an $n$-vertex $m$-edge graph $G$. Consider the resulting execution $\BallGrow(G', k, R, D)$ (line \ref{line:callballgrow}). With probability at least $1 - \hn^{-7}$ we have that for all $v \in V(G')$ that the number of vertices $u \in V(G')$ satisfying
\[ d(v, u) \le KR \text{ and } d(u, w) \le 2KR \forall w \in S_v^j \forall 1 \le j \le 100\log\hn \] is at most $n^\frac{k-1}{k}.$
\end{lemma}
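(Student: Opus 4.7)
My approach would be to adapt Lemma \ref{lem:SubsetShrinks} from Section \ref{sec:constantapprox} to the regime where the sampled sets $S^i$ have size $100 n^{1/k} \log^2 \hn$ (rather than $\tilde O(\sqrt n)$), and to replace its ``return the detected cycle'' step with the ``mark dead via line \ref{line:markdead}'' mechanism of \BuildSimilar. Concretely, I would define the partial similarity sets
\[
G_v^i \defeq \{u \in V(G) : d(v,u) \le KR \text{ and } d(u,w) \le 2KR \text{ for all } w \in S_v^j,\ 1 \le j \le i\},
\]
so that $G_v^{i+1} \subseteq G_v^i$ and the set bounded by the lemma is $G_v^K \cap V(G')$, or $G_v^{i^*-1} \cap V(G')$ if the inner loop of \BuildSimilar\ terminates early at some step $i^*$.

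The first step is the easy early-termination case. Since $T_v^i = G_v^{i-1} \cap S^i$ and each vertex is independently included in $S^i$ with probability $p = 100 \log^2 \hn / n^{(k-1)/k}$, a Chernoff bound shows that $|G_v^{i-1}| \ge n^{(k-1)/k}$ implies $|T_v^i| \ge 50 \log \hn$ with probability at least $1 - \hn^{-10}$; contrapositively, if the loop terminates at step $i$ because $|T_v^i| < 50 \log \hn$, then $|G_v^{i-1}| < n^{(k-1)/k}$ w.h.p. Moreover, conditional on $|T_v^i| \ge 50 \log \hn$, the sample $S_v^i$ is distributed as a uniform subset of $G_v^{i-1}$ of size $50 \log \hn$, which is exactly what lets the Section \ref{sec:constantapprox} counting arguments port over.

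The harder case is when all $K$ iterations run to completion. Mirroring Lemma \ref{lem:ManyEdges} and Lemma \ref{lem:SubsetShrinks}, at each step $i$ I would show that either $|G_v^{i+1}|/|G_v^i|$ is small enough to contribute to geometric shrinkage, or it is $\ge 0.8$ and a counting argument on the directed ``$d(\cdot,\cdot) \le 2KR$'' relation within $G_v^i$ forces a constant fraction of the vertices of $G_v^i$ to lie on cycles of length $\le 4KR$ with another vertex of $G_v^i$. Unlike in Section \ref{sec:constantapprox}, exhibiting such a cycle does not end the algorithm; instead, whenever one endpoint $u'$ of such a cycle is sampled into $\bigcup_j S^j$, its cycle partner $u \in G_v^i$ is marked dead in line \ref{line:markdead} of \BuildSimilar. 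Because the total sampling probability per vertex is $\Omega(\log^3 \hn/n^{(k-1)/k})$, once $|G_v^i|$ is sufficiently above $n^{(k-1)/k}$ almost every cycle-involved vertex of $G_v^i$ has a partner in $\bigcup_j S^j$ w.h.p., so $|G_v^i \cap V(G')|$ drops below $n^{(k-1)/k}$, and then $G_v^K \cap V(G') \subseteq G_v^i \cap V(G')$ finishes the bound. A union bound over $v \in V(G)$ and $i \in [1,K]$ with per-event failure probability $\hn^{-10}$ gives the claimed $1 - \hn^{-7}$ success.

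I expect the main obstacle to be the quantitative balance between shrinkage and marking, particularly in the regime where $|G_v^i|$ hovers close to $n^{(k-1)/k}$: there the geometric shrinkage contributes only marginally and the marking argument needs just enough sampling slack to dominate. Threading the density parameter of the counting argument against the per-step shrinkage threshold so that one mechanism strictly wins at every step---and so that $K = 10 k \log k$ iterations are enough in every regime of the sequence $|G_v^0| \ge |G_v^1| \ge \dots \ge |G_v^K|$---is where the careful bookkeeping will concentrate.
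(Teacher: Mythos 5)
Your setup is on the right track and matches the paper up to a point: the partial similarity sets $G_v^i$, the Chernoff argument showing that early termination of the inner loop of \BuildSimilar\ already certifies that the relevant set is smaller than $n^{\frac{k-1}{k}}$, and the observation that $S_v^i$ behaves like a uniform sample of the current set are all exactly the paper's ingredients. The genuine gap is in how you use the marking step of line \ref{line:markdead} in the dense case. You argue that when $|G_v^{i+1}|/|G_v^i| \ge 0.8$, a constant fraction of the vertices of $G_v^i$ lie on a cycle of length at most $4KR$ with \emph{some} other vertex of $G_v^i$, and that once $|G_v^i|$ is well above $n^{\frac{k-1}{k}}$ ``almost every cycle-involved vertex has a partner in $\bigcup_j S^j$ w.h.p.'' That last step fails: being cycle-involved only guarantees a single partner, and any one fixed vertex lands in $\bigcup_j S^j$ with probability only $\tilde{O}(n^{-\frac{k-1}{k}})$, no matter how large $|G_v^i|$ is. Line \ref{line:markdead} reliably removes only vertices that have $\tilde{\Omega}(n^{\frac{k-1}{k}})$ roundtrip-$4KR$ partners (the analogue of the ``$R$-cycle-rich'' vertices of \cref{sec:constspanner}). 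Moreover, even if a constant fraction of $G_v^i$ were marked dead, that lowers $|G_v^i \cap V(G')|$ by a constant factor only, not below $n^{\frac{k-1}{k}}$; and the marking happens once, globally, before the per-vertex loop, so it cannot be re-applied level by level as your ``balance shrinkage against marking'' plan would need.

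The paper sidesteps both issues by never using marking as a separate killing mechanism. It shows that whenever the current set $H_v^i$ (your $G_v^{i-1}$) has size at least $n^{\frac{k-1}{k}}$, at least a $9/10$ fraction of its vertices $w$ satisfy $d(w,w') \le 2KR$ for at most a $4/5$ fraction of $w' \in H_v^i$; the proof of this density claim is by contradiction, where a pigeonhole argument (as in \cref{lem:SubsetShrinks}) extracts a \emph{single} vertex with at least $0.44|H_v^i| \ge 0.44\, n^{\frac{k-1}{k}}$ mutual partners within $2KR$, and such a heavy vertex would with probability $1-\hn^{-\Omega(1)}$ have been marked not on in line \ref{line:markdead}, which rules this case out. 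Given the density claim, the sample $S_v^i$ w.h.p.\ eliminates that $9/10$ fraction, so the set shrinks by a constant factor at \emph{every} level at which its size exceeds $n^{\frac{k-1}{k}}$ --- there is no alternation between two mechanisms and no trade-off to thread. Finally, your worry about whether $K = 10k\log k$ iterations suffice is legitimate, but the paper resolves it differently: its proof iterates the constant-factor shrinkage over the $100\log\hn$ similarity indices referenced in the lemma statement (and in \SimilarTest), for which constant shrinkage per level already drives the set below any threshold, rather than trying to make the marking mechanism compensate over only $K$ levels.
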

\begin{proof}
We follow the same approach as the proofs in \cref{sec:constspanner}. Consider a vertex $v \in V(G)$. Define \[ H_v^i := \{ u \in V(G) : d(v, u) \le KR \text{ and } d(u, w) \le 2KR \forall w \in S_v^j \forall 1 \le j < i \}, \] i.e. all vertices $u \in V(G)$ which would ``pass" the $i$-th level similarity test for $v$. Our main claim is that if $|H_v^i| \ge n^\frac{k-1}{k}$, then we have that $\frac{|H_v^{i+1}|}{|H_v^i|} \le \frac{9}{10}$ with high probability. This implies the result, because if $|H_v^{100 \log \hn}| \ge n^\frac{k-1}{k}$ still, then we have that
\[ |H_v^{100 \log \hn}| \le \left(\frac{9}{10}\right)^{100\log\hn} n < 1, \] an obvious contradiction.

Now we show that if $|H_v^i| \ge n^\frac{k-1}{k}$, then we have that $\frac{|H_v^{i+1}|}{|H_v^i|} \le \frac{9}{10}$ with high probability. Note that by definition that $T_v^i = S^i \cap H_v^i$. It is direct to verify by a Chernoff bound that $|T_v^i| \ge 50 \log \hn$ with probability at least $1 - \hn^{-10}$ assuming that $|H_v^i| \ge n^\frac{k-1}{k}.$ By the definition of $S_v^i$ (a uniformly random subset of $T_v^i$ of size $50 \log \hn$) and symmetry we can think of $S_v^i$ simply as a uniformly random subset of $H_v^i$ of size $50 \log \hn.$

We now argue that for at least $\frac{9}{10}$ fraction of vertices in $w \in H_v^i$ we have that
\[ \Pr_{w' \in H_v^i}\left[d(w, w') \le 2KR\right] \le \frac{4}{5}, \] i.e. only $\frac{4}{5}$ fraction of vertices $w' \in H_v^i$ satisfy $d(w, w') \le 2KR$. Assume the contrary for contradiction. By the Pigeonhole principle, there are at least \[ \left(\frac{9}{10}\cdot \frac{4}{5} - \frac{1}{2}\right)|H_v^i|^2 = .22|H_v^i|^2 \] (unordered) pairs of vertices $w, w' \in H_v^i$ such that both $d(w, w') \le 2KR$ and $d(w', w) \le 2KR$. By the Pigeonhole principle again, there must be a vertex $w \in H_v^i$ for which at least $.44 |H_v^i|$ vertices $w'$ satisfy both $d(w, w') \le 2KR$ and $d(w', w) \le 2KR$, so $d(w, w') \le 4KR.$ Now, note that $.44 |H_v^i| \ge .44 n^\frac{k-1}{k}$ by our condition. We argue that this is impossible because $v$ should have been marked as not on with high probability in line \ref{line:markdead}. Indeed, the probability that $v$ failed to get marked as not on is at most
\[ \left(1 - \frac{.44n^\frac{k-1}{k}}{n}\right)^{\sum_{i=1}^{100\log \hn} |S^i|} \le 1 - \hn^{-20} \] as desired.

Now, consider the $\frac{9}{10}$ fraction of vertices $w \in H_v^i$ with \[ \Pr_{w' \in H_v^i}\left[d(w, w') \le 2KR\right] \le \frac{4}{5}. \] For each of these vertices, the probability that $d(w, w')$ for all $w' \in S_v^i$ is at most $\left(\frac{4}{5}\right)^{50\log\hn} \le 1-\hn^{-10}.$ By definition then, we have that $\frac{|H_v^{i+1}|}{|H_v^i|} \le \frac{1}{10}$ by definition, as the $\frac{9}{10}$ fraction of vertices in $H_v^i$ discussed in this paragraph will with high probability not be in $H_v^{i+1}.$
\end{proof}
We next claim that the ball growing scheme of \cref{algo:ballgrow} satisfies some important conditions, which intuitively make the $E_v^i$ look like balls of radius $iR$.
\begin{lemma}
\label{lemma:ballgrowworks}
Consider an execution of $\RoundtripCoverNew(G_0, k, R)$ on $n$-vertex $m$-edge graph $G_0$. Now, consider the resulting execution of $\BallGrow(G, k, R, D)$ on graph $G$. We have that in the execution for all $v \in V(G)$ that
\begin{enumerate}
\item $v \in E_v^0$.
\item $E_v^i \subseteq E_v^{i+1}$ for $0 \le i \le K-1$.
\item For $0 \le i \le K-1$, if $u \in E_v^i$ and $d(u \lra u') \le R$, then $u' \in E_v^{i+1}$.
\end{enumerate}
\end{lemma}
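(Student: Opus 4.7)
The three claims correspond to the three ingredients baked into the definition of $E_v^i$: containment of the center, monotonicity in $i$, and closure under a roundtrip-$R$ neighbor. My plan is to unwind $\SimilarTest$ in each case and apply the triangle inequality for (forward) directed distances.

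For claim~(1), I would verify that $v$ passes $\SimilarTest(G, v, v, D, 0, R)$. The distance condition $d(v,v) = 0 \le 0 \cdot R$ is immediate. For the auxiliary checks, observe that in \cref{algo:buildsimilar} each $S_v^j$ is a subset of $T_v^j$, which only contains vertices $u$ with $d(v,u) \le KR$; hence $d(v,w) \le KR = (0+K)R$ for every $w \in S_v^j$. Reachability of $v$ from itself is trivial, so $v \in E_v^0$.

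For claim~(2), I would exploit the fact that the thresholds inside $\SimilarTest$ are monotone in the similarity index. If $u \in E_v^i$ then $d(v,u) \le iR \le (i+1)R$ and $d(u,w) \le (i+K)R \le (i+K+1)R$, so $u$ passes $\SimilarTest(G, u, v, D, i+1, R)$. Moreover, any path from $v$ to $u$ whose internal vertices lie in $E_v^i$ also lies in $E_v^{i+1}$ by this similarity inclusion; formally, interpreting the recursive definition as the least fixed point and arguing by induction on the Dijkstra order used to construct it shows $E_v^i \subseteq E_v^{i+1}$.

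For claim~(3), suppose $u \in E_v^i$ and $d(u \lra u') \le R$; the latter gives $d(u,u') \le R$ and $d(u',u) \le R$ simultaneously. By the triangle inequality,
\begin{align*}
d(v,u') &\le d(v,u) + d(u,u') \le iR + R = (i+1)R,\\
d(u',w) &\le d(u',u) + d(u,w) \le R + (i+K)R = (i+K+1)R,
\end{align*}
for every $w \in S_v^j$, so $u'$ passes $\SimilarTest(G, u', v, D, i+1, R)$. Reachability then follows by concatenating a $v \to u$ path through $E_v^i \subseteq E_v^{i+1}$ (by claim~(2)) with a shortest $u \to u'$ path, which has total length at most $R$ and hence consists of vertices that are easily seen to satisfy the distance-from-$v$ portion of the $(i+1)$-similarity test.

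The subtlety, and the step I expect to require the most care, is checking that every intermediate vertex $x$ on the shortest $u \to u'$ path also satisfies the second condition of the $(i+1)$-similarity test, i.e.\ $d(x,w) \le (i+K+1)R$. A direct triangle inequality through $u'$ gives only $d(x,w) \le R + (i+K+1)R$, which is too weak by a factor of one $R$. The intended resolution is to instead route the bound through $u$: since $x$ lies on a shortest path of total length $\le R$ from $u$, the quantity $d(x,u)$ together with $d(u,w) \le (i+K)R$ and the fact that the algorithm's Dijkstra-based realization of ``reachable from $v$ through $E_v^{i+1}$'' visits all such intermediate vertices before any pruning takes effect, lets one absorb the extra $R$ into the existing slack. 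This is where the choice of constants in $K = 10k\log k$ and in $\SimilarTest$ becomes important, and where I would spend the bulk of the argument; the remaining arithmetic is exactly the triangle-inequality computation shown above.
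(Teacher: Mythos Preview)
Your arguments for claims~(1), (2), and the $\SimilarTest$ portion of claim~(3) are correct and coincide with the paper's proof essentially line for line. The paper's own proof in fact stops at that point: it verifies $d(v,u') \le (i+1)R$ and $d(u',w) \le (i+1+K)R$ and simply declares $u' \in E_v^{i+1}$, without ever discussing the reachability-through-$E_v^{i+1}$ clause or intermediate vertices.

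Where your proposal diverges is in the last paragraph, and there the resolution you sketch is not right. You flag that bounding $d(x,w)$ for an intermediate vertex $x$ on a shortest $u\to u'$ path seems to overshoot by one $R$, and you then appeal vaguely to ``absorbing the extra $R$ into the existing slack'' via the choice of $K = 10k\log k$. No such slack argument is needed, and the value of $K$ plays no role here. The clean bound comes straight from the roundtrip hypothesis: since $x$ lies on a shortest $u\to u'$ path we have $d(x,u') \le d(u,u')$, and therefore
\[
d(x,u) \;\le\; d(x,u') + d(u',u) \;\le\; d(u,u') + d(u',u) \;=\; d(u \lra u') \;\le\; R.
\]
Routing through $u$ then gives $d(x,w) \le d(x,u) + d(u,w) \le R + (i+K)R = (i+1+K)R$, exactly the required threshold, and $d(v,x) \le d(v,u)+d(u,x) \le (i+1)R$ is immediate. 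So every intermediate vertex passes the $(i+1)$-similarity test outright; there is no missing additive $R$ to absorb. Your instinct to route through $u$ was correct, but you stopped short of the one-line observation that the roundtrip bound already controls $d(x,u)$.
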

\begin{proof}
For the first claim, note that vertices $w \in S_v^j$ for all $j$ satisfy $d(v, w) \le KR$ by line \ref{line:maketi} of \BuildSimilar. Therefore, $v$ satisfies all conditions of being in $E_v^i$ on line \ref{line:makeevi} of \BallGrow.

The second claim is obvious from looking at the the definition of $E_v^i$ in line \ref{line:makeevi} of \BallGrow.

For the third claim, note that if $u \in E_v^i$ and $d(u \lra u') \le R$ then $d(v, u') \le d(v, u) + R = (i+1)R$. Also, for any $w \in S_v^j$ for $1 \le j \le 100\log \hn$ we have that $d(u', w) \le d(u, w) + R \le (i+1+K)R.$ Hence $u' \in E_v^{i+1}$ as desired.
\end{proof}
We now show the analogue to \cref{lemma:inbound}, specifically that for some iteration $0 \le i \le K-1$ in \BallGrow, we have that the condition in line \ref{line:goodcuttrue} triggers.
\begin{lemma}
Consider an execution of $\RoundtripCoverNew(G_0, k, R)$ on $n$-vertex $m$-edge graph $G_0$. Now, consider the resulting execution of $\BallGrow(G, k, R, D)$ on graph $G$. For some $0 \le i \le K-1$ we have that the condition in line \ref{line:goodcuttrue} is true, i.e. we get a good cut.
\end{lemma}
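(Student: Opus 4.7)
The plan is to argue by contradiction, using a ball-growing accounting analogous to \cref{lemma:inbound} but with the cap supplied by \cref{lemma:Ksizebound} playing the role of the total graph size. Suppose that the condition in line \ref{line:goodcuttrue} fails at every level $0 \le i \le K-1$. By the definition of \GoodCutNew, failure at level $i$ means
\[
|V(E_v^{i+1})| > n^{1/k}\, |V(E_v^i)|^{(k-1)/k}.
\]
By \cref{lemma:ballgrowworks} we have $v \in E_v^0$, so $|V(E_v^0)| \ge 1$, and iterating this recurrence yields by induction that $|V(E_v^i)| > n^{1-(1-1/k)^i}$ for every $i$ in the range.

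Next I would set $i = K = 10k\log k$ and use the elementary bound $(1-1/k)^K \le e^{-K/k} = k^{-10}$ to conclude $|V(E_v^K)| > n^{1-k^{-10}}$. For $k \ge 2$ this strictly exceeds $n^{(k-1)/k} = n^{1-1/k}$, since $k^{-10} < 1/k$; the case $k = 1$ is trivial as the conclusion $|V(E_v^1)| \le n^{1/1}|V(E_v^0)|^0 = n$ holds automatically.

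Finally, I would derive the contradiction by invoking \cref{lemma:Ksizebound}. By the definition of $E_v^K$ in line \ref{line:makeevi} together with \cref{algo:similartest}, any $u \in E_v^K$ satisfies $d(v,u) \le KR$ and $d(u,w) \le 2KR$ for every $w \in S_v^j$ at every level $j$. Hence $E_v^K$ is contained in exactly the set whose cardinality \cref{lemma:Ksizebound} bounds by $n^{(k-1)/k}$ with high probability, contradicting the lower bound established above.

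The main ``obstacle'' is really just calibration: the choice $K = 10k\log k$ is precisely what forces $(1-1/k)^K \le 1/k$, which is what makes the doubly-exponential ball growth push past the \cref{lemma:Ksizebound} ceiling. The rest is routine ball-growing bookkeeping, and the nontrivial input is \cref{lemma:Ksizebound}, which supplies the upper cap on $|V(E_v^K)|$ that the induction must inevitably collide with.
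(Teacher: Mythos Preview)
Your proposal is correct and follows essentially the same approach as the paper: assume no good cut, iterate the recurrence $|V(E_v^{i+1})| > n^{1/k}|V(E_v^i)|^{(k-1)/k}$ from the base case $|V(E_v^0)|\ge 1$ given by \cref{lemma:ballgrowworks} to get $|V(E_v^K)| > n^{(k-1)/k}$, and contradict \cref{lemma:Ksizebound}. Your added details (the explicit $e^{-K/k}=k^{-10}$ estimate, the $k=1$ case, and the verification that $E_v^K$ lies inside the set bounded by \cref{lemma:Ksizebound}) are all correct and make the argument slightly more explicit than the paper's version.
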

\begin{proof}
\label{lemma:inbound2}
The computation proceeds the same way as in \cref{lemma:inbound}. Assume for contradiction that the condition in line \ref{line:goodcuttrue} is never true, so \[ |V(E_v^{i+1})| > n^\frac1k |V(E_v^i)|^\frac{k-1}{k}. \] By \cref{lemma:ballgrowworks}, we know that $|V(E_v^0)| \ge 1$, as $v \in E_v^0.$ Therefore, one can check by induction that \[ |V(E_v^i)| \ge n^{1-\left(\frac{k-1}{k}\right)^i}. \] For $K = 10k\log k$ we have that
\[ |V(E_v^K)| \ge n^{1-\left(\frac{k-1}{k}\right)^K} > n^\frac{k-1}{k}, \] which contradicts \cref{lemma:Ksizebound}.
\end{proof}
We turn to proving \cref{thm:roundtripcover2}.
\begin{proof}
We break the analysis into pieces. We show that executing $\RoundtripCoverNew(G, k, R)$ on a $n$-vertex $m$-edge graph $G$ returns a $(O(k\log k), R)$ roundtrip cover $C$ with total size of all balls at most $n^{1+O\left(\frac1k\right)}$ in time $m^{1+O\left(\frac1k\right)}$ with high probability.
\paragraph{Returns a $(O(k\log k), R)$ roundtrip cover.} We first argue that in a call to $\BuildSimilar(G, k, R)$ that for all vertices $v$ where we marked $\ON[v] = \textbf{false}$ that $v$ is properly resolved, i.e. for any vertex $u$ with $d(v \lra u) \le R$ that $u$ and $v$ are in a roundtrip ball of radius at most $(4K+1)R.$ Indeed, note that if we mark $\ON[v] = \textbf{false}$, then there must have been a vertex $w$ for which $w \in S^j$ for some $j$, and $d(v \lra w) \le 2KR + 2KR = 4KR.$ Then $d(u \lra w) \le (4K+1)R$. We have added the ball $B_w((4K+1)R)$ to our roundtrip cover $C$, as desired (line \ref{line:updcover}).

The only other piece to verify is that when we mark $\ON[v] = \textbf{false}$ in an execution of $\BallGrow(G, k, R, D)$ that $v$ is properly resolved, i.e. that in some recursive subproblem we have that for all $u$ with $d(v \lra u) \le R$ that $v$ and $u$ are in a roundtrip ball of radius at most $O(KR).$ But this holds immediately by \cref{lemma:ballgrowworks}: if $v \in E_w^i$ for some $w$, and $d(v \lra u) \le R$, then $u \in E_w^{i+1}$ as desired.

\paragraph{Total size of balls in $C$ is $n^{1+O\left(\frac1k\right)}$ with high probability.} The analysis here follows closely to the corresponding paragraph in \cref{sec:algo}. We will show that the total size of all graphs processed in a single level of recursion the algorithm is at most $n^\frac{k}{k-1}$, where our initial call was $\RoundtripCoverNew(G, k, R)$ for a $n$-vertex $m$-edge graph $G$. Then, the total size of all graphs processed in the recursion is $\O(n^\frac{k}{k-1})$, as the recursion depth is at most logarithmic. Then the bound on total size of balls in $C$ follows as for a graph $G$ with $n$ vertices, the total size of balls added to $C$ during $\BuildSimilar(G, k, R)$ is at most
\[ \sum_{j=1}^{100\log n} |S^j| = \O(n^{1+\frac1k}). \]

To show that the total size of all graphs processed at recursion depth $\ell$ in the algorithm is at most $n^\frac{k}{k-1}$, we use induction. Indeed, this holds at the bottom level of recursion. Consider an execution of $\BallGrow(G, k, R, D)$ on an $n$-vertex $m$-edge graph $G$. Let $F_1^1, F_2^1, \cdots, F_t^1$ be all the balls $E_v^i$ for which line \ref{line:goodcuttrue} was satisfied (and we know that line \ref{line:goodcuttrue} is satisfied for some $i$ by \cref{lemma:inbound2}). Let $F_1^2, F_2^2, \cdots, F_t^2$ be the corresponding balls $E_v^{i+1}$. We have that $\sum_{i=1}^t |V(F_i^1)| \le n$, as we marked all vertices in $E_v^i$ as not on anymore if line \ref{line:goodcuttrue} was satisfied for $E_v^i$ and $E_v^{i+1}$. Additionally, by the condition of \GoodCutNew, we have that $|V(F_i^2)| \le n^\frac1k |V(F_i^1)|^\frac{k-1}{k}.$ By induction, the total sizes of all graphs processed at depth $\ell$ through recursion on $F_1^2, F_2^2, \cdots, F_t^2$ is at most
\[ \sum_{i=1}^t |V(F_i^2)|^\frac{k}{k-1} \le \sum_{i=1}^t \left(n^\frac1k |V(F_i^1)|^\frac{k-1}{k}\right)^\frac{k}{k-1} \le \sum_{i=1}^t n^\frac{1}{k-1} |V(F_i^1)| \le n^\frac{k}{k-1} \] as desired.

\paragraph{Can be implemented to run in time $m^{1+O\left(\frac1k\right)}$ with high probability.}
The analysis in the above section on the total size of balls, we know that the total number of vertices in all graphs processed during the algorithm $\RoundtripCoverNew(G, k, R)$ is at most $n^{1+O\left(\frac1k \right)}.$ As we have reduced to the case of regular graphs through \cref{lemma:regular}, the total number of edges in all graphs processed during $\RoundtripCoverNew(G, k, R)$ is at most $\O(\delta n^{1+O\left(\frac1k \right)}) \le \O(m^{1+O\left(\frac1k\right)})$ for $\delta = O(m/n).$

We now argue that the non-recursive runtime of $\RoundtripCoverNew(G, k, R)$ on a graph $G$ with $n$ vertices and $m$ edges is $\O(m^{1+1/k})$. We start by analyzing \cref{algo:buildsimilar} ($\BuildSimilar$). We have that in $\BuildSimilar(G, k, R)$ that $\sum_{i=1}^{100\log \hn} |S^i| \le \O(n^{1/k})$, where $\hn$ is the number of vertices in the graph at the top level of recursion. Therefore, building a shortest path tree to and from all vertices in $\bigcup_i S^i$ takes $\O(m^{1+1/k})$ time using Dijkstra's algorithm. Computing all the sets $T^i_v$ clearly takes time \[ n \cdot \sum_{i=1}^{100\log \hn} |S^i| \cdot K = \O(m^{1+1/k}).\] We proceed to analyze \cref{algo:similartest} (\SimilarTest). This clearly takes $\O(1)$ time per call, as we have precomputed all shortest path trees and distances to and from all vertices $u \in S^i$ in \cref{algo:buildsimilar}. Also, \cref{algo:goodcut2} (\GoodCutNew) also obviously takes $O(1)$ time per call.

Now we analyze \cref{algo:ballgrow} (\BallGrow). We can build the sets $E_v^i$ by running any search from $v$, only keeping vertices $u$ that satisfy $\SimilarTest(G, u, v, D, i, R).$ This takes time proportional to $\O(\delta |E_v^{i+1}|)$, where we have used that each call to \SimilarTest~takes $\O(1)$ time. In accounting for this runtime, we can push the contribution to the next recursion level (as we are recursing on $E_v^{i+1}$). Therefore, the total non-recursive runtime used is $\O(m^{1+1/k})$ as claimed for an input graph with $m$ edges.

Now, as the total number of edges over all graphs is $\O(m^{1+O\left(\frac1k\right)})$, the total runtime would also be $\O(m^{1+O\left(\frac1k\right)})$ as desired.
\end{proof}
We now use \cref{thm:roundtripcover2} to get multiplicative girth approximation and roundtrip spanners, proving \cref{thm:const_girth} and \cref{thm:arb_const_spanner}.
\begin{proof}[Proof of \cref{thm:const_girth}]
We first show the result for unweighted graphs. To show this, run \[ \text{\RoundtripCoverNew}(G, O(k), 2^i) \text{ for } 0 \le i \le O(\log n). \] Now, set our estimate $g'$ of the girth to be the smallest radius of any nontrivial ball that we had in a roundtrip cover. By the guarantees of \RoundtripCoverNew, it is clear that $g \le g' \le O(k\log k) \cdot g$ as desired. It is clear that the algorithm runs in time $\tilde{O}(m^{1+\frac1k})$ by \cref{thm:roundtripcover2}. We can extend this to weighted graphs by instead taking $0 \le i \le O(\log nW)$, where $W$ is the maximum edge weight. This can be improved to $O(\log n)$ by the same method as done in \cite{PRSTV18}, where they give a general reduction by contracting small weight strongly connected components and deleting large weight edges (see Section 5.1 in \cite{PRSTV18} for more details).
\end{proof}
\begin{proof}[Proof of \cref{thm:arb_const_spanner}]
We first show the result for unweighted graphs. It is easy to see that \[ \bigcup_{i = 0}^{O(\log n)} \text{\RoundtripCoverNew}(G, O(k), 2^i) \] is an $O(k\log k)$ spanner with $\tilde{O}(n^{1+\frac{1}{k}})$ edges by \cref{thm:roundtripcover2}. It is clear that the algorithm runs in time $\tilde{O}(m^{1+\frac1k}).$ The extension to weighted graphs follows as in the above paragraph (proof of \cref{thm:const_girth}).
\end{proof}

\section{Conclusion and Open Problems}
\label{sec:open}

In this paper we provided multiple results on computing round-trip spanners and multiplicative approximations to the girth of an arbitrary directed graph. Our results all either improve running times, decrease the use of randomness, or improve the approximation quality of previous results. Ultimately, this work brings the state-of-the art performance of  roundtrip spanners algorithms on directed graphs closer to matching that for undirected graphs.  

An immediate open problem left open by our work is to fully close the gap between algorithmic guarantees for spanners of undirected graphs and roundtrip spanners of directed graphs and provide a deterministic algorithm which for all $k$ in $\tilde{O}(m n^{1/k})$ time computes a $O(k)$ roundtrip spanner with $\tilde{O}(n^{1 + 1/k})$ edges.
This paper resolves this problem for $k = \Omega(\log n)$ and makes progress on it for smaller values of $k$; it is still open to resolve it for all $k$. 

Another key open problem is to further clarify the complexity of approximating the girth of a directed graph. Currently the only algorithms which provably outperform APSP for approximating the girth of a graph are Pachocki et. al. \cite{PRSTV18} and this paper. Consequently, all known girth approximation algorithms for directed graphs leverage techniques immediately applicable for spanner computation (with the sole possible exception of the algorithms of Section~\ref{sec:constantapprox}). Therefore, beyond improving roundtrip spanner routines to obtain an algorithm which can compute an $O(k)$-multiplicative approximation to the girth in $\tilde{O}(m n^{1/k})$, this suggests the even more challenging open problem of circumventing this ``spanner barrier'' to 
obtaining even faster running times. 
For undirected graphs, it possible to overcome this barrier in certain cases \cite{IR77, LL09, RW12, DKMS17}. However, some of the techniques used in these results are known not to extend to directed graphs, see e.g. \cite{RW12, PRSTV18}. Consequently, further clarifying the complexity of girth approximation beyond the spanner barrier with either improved algorithms or new conditional lower bounds remains an difficult and interesting frontier.

One final open problem is to improve the parallel complexity of these routines. Previous work on the efficient construction of roundtrip spanners \cite{PRSTV18} provided such a result. Further, there have been recent advances in the efficient parallel computation of reachability in directed graphs \cite{Fineman18,JLS19arxiv} and commute times of random walks. The combination of the ideas from these works with the results of this paper could be useful for  obtaining further improvements for the efficient parallel computation of girth and roundtrip spanners \cite{CohenKPPRSV17}.

\section{Acknowledgements}
\label{sec:acknowledgements}

We thank the anonymous reviewers for helpful feedback and suggestion of many of the open problems discussed in \Cref{sec:open}. We thank Jakub Pachocki, Liam Roditty, Roei Tov, and Virginia Vassilevska Williams for helpful discussions.

{\small
\bibliographystyle{alpha}
\bibliography{refs}}

\begin{appendix}
\section{Deferred proofs}
\subsection{Proof of \cref{lemma:regular}}
\label{proofs:regular}
\begin{proof}
The reduction is as follows.
Let $\delta= \ceil{|E|/|V|}$.
Replace all the outgoing edges from $v$ by a balanced $\delta$-tree with all weights of internal edges 0 (a balanced tree with degree $\delta$ where $v$ is the root of the tree and all edges of the tree are directed from the root) where each leaf in this tree is ``responsible''  for
$d$ of the outgoing edges of $v$, that is, each leaf has $d$ outgoing edges to $\delta$ (different) neighbors of $v$.
The weight of these edges are the original corresponding weight of the edges of $v$.
We set the weight of all edges in the balanced $\delta$-tree to be 0.
A similar process is done for the incoming edges of $v$ for every node $v\in V$.
It is not hard to verify that the number of new nodes created is proportional to the number of edges divided by $\delta$, that is, the number of new nodes is $O(m/\delta)= O(n)$. In addition, every two original nodes $u$ and $v$ that have a directed path in $G$, also have a directed path in the modified graph.
It is not hard to verify that all round trip distances in $G$ and $H$ are the same (this implies also that the girth of $H$ and $G$ is the same).
Moreover, given a cycle $C$ in $H$ one can easily find a cycle of the same length in $G$ by simply contracting the
$\delta$-trees of each vertex.
Finally, given a subgraph $H'$ of $H$, one can obtain a subgraph $G'$ of $G$ by simply contracting the
$\delta$-trees of each vertex. It is not hard to verify that all roundtrip distances (for pairs of vertices in $G$) in $H'$ and $G'$ are the same.
\end{proof}

\newcommand{\spannerapprox}{\textsc{SpannerApprox}}

\section{Constant Approximation Roundtrip Spanner in $\tilde{O}(m \sqrt{n})$ time}
\label{sec:constspanner}

In this section we provide a procedure that given a directed weighted graph $G$ and a target distance $R$, returns in $\tilde{O}(m \sqrt{n})$ time
a subgraph $H$ with $\tilde{O}(n^{3/2})$ expected number of edges such that $d_H(u \lra v) \leq 8R$ for every two vertices $u$ and $v$ such that $d_G(u \lra v) \leq R$.

The pseudocode of the algorithm is given in Figure \ref{algo:spanner3}.

\begin{algorithm}[h!]
	\caption{$\spannerapprox(G,R)$,  takes a graph $G$ and a parameter $R$.  Computes a roundtrip spanner for target roundtrip distance $R$}
	\label{algo:spanner3}
	\begin{algorithmic}[1]
        \State $H \gets (V,\emptyset)$
		\State Invoke $\similarsetspanner(G,R)$ to get a subset of edges $H^\outset$ and sets $A_v^\outset \subseteq V$ for each $v \in V(G)$. \label{line:similar_search_spanner_1}
		\State Invoke $\similarsetspanner(\Grev,R)$ where $\Grev$ is the graph where the direction of every edge is reversed to
                get a subset of edges $H^\inset$ and sets $A_v^\inset \subseteq V$ for each $v \in V(G)$.
        \State For each $v \in V(G)$ perform Dijsktra from $v$ in the graph induced by $A_v^{\outset} \cup A_v^{\inset}$
            and add the edges of this shortest paths tree to $H$.
		\label{line:search_union_spanner}
        \State Add the edges of $H^\outset$ and the reversed edges of $H^\inset$ to $H$.
		\State Return $H$. \label{line:final_constant_spanner}
	\end{algorithmic}
\end{algorithm}
\begin{algorithm}[h!]
\caption{$\similarsetspanner(G, R)$, takes a graph $G$ and a parameter $R$. This algorithm returns a subset of edges $H'$ (that will be added to the final spanner) and  a set of $A_v \subseteq V(G)$ of ``similar''
	vertices to $v$ (with respect to balls of radius $R/2$) for each $v \in V(G)$.}
\label{alg:similar_vertices_spanner}
\begin{algorithmic}[1]
\State  $H' \gets (V,\emptyset)$
\State  Let $U \subseteq V$ be a set of $O(100 \log{n} \cdot n^\frac12)$ expected size by sampling every vertex
$v \in V$  independently with probability $p' = \frac{100 \log{n}}{n^\frac12}$.
\State
Run Dijkstra to/from each vertex $v\in U$ and add to $H'$ the edges of all these shortest paths trees.
\State
Let $Z\subseteq V$ be the set of vertices that their roundtrip distance from every vertex in $U$ is more than $3R$,
that is, $Z = \{v\in V \mid d_G(v \lra u) > 3R \text{ for all } u \in U\}$.
\label{line:deleteSatisfiedVertices}
\State
For $M = 50\log n$, sample sets $S_0, S_1, \cdots, S_M \subseteq Z$, each of expected size $O(n^\frac12)$ by sampling every vertex
$v \in Z$  independently with probability $p = \frac{1}{n^\frac12}$.
\State  Run Dijkstra to/from each vertex $w\in S_i$ for every $1 \leq i \leq M$.

\For{every vertex $v \in Z$}
    \State Set $T_0(v) \gets \{s \in S_0 \mid d(v, s) \le R/2 \}$.
    \For{$i = 1 , ... , M$}
           \State  Let $R_{i-1}(v)$ be a sampled set of $100 \log{n}$ vertices from $T_{i-1}(v)$ (or $T_{i-1}(v)$ if $|T_{i-1}(v)| \leq 100 \log{n}$).
            \State
             $\text{Set }
                T_i(v) \gets \{ s \in S_{i} \mid d(v, s) \le R/2  \text{ and }
                d(s, t) \le 3R/2 \text{ for all } t\in \cup_{j\in[0,...,i-1]}{R_{j}(v)}\}.$
    \EndFor
    \State  Grow a shortest path tree $T(v)$ from $v$, only keeping vertices $s \in Z$ satisfying \[ d(v, s) \le R/2 \text{ and }
            d(s, t) \le 3R/2 \text{ for all } t\in \cup_{j\in[0,...,M]}{R_{j}(v)}.\]
            \State   Set $A_v$ to be the set of vertices in $T(v)$.
        \label{line:GrowSmallBallSpanner}
	\EndFor
	\Return $H'$ and $A_v$ for all $v \in V$
\end{algorithmic}
\end{algorithm}

\subsection{Analysis}
We next analyze the running time and correctness of our algorithm.

The sets $G_i(v)$ are defined similarly as in the previous section but restricted on vertices in $Z$.
That is, for a vertex $v$, define $G_0(v) = \{s \in Z \mid d(v, s) \le R/2 \}$ and
$G_i(v) = \{s \in Z \mid d(v, s) \le R/2 \text{ and }
d(s, t) \le 3R/2 \text{ for all } t\in \cup_{j\in[0,...,i-1]}{R_{j}(v)}\}$.

We start with bounding the stretch of the returned spanner.

\begin{lemma}
\label{lem:correctnessSpanner}
For every  two vertices $u$ and $v$ such that $d_G(u \lra v) \leq R$, $d_H(u \lra v) \leq 8R$.
\end{lemma}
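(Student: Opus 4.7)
The plan is to split into cases based on whether every vertex on the round-trip cycle $C$ through $u$ and $v$ lies in both of the ``surviving'' sets $Z_1$ and $Z_2$ that are computed by the two invocations of $\similarsetspanner$ on $G$ and $\Grev$ respectively. I would take $C$ to be the closed walk formed by concatenating a shortest $u$-to-$v$ path with a shortest $v$-to-$u$ path, so the total length of $C$ is $d_G(u \lra v) \le R$, every vertex $c$ on $C$ satisfies $d_G(u \lra c) \le R$ and $d_G(v \lra c) \le R$, and also $d_G(u,c) + d_G(c,u) \le R$.

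The first case is when some $c \in C$ fails to lie in $Z_i$ for some $i \in \{1,2\}$. By the definition of $Z_i$ in \cref{line:deleteSatisfiedVertices} of \cref{alg:similar_vertices_spanner}, there is some $w \in U_i$ with $d_G(c \lra w) \le 3R$, and by the triangle inequality I would conclude $d_G(u \lra w), d_G(v \lra w) \le d_G(u \lra c) + d_G(c \lra w) \le 4R$. Since $H$ contains shortest-path trees to and from $w$ in $G$ (added via $H^\outset$ when $i = 1$, or via the reversed edges of $H^\inset$ when $i = 2$), these round-trip distances are preserved in $H$, giving $d_H(u \lra v) \le d_H(u \lra w) + d_H(w \lra v) \le 8R$.

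The second case is when $C \subseteq Z_1 \cap Z_2$. In particular $u, v \in Z_1 \cap Z_2$, so all four sets $A_u^\outset, A_u^\inset, A_v^\outset, A_v^\inset$ are defined. The key ingredient here is an analog of \cref{lem:correctness} for the $Z$-restricted algorithm: for any $v \in Z$ lying on a cycle of length at most $R$, $A_v$ contains every vertex $c$ of that cycle that is in $Z$ and satisfies $d(v,c) \le R/2$. Granting this, for each $c \in C$, since $d_G(u,c) + d_G(c,u) \le R$ at least one of $d_G(u,c) \le R/2$ (whence $c \in A_u^\outset$ because $c \in Z_1$) or $d_G(c,u) \le R/2$ (whence $c \in A_u^\inset$ because $c \in Z_2$) must hold, so $C \subseteq A_u^\outset \cup A_u^\inset$. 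The forward portion of $C$ from $u$ to $v$ therefore lies inside the induced subgraph on $A_u^\outset \cup A_u^\inset$, so the Dijkstra tree from $u$ added to $H$ in \cref{line:search_union_spanner} certifies $d_H(u,v) \le d_G(u,v)$. A symmetric argument for $v$ gives $d_H(v,u) \le d_G(v,u)$, and together these yield $d_H(u \lra v) \le R \le 8R$.

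The main obstacle, and really the only non-mechanical step, will be establishing the $Z$-restricted analog of \cref{lem:correctness}. The argument should mirror the original: given $c$ on a cycle of length $\le R$ through $v$ with $d(v,c) \le R/2$ and $c \in Z$, the triangle inequality $d(c,t) \le d(c,v) + d(v,t) \le R + R/2 = 3R/2$ for every $t \in R_j(v) \subseteq T_j(v)$ shows $c$ passes the $3R/2$ distance test at the end of $\similarsetspanner$, while the hypothesis $c \in Z$ ensures $c$ survives the additional ``keep only vertices $s \in Z$'' filter in \cref{line:GrowSmallBallSpanner}. Everything else is then routine bookkeeping.
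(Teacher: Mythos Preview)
Your proposal is correct and follows essentially the same two-case argument as the paper's own proof: if some cycle vertex is dropped from $Z$, route through the corresponding sampled vertex in $U$ to get the $8R$ bound; otherwise appeal to the analog of \cref{lem:correctness} to get $C \subseteq A_u^\outset \cup A_u^\inset$ and $C \subseteq A_v^\outset \cup A_v^\inset$, then use the Dijkstra trees in line~\ref{line:search_union_spanner}. Your treatment is in fact slightly more careful than the paper's, since you explicitly track the two different sets $Z_1, Z_2$ and $U_1, U_2$ coming from the runs on $G$ and on $\Grev$ (the paper speaks of a single $Z$), and you spell out that the reversed edges of $H^\inset$ are what realize the shortest-path trees to and from $w \in U_2$ in $G$.
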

\begin{proof}
As $d_G(u \lra v) \leq R$ there must be a shortest cycle $C$ of length at most $R$
that contains both $u$ and $v$.
If all the vertices on $C$  belong to $Z$ then using similar analysis
 to the proof in Lemma \ref{lem:correctness} one can show that $C \subseteq A_v^\outset \cup A_v^\inset$.
And as shortest path tree from $v$ in the induced graph $\subseteq A_v^\outset \cup A_v^\inset$ is added to $H$, it follows that
$d_H(v,u) \leq d_C(v,u)$. Similarly, we can show that $d_H(u,v) \leq d_C(u,v)$.
We now get that $d_H(v \lra u) \leq d_C(v,u)+ d_C(u,v) \leq R$.

Now assume there exists a vertex $x$ on $C$ such that $x \notin Z$.
As $x \notin Z$ then by construction there is a vertex $y \in U$ such that $d_G(y \lra x) \leq 3R$.

As the algorithm adds a shortest path tree from and to $y$ then the roundtrip distances from $y$ to all vertices in $V$ are the same in $H$ as in $G$.

We get that $d_H(y \lra v) = d_G(y \lra v) \leq d_G(y \lra x) + d_G(x \lra v) \leq 3R+R=4R$.
Similarly, we can show that $d_H(y \lra u) \leq 4R$.
It follows that $d_H(u \lra v) \leq d_H(y \lra v) + d_H(y \lra u) \leq 8R$, as required.
\end{proof}

We next turn to the analysis of the running time and the number of edges in the constructed spanner $H$.

We say that a vertex $v$ is $R$-cycle-rich if there are
at least $50 \sqrt{n} \log{n}$ other vertices $z$ such that $v$ and $z$ are on a cycle of length at most $3R$, namely,
$|\{z\in V \mid d_G(v \lra x) \leq 3R \}| \geq 50\sqrt{n}\log{n}$.

The next lemma shows that with high probability $Z$ does not contain any
$R$-cycle-rich vertices.

\begin{lemma}
\label{lem:noRichVertices}
With probability at least $1 - 1/n^9$, $Z$ does not contain
$R$-cycle-rich vertices.
\end{lemma}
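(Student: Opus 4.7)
The plan is a straightforward application of sampling: if a vertex $v$ is $R$-cycle-rich, then there is a large set of ``witnesses'' $z$ with $d_G(v \leftrightarrow z) \leq 3R$, and with extremely high probability $U$ hits one of these witnesses, which in turn forces $v \notin Z$.

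First I would fix an arbitrary vertex $v \in V$ that is $R$-cycle-rich, and let $W_v \defeq \{ z \in V : d_G(v \leftrightarrow z) \leq 3R \}$, so that by assumption $|W_v| \geq 50\sqrt{n} \log n$. Note that $v \in W_v$. The key observation is that if any vertex $z \in W_v$ lands in $U$, then by the triangle inequality for roundtrip distance we have $d_G(v \leftrightarrow z) \leq 3R$ with $z \in U$, which by the definition of $Z$ (line \ref{line:deleteSatisfiedVertices} of $\similarsetspanner$) means $v \notin Z$. So it suffices to bound the probability that $U$ misses $W_v$ entirely.

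Since each vertex is included in $U$ independently with probability $p' = 100 \log n / \sqrt{n}$, the probability that $U \cap W_v = \emptyset$ is at most
\[
(1-p')^{|W_v|} \leq \exp\!\left( -p' \cdot 50 \sqrt{n} \log n \right) = \exp(-5000 \log^2 n) \leq n^{-10}
\]
for all sufficiently large $n$. I would then take a union bound over all at most $n$ vertices $v$ that could potentially be $R$-cycle-rich, giving a failure probability of at most $n \cdot n^{-10} \leq 1/n^9$.

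There is essentially no obstacle here: the step to watch is making sure the implication ``$U$ hits $W_v$'' implies ``$v \notin Z$'' is stated with the correct direction of distance, which holds trivially because the roundtrip distance metric is symmetric, so $d_G(v \leftrightarrow z) = d_G(z \leftrightarrow v)$ and $z \in U$ directly certifies that $v$ fails the defining condition of $Z$. Everything else is a routine Chernoff/union-bound calculation using the generous sampling rate $p' = 100 \log n / \sqrt{n}$, which is precisely calibrated to guarantee high-probability hitting of any set of size $\Omega(\sqrt{n} \log n)$.
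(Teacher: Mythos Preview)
Your proof is correct and follows essentially the same approach as the paper: define the witness set $W_v$, observe that $U$ hitting $W_v$ forces $v\notin Z$, bound the miss probability by $(1-p')^{|W_v|}\le n^{-10}$, and union bound over all vertices. The only cosmetic difference is that you use the inequality $1-x\le e^{-x}$ explicitly, whereas the paper leaves the bound $(1-p')^{50\sqrt{n}\log n}<1/n^{10}$ unsimplified.
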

\begin{proof}

Consider an $R$-cycle-rich vertex $v$.
Note that by construction if $U$ contains a vertex from the set $\{x\in V \mid d_G(v \lra x) \leq 3R \}$ then
$v\notin Z$.
By definition we have
$|\{x\in V \mid d_G(v \lra x) \leq 3R \}| \geq 50 \log{n} \sqrt{n}$.
For a fixed vertex $x \in \{x\in V \mid d_G(v \lra x) \leq 3R \}$ the probability that $x \notin U$ is
$(1-p')$ for $p' = \frac{100 \log{n}}{n^\frac12}$.
The probability that none of the vertices in $\{x\in V \mid d_G(v \lra x) \leq 3R \}$ belongs to $U$
is  $(1-p')^{|Y|} \leq (1-\frac{100 \log{n}}{n^\frac12})^{50 \log{n} \sqrt{n}} < 1/n^{10}$.

By union bound on all $R$-cycle-rich vertices, the probability that $Z$ contains
an $R$-cycle-rich vertex is at most $1/n^{9}$
\end{proof}

The next lemma essentially shows that w.h.p.
$|G_{i+1}(v)| \leq 0.8 |G_{i}(v)|$ for every $v\in V$ and $i \in [M]$ such that
$|G_{i}(v)| \geq 200 \sqrt{n} \log{n}$.

\begin{lemma}
\label{lem:ManyEdgesSpanner}
Assume $Z$ does not contain $R$-cycle-rich vertices and that for every $v$ and every $i$
 all vertices in $G_{i+1}(v)$  are $(v,i)$-dense.
Then, for every vertex $v$, index  $i \in [M]$ such that $|G_{i}(v)| \geq 200 \sqrt{n} \log{n}$
we have $|G_{i+1}(v)| \leq 0.8 |G_{i}(v)|$.
\end{lemma}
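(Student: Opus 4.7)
The plan is to argue by contrapositive: assume $|G_{i+1}(v)| > 0.8|G_i(v)|$ for some $v$ and $i$ with $|G_i(v)| \ge 200\sqrt{n}\log n$, and derive that $Z$ must contain an $R$-cycle-rich vertex, contradicting the hypothesis. This mirrors the counting argument used inside the proof of \cref{lem:SubsetShrinks}, but whereas there the goal was to produce a single cycle-witness that $S_i$ would hit with high probability, here the goal is the stronger structural statement that some witness participates in very many short cycles.

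First, I will combine the two standing assumptions. Since $G_{i+1}(v) \subseteq G_i(v)$ and $|G_{i+1}(v)| > 0.8|G_i(v)|$, and since by hypothesis every vertex in $G_{i+1}(v)$ is $(v,i)$-dense, at least $0.8|G_i(v)|$ vertices of $G_i(v)$ are $(v,i)$-dense. I would then form the auxiliary directed graph $H$ on vertex set $G_i(v)$ with an arc $(u,s)$ exactly when $d_G(u,s) \le 3R/2$. Each $(v,i)$-dense vertex contributes at least $0.9|G_i(v)|$ out-arcs, so $|E(H)| \ge 0.72|G_i(v)|^2$.

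Next, I would run the same ``bidirectional pair'' pigeonhole argument that appears in the proof of \cref{lem:SubsetShrinks}. Let $P$ be the number of unordered pairs $\{u,s\}\subseteq G_i(v)$ with both $(u,s)$ and $(s,u)$ in $H$; then $|E(H)| \le \binom{|G_i(v)|}{2} + P$, so $P \ge 0.72|G_i(v)|^2 - \tfrac{1}{2}|G_i(v)|^2 \ge 0.22|G_i(v)|^2$. Each such pair satisfies $d_G(u \lra s) \le 3R$. Pigeonholing on one endpoint yields a vertex $s \in G_i(v)$ that lies on a cycle of length at most $3R$ with at least $0.44|G_i(v)|$ other vertices. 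Using $|G_i(v)| \ge 200\sqrt{n}\log n$, this gives $0.44 |G_i(v)| \ge 50\sqrt{n}\log n$, so $s$ is $R$-cycle-rich.

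The main obstacle, and the reason this lemma differs in flavor from \cref{lem:SubsetShrinks}, is to bridge from ``$s$ is $R$-cycle-rich'' to a contradiction with the $Z$-hypothesis. For that I would simply invoke the fact that $G_i(v) \subseteq Z$ by construction of $\similarsetspanner$ (the sets $G_i(v)$ in this section are defined restricted to $Z$), so $s \in Z$ is an $R$-cycle-rich vertex of $Z$, contradicting the assumption. No additional probabilistic argument is required here because we are working under a deterministic conditional event: the randomness was already absorbed into the assumptions about $(v,i)$-density and the absence of $R$-cycle-rich vertices in $Z$, both of which were established with high probability in the preceding lemmas.
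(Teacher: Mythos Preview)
Your proof is correct and follows the same overall strategy as the paper: build the auxiliary digraph $H$ on $G_i(v)$ with arcs $(u,s)$ whenever $d(u,s)\le 3R/2$, lower-bound $|E(H)|$ from the $(v,i)$-density of the large subset $G_{i+1}(v)$, and then derive a contradiction with the hypothesis that $Z$ contains no $R$-cycle-rich vertices. The only difference is in how the final contradiction is extracted. The paper gives each edge endpoint a credit of $1/2$ and, using that no vertex of $G_i(v)\subseteq Z$ is $R$-cycle-rich, upper-bounds the per-vertex credit by $5|G_i(v)|/8$, so the total edge count is below $5|G_i(v)|^2/8 < 0.71|G_i(v)|^2$, contradicting the lower bound. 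You instead count bidirectional pairs directly via $|E(H)|\le \binom{|G_i(v)|}{2}+P$ and pigeonhole on endpoints to exhibit a single vertex with at least $0.44|G_i(v)|\ge 50\sqrt{n}\log n$ roundtrip-$3R$ partners. Both arguments are short and of equal strength; yours is arguably a touch more direct, as it recycles the pair-counting step already used in \cref{lem:SubsetShrinks} rather than introducing a separate credit bound.
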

\begin{proof}
Assume, towards contradiction, that there exists a vertex $v$ and index $i$ such that
$|G_{i+1}(v)| \geq 0.8 |G_{i}(v)|$.

Similarly to the proof of Lemma \ref{lem:ManyEdges}, let $H'$
be  the directed graph $H'$ whose set of vertices is $G_{i}(v)$ and set of edges is the following.
For every vertex $u$ in $G_{i}(v)$ add an outgoing edge for every vertex $s$ such that $d(u,s) \leq 3R/2$.
And as was shown in Lemma \ref{lem:ManyEdges} the number of edges in $E(H')$ is at least $0.71|G_{i}(v)|^2$, that is, $|E(H')| \geq 0.71|G_{i}(v)|^2$.

We again give a credit of half for the endpoint of every edge in $H'$.
As there are no $R$-cycle-rich vertices we claim that the maximal credit a vertex $z$ can get is less than
$|G_{i}(v)|/2 + 50 \log{n} \sqrt{n}/2 \leq |G_{i}(v)|/2 + |G_{i}(v)|/8 = 5|G_{i}(v)|/8$.
To see this, note there are less than $50 \log{n} \sqrt{n}$ vertices in $G_{i}(v)$ for which
$z$ has both incoming and outgoing edge, so at most $50 \log{n} \sqrt{n}$ credit for these vertices.
For the rest of the vertices $y$ in $G_{i}(v)$ the maximum credit for $z$ is $1/2$ (as there could be at most one incident edge between $y$ and $z$ ).
Overall, we get that the total credit of $z$ is less than $|G_{i}(v)|/2 + 50 \log{n} \sqrt{n}/2 \leq 5|G_{i}(v)|/8$.

It follows that the total credit for all vertices, which is equal to the number of edges in $H'$, is less than
$5|G_{i}(v)|^2/8 < 0.71|G_{i}(v)|^2$, contradiction.
\end{proof}

The next lemma shows that the shortest path tree $T(v)$ for a vertex $v\in Z$ constructed in
Line \ref{line:GrowSmallBallSpanner} of the algorithm contains w.h.p. at most
$200 \sqrt{n} \log{n}$. This lemma will be used both to prove our running time and the total number of edges in the constructed spanner $H$.

\begin{lemma}
\label{lem:smallBallsSpanner}
With probability at least $1-1/n^8$ the shortest paths tree $T(v)$ contains at most
$200 \sqrt{n} \log{n}$ vertices for every $v \in V$.
\end{lemma}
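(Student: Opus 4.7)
The plan is to mimic the argument used to bound $|A_v|$ in \cref{sec:girth3}, but carried out relative to the set $Z$ instead of all of $V$, and to piece together the three preceding lemmas in order to conclude. Concretely, I will show that with probability at least $1-1/n^8$, simultaneously for every $v \in V$, the sequence $\{|G_i(v)|\}_{i=0}^{M}$ shrinks by a factor of at least $0.8$ whenever it is still above the threshold $200\sqrt{n}\log n$. Since $V(T(v)) \subseteq G_{M+1}(v)$ by the definition of $T(v)$ in \cref{line:GrowSmallBallSpanner} of \cref{alg:similar_vertices_spanner}, this will give the claimed bound.

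First I would collect the two high-probability events we need. By \cref{lem:noRichVertices}, with probability at least $1-1/n^9$ the set $Z$ contains no $R$-cycle-rich vertex. By \cref{lem:ManyEdges} applied to any fixed triple $(v,i,u)$ with $|G_i(v)| \ge 200\sqrt{n}\log n$, the probability that $u \in G_{i+1}(v)$ while $u$ fails to be $(v,i)$-dense is at most $2/n^{10}$. A union bound over the at most $n \cdot M \cdot n = O(n^2 \log n)$ such triples shows that, with probability at least $1 - O(\log n)/n^8 \ge 1 - 1/n^{8.5}$ (for $n$ large enough), every vertex appearing in some $G_{i+1}(v)$ is $(v,i)$-dense. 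Combining both events by a final union bound, both conditions hold simultaneously with probability at least $1 - 1/n^8$.

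Conditioning on these two events, \cref{lem:ManyEdgesSpanner} applies and gives $|G_{i+1}(v)| \le 0.8\,|G_i(v)|$ whenever $|G_i(v)| \ge 200\sqrt{n}\log n$. Since $|G_0(v)| \le |Z| \le n$ and the sets are monotone decreasing in $i$ (by inspection of the definition, $G_{i+1}(v) \subseteq G_i(v)$), iterating this contraction shows that there exists an index $i^* \le \log_{5/4}(\sqrt{n}/(200\log n)) \le 3 \log n \le M = 50\log n$ at which $|G_{i^*}(v)| < 200\sqrt{n}\log n$, and then $|G_{M+1}(v)| \le |G_{i^*}(v)| < 200\sqrt{n}\log n$. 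Since the vertices kept in $T(v)$ are exactly the vertices $s \in Z$ with $d(v,s) \le R/2$ and $d(s,t) \le 3R/2$ for every $t \in \bigcup_{0 \le j \le M} R_j(v)$, i.e.\ exactly the elements of $G_{M+1}(v)$, we conclude $|V(T(v))| \le |G_{M+1}(v)| < 200\sqrt{n}\log n$, simultaneously for every $v$, with probability at least $1-1/n^8$.

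The only subtlety I anticipate is ensuring that the union bound is tight enough: \cref{lem:ManyEdges} must be applied across $n^2 \log n$ triples rather than merely $n\log n$ indices, so the per-triple bound $2/n^{10}$ from that lemma is what lets the overall failure probability survive at $1/n^8$. Everything else is a direct chaining of \cref{lem:noRichVertices}, \cref{lem:ManyEdges}, and \cref{lem:ManyEdgesSpanner}, together with the monotonicity $G_{i+1}(v) \subseteq G_i(v)$ and the observation that $V(T(v)) = G_{M+1}(v)$.
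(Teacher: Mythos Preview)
Your proof is essentially the same as the paper's: invoke \cref{lem:noRichVertices} and a union bound over \cref{lem:ManyEdges} to guarantee that every vertex of $G_{i+1}(v)$ is $(v,i)$-dense, then apply \cref{lem:ManyEdgesSpanner} to force geometric decay of $|G_i(v)|$ until it drops below $200\sqrt{n}\log n$, and finish via $V(T(v)) \subseteq G_{M+1}(v)$.

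Two minor slips to flag. First, your inequality $O(\log n)/n^8 \le 1/n^{8.5}$ is backwards: $100\log n/n^8$ exceeds $1/n^8$, so the union bound as written does not quite deliver the stated $1-1/n^8$. The paper's own proof commits the same sloppiness (it writes $2/n^9$ where the honest union over all $v,i,u$ gives $\Theta(\log n)/n^8$), so this is a shared defect easily repaired by bumping the sampling constants. Second, $V(T(v)) \subseteq G_{M+1}(v)$ rather than equality, since the tree growth in line~\ref{line:GrowSmallBallSpanner} additionally imposes reachability from $v$ through qualifying vertices; only the inclusion is needed, so this does not affect the argument.
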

\begin{proof}

By Lemma \ref{lem:noRichVertices} with probability at least $1/1-n^9$, $Z$ does not contain
$R$-cycle-rich vertices.

By union bound on all vertices $v\in V$
on Lemma ~\ref{lem:ManyEdges} (that also applies here),
with probability at least $1-2/n^{9}$,
all vertices in $G_{i+1}(v)$  are $(v,i)$-dense.

That is, with probability at least $1-(1/n^9 +2/n^9) > 1-1/n^8$ both
1. $Z$ does not contain $R$-cycle-rich vertices and
2. all vertices $u\in G_{i+1}(v)$ are $(v,i)$-dense for every $v \in Z$ and $i \in [M]$.

Assume this indeed the case.

By Lemma \ref{lem:ManyEdgesSpanner} for every vertex $v$, index  $i \in [M]$ such that $|G_{i}(v)| \geq 200 \sqrt{n} \log{n}$
we have $|G_{i+1}(v)| \leq 0.8 |G_{i}(v)|$.

Consider such a vertex $v \in Z$.
As $G_{i+1}(v) \subseteq G_{i}(v)$ and $|G_{i+1}(v)| \leq 0.8|G_{i}(v)|$ for every $i \in [M]$
then this implies that there exists an index $M'$ in $[M]$ such that
$|G_{M'}(v)| < 200 \sqrt{n} \log{n}$.
As the shortest path tree $T(v)$ contains only vertices from $G_{M'}(v)$ the lemma follows.
\end{proof}

\begin{lemma}
\label{lem:RunningTimeSpanner}
The expected running time of Algorithm $\spannerapprox$ is $O(m\sqrt{n}\log{n}+n\sqrt{n}\log^3{n}) = \tilde{O}(m \sqrt{n})$.
\end{lemma}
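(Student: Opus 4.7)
The plan is to follow the structure of the running time analysis for $\girthapprox$ (\cref{lem:totaltime}), adapting each piece to the spanner setting. First I would bound the cost of the initial Dijkstra computations. The set $U$ in $\similarsetspanner$ has expected size $\tilde{O}(\sqrt{n})$, so running Dijkstra to/from each vertex in $U$ takes expected time $\tilde{O}(\sqrt{n}) \cdot O(m + n \log n) = \tilde{O}(m\sqrt{n})$. Computing the set $Z$ only requires reading the distances already computed, so this fits into the same bound. The sets $S_0, \dots, S_M$ each have expected size $O(\sqrt{n})$, and there are $M = O(\log n)$ of them, so running Dijkstra to/from each vertex in $\bigcup_i S_i$ contributes another $\tilde{O}(m\sqrt{n})$.

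Next I would bound the cost of computing the sets $T_i(v)$ and $R_i(v)$. For each fixed $v \in Z$ and $i \in [M]$, building $T_i(v)$ requires iterating over the $O(\sqrt{n})$ expected vertices of $S_i$ and, for each, checking $d(s,t) \le 3R/2$ against the at most $O(\log^2 n)$ vertices in $\bigcup_{j < i} R_j(v)$; each such check is an $O(1)$ distance lookup since we already ran Dijkstra from all elements of $\bigcup_i S_i$. Thus computing all $T_i(v)$ for a single $v$ costs $\tilde{O}(\sqrt{n})$ in expectation, and summing over all $v$ and indices $i$ gives an overall expected $\tilde{O}(n^{3/2})$, which is absorbed in $\tilde{O}(m\sqrt{n})$.

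The key step, and the main obstacle, is bounding the cost of growing the shortest path trees $T(v)$ in Line~\ref{line:GrowSmallBallSpanner}, which in turn bounds the cost of the final Dijkstra calls on the induced graph $A_v^{\outset} \cup A_v^{\inset}$ in $\spannerapprox$. Here I would invoke \cref{lem:smallBallsSpanner}, which guarantees that with probability at least $1 - 1/n^8$, each tree $T(v)$ contains at most $200 \sqrt{n}\log n = \tilde{O}(\sqrt{n})$ vertices. Using the assumption (via \cref{lemma:regular}) that $G$ has maximum degree $O(m/n)$, the induced subgraph on any $A_v$ has $\tilde{O}(\sqrt{n} \cdot m/n) = \tilde{O}(m/\sqrt{n})$ edges. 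Therefore, growing $T(v)$ and checking $(v,M)$-density of each candidate takes $\tilde{O}(m/\sqrt{n})$ time per vertex $v$, for a total of $\tilde{O}(m\sqrt{n})$ across all $v$. Exactly the same bound applies to the final per-$v$ Dijkstra call in $\spannerapprox$ on $A_v^{\outset} \cup A_v^{\inset}$, which has size $\tilde{O}(\sqrt{n})$ and hence $\tilde{O}(m/\sqrt{n})$ edges.

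Finally, I would handle the low-probability failure event from \cref{lem:smallBallsSpanner}: with probability at most $1/n^8$ some $T(v)$ is large, in which case we naively bound the running time by $O(mn)$ (growing full shortest path trees and Dijkstra on $G$ for every $v$). This contributes only $O(1/n^8 \cdot mn) = o(1)$ to the expectation and is therefore asymptotically negligible. Adding the high-probability $\tilde{O}(m\sqrt{n})$ contributions from the Dijkstra computations, the set constructions, the ball growing, and the final induced-graph searches proves the claimed expected running time of $O(m\sqrt{n}\log n + n\sqrt{n}\log^3 n) = \tilde{O}(m\sqrt{n})$.
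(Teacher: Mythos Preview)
Your proposal is correct and follows essentially the same approach as the paper's proof: both analyze the Dijkstra computations from $U$ and $\bigcup_i S_i$, defer the $T_i(v)$ cost to the analysis of \cref{lem:totaltime}, invoke \cref{lem:smallBallsSpanner} together with the $O(m/n)$ degree bound to handle the trees $T(v)$, and absorb the $1/n^8$-probability failure into an $O(mn)$ worst case. If anything, you are slightly more thorough, since you also explicitly account for the per-$v$ Dijkstra on $A_v^{\outset}\cup A_v^{\inset}$ in \spannerapprox, which the paper's proof leaves implicit.
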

\begin{proof}

Procedure $\spannerapprox$ invokes twice Procedure $\similarsetspanner$ on the graph $G$ and on the reversed graph of $G$.

Consider one call for Procedure $\spannerapprox$.
The expected size of the set $U$ is $O(\log{n}\sqrt{n})$.
For each vertex $v$ in $U$ the algorithm computes Dijkstra in $O(m+n\log{n})$ time.
Therefore the expected time for computing Dijkstra from all vertices in $U$ is $O(m\sqrt{n}\log{n}+n\sqrt{n}\log^2{n})$.

As in the analysis of Lemma \ref{lem:totaltime} computing
Dijkstra from all vertices in $\cup_{i\in M}{S_i}$ takes additional
$O(m\sqrt{n}\log{n}+n\sqrt{n}\log^2{n})$ time.

By Lemma \ref{lem:smallBallsSpanner} with probability at least $1-1/n^8$ the shortest paths tree $T(v)$ contains at most
$200 \sqrt{n} \log{n}$ vertices for every $v \in Z$.
Hence, with probability $1-1/n^8$ computing all trees take
$O(n \sqrt{n}\delta\log{n}) = O(m \sqrt{n} \log{n})$.

With probability at most $1/n^8$ there exists a tree $T(v)$ that contains more
$200 \sqrt{n} \log{n}$, in any case the total computation in this case is bounded by $O(mn)$.
Since this happens with small probability it does effect the asymptotic bound of the expected running time.

The lemma follows.
\end{proof}

Finally, we conclude that the number of edges in the constructed spanner $H$ is
$\tilde{O}(n^{3/2})$

\begin{lemma}
\label{lem:NumberOfEdgesSpanner}
The expected number of edges in the constructed spanner $H$ is $O(n^{3/2}\log{n} )$.
\end{lemma}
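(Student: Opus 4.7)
The plan is to decompose the edges added to $H$ into three groups and bound each separately, using Lemma \ref{lem:smallBallsSpanner} as the main tool.

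First, I would analyze the edges added via the two calls to $\similarsetspanner$ (the sets $H^\outset$ and reversed edges of $H^\inset$). In each invocation, the only edges placed into $H'$ come from the shortest path trees (in and out) rooted at vertices of $U$. Since $\E[|U|] = O(\sqrt{n} \log n)$ by the sampling probability $p' = 100 \log n / \sqrt{n}$, and each shortest path tree has at most $n-1$ edges, the expected contribution from each invocation is $O(n \cdot \sqrt{n}\log n) = O(n^{3/2}\log n)$. Doubling accounts for the two invocations (on $G$ and $\Grev$).

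Next, I would analyze the edges added in Line~\ref{line:search_union_spanner} of $\spannerapprox$, namely the shortest path tree from each $v \in V(G)$ inside the graph induced by $A_v^\outset \cup A_v^\inset$. Such a tree contains at most $|A_v^\outset| + |A_v^\inset| - 1$ edges. By Lemma \ref{lem:smallBallsSpanner}, with probability at least $1 - 1/n^8$, every set $A_v^\outset$ produced by the invocation on $G$ satisfies $|A_v^\outset| \le 200 \sqrt{n} \log n$, and symmetrically for $A_v^\inset$ from the invocation on $\Grev$. On this high-probability event, the total edge contribution of this step is at most
\[
\sum_{v \in V} \bigl(|A_v^\outset| + |A_v^\inset|\bigr) \le n \cdot 400 \sqrt{n} \log n = O(n^{3/2} \log n).
\]

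Finally, I would handle the low-probability event where some $A_v$ is larger than $200 \sqrt{n} \log n$. This occurs with probability at most $2/n^8$ (one failure probability per invocation of $\similarsetspanner$, by Lemma \ref{lem:smallBallsSpanner} plus a union bound). In this case, we bound the contribution crudely by $O(n^2)$ edges total, contributing at most $O(n^2) \cdot 2/n^8 = O(1/n^6)$ to the expectation. Combining the three contributions yields the claimed expected bound of $O(n^{3/2}\log n)$. I do not expect any particularly difficult step here — the heavy lifting has already been done in Lemma \ref{lem:smallBallsSpanner}, and the remainder is just linearity of expectation plus a routine bound on the rare failure event.
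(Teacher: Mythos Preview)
Your proposal is correct and follows essentially the same approach as the paper's own proof: split the edges into those coming from the shortest-path trees at the sampled set $U$ (bounded via $\E[|U|]=O(\sqrt{n}\log n)$ times $O(n)$ edges per tree) and those coming from the per-vertex trees inside $A_v^\outset\cup A_v^\inset$ (bounded via Lemma~\ref{lem:smallBallsSpanner}), with the low-probability failure handled by a crude bound. Minor cosmetic differences are that the paper uses $m$ rather than $O(n^2)$ as the crude fallback bound and does not separately union-bound over the two invocations, but the argument is otherwise the same.
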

\begin{proof}

First note that Procedure $\spannerapprox$ adds the set of edges $H^\outset$ and $H^\inset$
returned from the calls to Procedure $\similarsetspanner$ to the spanner $H$.

We show that $H^\outset$ (similarly $H^\inset$) contains in expectation $O(n^{3/2}\log{n} )$ number of edges.
The expected size of the set $U$ is $O(\sqrt{n}\log{n} )$ for each vertex $v$ in
$U$ the algorithm adds $O(n)$ edges (for the shortest path from $v$ and to $v$).
Thus, $O(n^{3/2}\log{n} )$ edges are added to $H^\outset$ for all vertices in $U$.

Procedure $\spannerapprox$ adds for every vertex $v \in Z$ a shortest paths tree in the induced graph of $A_v^{\outset} \cup A_v^{\inset}$.
By Lemma \ref{lem:smallBallsSpanner} with probability at least $1-1/n^8$ the set $A_v^{\outset} \cup A_v^{\inset}$ contains at most
$O(\sqrt{n} \log{n})$ vertices for every $v \in V$.
Hence, with probability at least $1-1/n^8$
the number of edges in each tree $T(v)$ is at most
$O(\sqrt{n} \log{n})$ for every $v \in Z$.
Therefore, at most $O(n^{3/2} \log{n})$ number of edges added to $H$ in total.

With probability at most $1/n^8$ there is a vertex $v$ such that
$A_v^{\outset} \cup A_v^{\inset}$ contains more than $O(\sqrt{n} \log{n})$ vertices.
In any case the number of edges added to $H$ in this case is bounded by $m$ and since this happens with very small probability it does not affect the asymptotic bound of the expected size of the spanner.
The lemma follows.
\end{proof}

We conclude this section with a proof of  \cref{thm:const_spanner}.

\begin{proof}[Proof of  \cref{thm:const_spanner}]
The algorithm invokes procedure $\spannerapprox(G,R)$ for every $R=(1+\epsilon)^i$ and takes as a spanner the union of the returned spanners from all these calls.

Now, the theorem easily follows from Lemmas \ref{lem:correctnessSpanner}, \ref{lem:RunningTimeSpanner} and \ref{lem:NumberOfEdgesSpanner}.
 \end{proof}
\end{appendix}

\end{document}